\algrenewcommand\algorithmicrequire{\textbf{Input:}}
\algrenewcommand\algorithmicensure{\textbf{Output:}}
\DeclareFontFamily{U}{mathx}{}
\DeclareFontShape{U}{mathx}{m}{n}{<-> mathx10}{}
\DeclareSymbolFont{mathx}{U}{mathx}{m}{n} 
\DeclareMathAccent{\widecheck}{0}{mathx}{"71}
\newcommand{\ind}{\mathbbm{1}}
\newcommand{\bbE}{\mathbb{E}}
\newcommand{\bbR}{\mathbb{R}}
\newcommand{\bbN}{\mathbb{N}}
\newcommand{\cX}{\mathcal{X}}
\newcommand{\cN}{\mathcal{N}}
\newcommand{\cR}{\mathcal{R}}
\newcommand{\cB}{\mathcal{B}}
\newcommand{\cJ}{\mathcal{J}}
\newcommand{\cI}{\mathcal{I}}
\newcommand{\cY}{\mathcal{Y}}
\newcommand{\sP}{\mathsf{P}}
\newcommand{\sR}{\mathsf{R}}
\newcommand{\Var}{\mathrm{Var}}
\newcommand{\ud}{\mathrm{d}}
\newcommand{\mt}{\mathrm{mt}}
\newcommand{\Gap}{\mathrm{Gap}}
\newcommand{\cont}{\mathrm{ct}} 
\newcommand{\dirac}{\mathrm{dir}}
\newcommand{\GIT}{\mathrm{IT}}
\newcommand{\mh}{\mathrm{mh}}
\newcommand{\tsP}{\tilde{\mathsf{P}}}
\newcommand{\fC}{c}
\newcommand{\bbP}{\mathbb{P}}
\newcommand{\match}{\mathrm{match}}
\newcommand{\cS}{\mathcal{S}}
\newcommand{\ba}{\mathbf{a}}
\newcommand{\bb}{\mathbf{b}}
\newcommand{\bA}{\mathbf{A}}
\newcommand{\bB}{\mathbf{B}}
\newcommand{\tpi}{\tilde{\pi}}
\newcommand{\hpi}{\hat{\pi}}
\newcommand{\halpha}{\hat{\alpha}}
\newcommand{\tP}{\tilde{\sf{P}}}
\newcommand{\bM}{\bm{M}}
\newcommand{\be}{\bm{e}}
\newcommand{\bL}{\bm{L}}
\newcommand{\bY}{\bm{Y}}
\newcommand{\bsbeta}{\bm{\beta}}
\newcommand{\bsgamma}{\bm{\gamma}}
\def\pseu{\mathrm{pm}}
\def\SUPP{Appendices}
\DeclarePairedDelimiter{\norm}{\lVert}{\rVert} 
\def\normc{\mathfrak{C}}
\newtheorem{lemma}{Lemma}
\newtheorem{corollary}{Corollary}
\newtheorem{proposition}{Proposition}
\newtheorem{theorem}{Theorem}
\theoremstyle{definition}
\newtheorem{example}{Example}
\newtheorem{assumption}{Assumption}
\newtheorem{remark}{Remark}
\newtheorem{definition}{Definition}
\title{Importance is Important: Generalized Markov \\ Chain Importance Sampling Methods} 
\author{Guanxun Li$^1$, Aaron Smith$^2$ and Quan Zhou$^{1,}$\footnote{Corresponding author: quan@stat.tamu.edu} \smallskip \\
$^1$ Department of Statistics, Texas A\&M University  \\  
$^2$Department of Mathematics and Statistics, University of Ottawa}
\date{} 
\begin{document}
   
\maketitle

\begin{abstract}
We show that for any multiple-try Metropolis algorithm, one can always accept the proposal and evaluate the importance weight that is needed to correct for the bias without extra computational cost. 
This results in a general, convenient, and rejection-free Markov chain Monte Carlo (MCMC) sampling scheme.  
By further leveraging the importance sampling perspective on Metropolis--Hastings algorithms, 
we propose an alternative MCMC sampler on discrete spaces that is also outside the Metropolis--Hastings framework, 
along with a general theory on its complexity. 
Numerical examples suggest that the proposed algorithms are consistently more efficient than the original Metropolis--Hastings versions.    
\end{abstract}

\noindent%
{\it Keywords:}  Importance sampling; Informed proposal;  
Metropolis--Hastings algorithm; 
Multiple-try Metropolis Algorithm;  
Rejection-free MCMC; 
Variable selection. 

\section{Introduction}\label{sec:intro}   

\subsection{Markov Chain Importance Sampling} \label{sec:mcis} 
Given a probability density function $\pi$  and a proposal scheme on the same state space,  a standard approach to approximating integrals with respect to $\pi$ is to run a Metropolis--Hastings (M--H) algorithm, which proposes the next state using the given scheme and employs an acceptance-rejection step to ensure that the resulting Markov chain is reversible with respect to $\pi$.  
Most existing Markov chain Monte Carlo (MCMC) algorithms can be formulated under this M--H framework, possibly on augmented spaces.  
The primary purpose of this work is to introduce  general and simple techniques for devising  efficient MCMC algorithms  beyond the M--H framework, which never get stuck at a single state indefinitely.  

At a high level, the proposed algorithms generalize the standard importance sampling method. 
To approximate integrals with respect to $\pi$, we simulate a Markov chain $(x^{(i)})_{i=1}^T$ whose  limiting distribution has density $\tpi$, and for each $x^{(i)}$, we either compute or unbiasedly estimate $\pi (x^{(i)}) / \tpi(x^{(i)})$ up to a normalizing constant. 
Denote this estimated importance weight of $x^{(i)}$ by $w^{(i)}$. 
Then, given any integrable function $f$ of interest, we can estimate $\pi(f) = \int f \, \ud \pi$ using the self-normalized importance sampling estimator  
\begin{equation}\label{eq:iit.estimator} 
 \hat{f}_T  =  \frac{ \sum_{i=1}^T  f(x^{(i)}) w^{(i)} }{ \sum_{i=1}^T  w^{(i)} }. 
\end{equation}     

For all algorithms we propose in this work, $(x^{(i)}, w^{(i)})$ can be thought of as generated from the following process. 
Introduce an auxiliary variable $v$. 
Let $\pi(x, v), \tpi(x, v)$ be two joint distributions that have $\pi(x), \tpi(x)$ as the marginal, respectively. 
We simulate a bivariate Markov chain $(x^{(i)}, v^{(i)})$ with limiting distribution $\tpi(x, v)$ and calculate $w^{(i)}$ by  $ w^{(i)} =  w( x^{(i)}, v^{(i)} )$, where 
\begin{equation}\label{eq:w}
    w(x, v) = C \, \frac{\pi(x, v)}{\tpi(x, v)}, 
\end{equation} 
with $C$ being a fixed but typically unknown constant. 
Hence, $w^{(i)}$ can be viewed as either the exact (un-normalized) importance weight for $(x^{(i)}, v^{(i)})$ 
or a randomized importance weight estimate for $x^{(i)}$; it is unbiased since 
$\int w(x, v) \, \tpi(v \mid x) \ud v = C \pi(x) / \tpi(x)$. 
Under certain regularity conditions on the  Markov chain $(x^{(i)}, v^{(i)})$ and function $f$, the estimator $\hat{f}_T$  converges in probability to $\pi(f)$ and satisfies a central limit theorem; see, e.g.,~\citet[Prop. 5]{deligiannidis2018ergodic}.   
We call such a generalized Markov chain importance sampling scheme ``importance tempering,'' a term coined by~\citet{gramacy2010importance} and also used in~\citet{zanella2019scalable}.

The standard M--H algorithm fits within this framework as well: 
one can let $(x^{(i)})_{i=1}^T$ be the accepted moves and $(w^{(i)})_{i=1}^T$ be the sojourn times. 
So M--H algorithms essentially use acceptance-rejection to estimate the importance weight, 
an interpretation  well known in the MCMC literature~\citep{douc2011vanilla, iliopoulos2013variance, rosenthal2021jump}.  
All the algorithms proposed in this paper essentially replace this acceptance-rejection scheme with a more efficient importance weight estimator. 
 
\subsection{Paper Overview} 
The first algorithm we propose is a   rejection-free modification of the multiple-try Metropolis (MTM) algorithm~\citep{liu2000multiple}; see Section~\ref{sec:mtm}. 
We show that if one simply accepts all proposals and leaves the other steps unchanged, 
the dynamics of the resulting algorithm can be described using the above framework, where the auxiliary variable $v^{(i)}$ is a random set of candidate proposals from which $x^{(i+1)}$ is selected. 
The surprising observation, which makes the proposed algorithm feasible, is that $w^{(i)}$ can be calculated at no additional cost.  
This new algorithm is as flexible as M--H schemes and offers the highly desirable feature of being rejection-free (i.e., $x^{(i+1)} \neq x^{(i)}$ almost surely). 
It generalizes the informed importance tempering (IIT) algorithm introduced in~\citet{zhou2022rapid}, which was itself a generalization of an MCMC sampler for variable selection developed by~\citet{zanella2019scalable} and the rejection-free Metropolis algorithm proposed by~\citet{rosenthal2021jump}. 
We demonstrate the efficiency of our algorithm through simulation studies on both simulated and real data sets. 

Besides the multiple-try proposal,  other MCMC techniques can also be easily integrated with the importance tempering technique. 
We illustrate this in Section~\ref{sec:pseudo.iit} by considering a pseudo-marginal version of IIT, where $\pi(x)$ is replaced by an unbiased estimator of it. 
The auxiliary variable $v$ in this case is the vector of the stationary probability estimates for $x$ and its neighbors.  
We demonstrate the use of our pseudo-marginal algorithm via a numerical example that has been solved using approximate Bayesian computation~\citep{lee2014variance}.  

In Section~\ref{sec:method}, we propose a generalization of M--H schemes on discrete spaces  using a simple technique that effectively combines the advantages of IIT and M--H schemes. 
In most regions of the state space, this algorithm behaves just like the M--H algorithm, but it can escape  local modes by exactly calculating their importance weights. 
Unlike the rejection-free multiple-try algorithm, this algorithm has a simpler dynamics where both $w^{(i)}$ and $x^{(i+1)}$ only depend on $x^{(i)}$. 
Consequently, its behavior is easier to analyze, and we theoretically analyze its complexity in Section~\ref{sec:complexity.two}. A more general theory is developed in Section~\ref{sec:theory} of the \SUPP{}. 

Section~\ref{sec:disc} concludes the paper with further discussion on the comparison between the proposed algorithms and existing ones with similar motivations. Results and more details about simulation studies and all proofs are deferred to the Appendices. 

\subsection{User's Guide} \label{sec:guide}
We describe two typical scenarios for applying the methods studied in this paper. 
First, if one wants to utilize informed proposals which require evaluating $\pi$ of all or some neighboring states (e.g. when one has access to parallel computing), then one can use MT-IT (Algorithm~\ref{alg:mtm-iit}) on  general state spaces and IIT (Algorithm~\ref{alg:iit}) or RN-IIT (Algorithm~\ref{alg:rn-iit}) on discrete spaces. MT-IT is a rejection-free alternative to MTM, and IIT is a rejection-free alternative to the informed M--H algorithm of~\cite{zanella2020informed}. There is almost no downside to using our algorithms, since the rejection-free feature comes at no extra cost. When M--H schemes suffer from low acceptance rates, our samplers can probably offer substantial improvements.  
Indeed, making efficient use of informed proposals is the main motivation behind this work. 

Second, if an existing MCMC sampler defined on a discrete space tends to remain stuck at a single state, then one can use MH-IIT (Algorithm~\ref{alg:mh-iit}) to resolve this issue.  
This is particularly useful when $\pi$ has sharp local modes, in which case conventional MCMC algorithms, including Gibbs samplers, can stay a local mode for an arbitrarily long time.   

For complex sampling tasks, the proposed algorithms can be integrated with other MCMC techniques. As an example, we provide pseudo-marginal IIT (Algorithm~\ref{alg:pm-iit}), which is useful when $\pi(x)$ can only be unbiasedly estimated up to a normalizing constant. 
Other techniques like simulated tempering and non-reversible sampling  can also be used if necessary~\citep{rosenthal2021jump, power2019accelerated}, though they are not explored in this work.

\section{Rejection-free MCMC via Multiple-try Importance Tempering} \label{sec:mtm}

\subsection{Algorithm} \label{sec:mtm-algo}
Let $\pi$ be a probability density function defined on the space $\cX$ with respect to some dominating measure, which we assume can be evaluated up to a normalizing constant. 
For simplicity, we also refer to $\pi$ as a distribution. 
We recall the multiple correlated-try proposal introduced by~\citep{craiu2007acceleration}, which generalizes the multiple-try proposal of~\citet{liu2000multiple}. 
Fix an integer $m \geq 2$, and for each $x \in \cX$, let $q(y_1, y_2,  \dots, y_m \mid x)$ denote the density of a  distribution on $\cX^m$. 
Assume that  $q(\cdot \mid x)$ is exchangeable, but the $m$ coordinates may be correlated.  Denote the marginal distribution of $y_1$ by $q(y_1 \mid x)$, and denote the conditional distribution of $ y_2, \dots, y_{m}$ given $y_1 = y$ by $q(  y_2, \dots, y_{m} \mid x, y_1)$; 
due to the exchangeability, the marginal and conditional distributions of other coordinates are the same. 
We also assume that  $y_1, \dots, y_m$ generated from $q(\cdot \mid x)$ are  distinct and different from $x$ almost surely, and  $q(x \mid y) > 0$ whenever $q(y \mid x) > 0$. 
In many applications, one can simply use an i.i.d. scheme, where $q(y_1, y_2,  \dots, y_m \mid x)$ can be written as $\prod_{i=1}^m q(y_i \mid x)$.  
At state $x$, MTM samples $m$ candidates $y_1, \dots, y_m$ from $q(\cdot \mid x)$ and assigns to each $y_i$ a proposal weight $\alpha(x, y_i) > 0$, which needs to satisfy
\begin{equation}\label{eq:alpha}
   \pi(x) q(y \mid x) \alpha(x, y) =  \pi(y) q(x \mid y) \alpha(y, x). 
\end{equation}  
Let $\cS = \{y_1, \dots, y_m\}$ denote the unordered set of $m$ candidates. 
Given $\cS$, MTM selects $x'$ from $\cS$ with probability proportional to the proposal weight and performs an acceptance-rejection step to decide if the sampler should move to $x'$.  
Exactly calculating the proposal probability from $x$ to $x'$ is difficult, since that would require integrating over $\cS$. 
To circumvent this difficulty, \citet{liu2000multiple} proposed a clever solution that involves generating another set $\cS' = \{x, y'_2, \dots, y'_m\}$ with $y'_2, \dots, y'_m$ sampled from the conditional proposal distribution at $x'$ given that the first coordinate is $x$.  
Explicitly, we can assume $\cS = \{x', y_2, \dots, y_m\}$ without loss of generality, and the density of proposing the pair $(x', \cS')$ given $(x, \cS)$  can be expressed by  
\begin{align}
    p_{\mt}( (x, \cS), (x', \cS') ) =\;&  \frac{ \alpha(x, x')}{ Z(x, \cS)} \, (m - 1)! \, q (y'_2, \dots, y'_m \mid x',   x),  \label{eq:mt.transition} \\
    \text{ where } Z(x, \cS) =\;& \sum\nolimits_{y \in \cS} \alpha(x, y).  \label{eq:def.Zh}
\end{align}   
In~\eqref{eq:mt.transition}, $\alpha(x, x')/ Z(x, \cS)$ is the probability of selecting $x'$ from $\cS$, and $(m - 1)! \, q (y'_2, \dots, y'_m \mid x',  x)$ is the density of proposing the set $\cS'$ at $x'$ conditioned on that $x \in \cS'$. 
The acceptance probability is calculated by comparing $Z(x, \cS)$ and $Z(x', \cS')$. For completeness, we recall MTM in Algorithm~\ref{alg:mtm} in the \SUPP{}. MTM can be further generalized by viewing it as a hit-and-run scheme as described in~\citet{andersen2007hit}.
  
Now let $\sP_{\mt}$ be the bivariate Markov chain with transition density $p_{\mt}$ given in~\eqref{eq:mt.transition}. (If $x' \notin \cS$ or $x \notin \cS'$, then  $p_{\mt}( (x, \cS), (x', \cS') ) = 0$.) 
So $\sP_{\mt}$  can be simulated by following the usual MTM procedure but always accepting the proposed pair $(x', \cS')$.  
Using~\eqref{eq:alpha}, we can find the expression for the stationary distribution of $\sP_{\mt}$ up to a normalizing constant.  

\begin{lemma}\label{lm:mtit} 
The transition kernel $\sP_{\mt}$  is reversible with respect to the joint distribution   
\begin{equation} 
    \pi_{\mt}(x, \cS) \propto \pi(x) Z(x, \cS)   q(y_1, \dots, y_m \mid x). 
\end{equation}  
where $\cS = \{y_1, \dots, y_m\}$ and $Z$ is defined in~\eqref{eq:def.Zh}. 
\end{lemma}

\begin{proof}
See Section~\ref{app:proof} in the \SUPP{}. 
\end{proof}

Imagine that our target is the joint distribution $\pi(x, \cS) = \pi(x)  q(y_1, \dots, y_m \mid x)$, which clearly has $\pi(x)$ as the marginal. 
By Lemma~\ref{lm:mtit}, if we generate samples from $\sP_{\mt}$, the importance weight of $(x, \cS)$ is  given by  $\pi(x, \cS) / \pi_{\mt}(x, \cS) \propto 1 / Z(x, \cS)$. 
Hence, according to the discussion in Section~\ref{sec:mcis},  by simulating $\sP_{\mt}$ and calculating $1 / Z(x, \cS)$, we get a rejection-free MCMC sampler, which we call multiple-try importance tempering (MT-IT) and summarize in Algorithm~\ref{alg:mtm-iit}.  
Compared to MTM, there is no extra computational cost, so the rejection-free feature is obtained free. 
In the simplest case with $m=2$, its cost per iteration is twice that of a vanilla M--H algorithm that uses $q(y \mid x)$ as the proposal. 

There are many choices of $\alpha$ that satisfy~\eqref{eq:alpha}. We assume henceforth that  
\begin{equation}\label{eq:def.local.balance.w}
 \alpha(x, y)=  h\left(  \frac{\pi(y) q(x \mid y) }{\pi(x) q(y \mid x)} \right), 
\end{equation}
for some $h \colon (0, \infty) \rightarrow (0, \infty)$ such that $h(r) = r h(r^{-1})$ for any $r > 0$. We say $h$ is a balancing function, and $\alpha$ is a  locally balanced weighting scheme~\citep{zanella2020informed}.  
\citet{changrapidly} and \citet{gagnon2022improving} showed that this class of weighting functions tends  to yield much faster convergence of MTM than other options such as $\alpha(x, y) = \pi(y) q(x \mid y)$, which has been widely used in earlier implementations of MTM~\citep{yang2021convergence}.    
Examples of balancing functions include $h(r) = 1 + r,  h(r) = 1 \wedge r,  h(r) = 1 \vee r.$

\begin{algorithm}\setstretch{1.2}  
\caption{Multiple-try importance tempering (MT-IT).}\label{alg:mtm-iit}
\begin{algorithmic} 
\Require  Proposal $q$,  weighting function $\alpha$,  neighborhood size $m \geq 2$, number of iterations $T \geq 1$,  initial state $(x^{(0)}, \cS^{(0)})$. 
\For{$i = 0, \dots, T$}
\State Calculate $\alpha(x^{(i)}, y)$  for every $y \in \cS^{(i)}$.
\State Calculate $Z(x^{(i)}, \cS^{(i)}) = \sum_{y \in \cS^{(i)}} \alpha(x^{(i)}, y)$. 
\State Draw $x' \in \cS^{(i)}$ with probability $ \alpha(x^{(i)}, x')/Z(x^{(i)}, \cS^{(i)})$.
\State Draw $(y'_2, \dots, y'_{m})$ from the conditional distribution $q( \cdot \mid x', y_1 = x^{(i)})$.
\State Set $w^{(i)} \leftarrow 1/   Z(x^{(i)}, \cS^{(i)}), \; x^{(i+1)} \leftarrow x', \; \cS^{(i+1)} \leftarrow  \{x^{(i)}, y'_2, \dots, y'_{m}\}$.
\EndFor 
\Ensure  Samples $x^{(1)}, \dots, x^{(T)}$, un-normalized importance weights $w^{(1)}, \dots, w^{(T)}$.
\end{algorithmic}
\end{algorithm}

\subsection{Numerical Example for Multivariate Normal Distributions}\label{sec:mtm-sim}

We first study a multivariate normal example investigated in Section 3 of~\citet{gagnon2022improving}. 
The target distribution $\pi$ is chosen to be the standard $p$-variate normal distribution, $N(0, \bm{I}_p)$, with $p=50$. The proposal distribution $q(\cdot \mid x)$ is given by $N(x, \sigma^2 \bm{I}_p)$, and MCMC samplers are initialized at $x^{(0)} = (10, \dots, 10)$. 
Let $\alpha$ be a locally balanced weighting function as given in~\eqref{eq:def.local.balance.w}. We  consider two balancing functions, the square-root weighting $h(r) = \sqrt{r}$ and Barker's weighting $h(r) = r /(1 + r)$~\citep{barker1965monte, livingstone2022barker}, both recommended by~\citet{gagnon2022improving}. For each choice of $h$, we implement three methods, Algorithm~\ref{alg:mtm-iit} with fixed $\sigma$, MTM with fixed $\sigma$, and an adaptive version of MTM implemented in~\citet{gagnon2022improving}  where $\sigma$ is dynamically tuned.   
For all three methods, we fix the number of tries (i.e., parameter $m$ in Algorithm~\ref{alg:mtm-iit}) equal to $50$, and for the two non-adaptive methods, we fix $\sigma = \sqrt{2.7/p^{0.75}}$, the initial value used in the adaptive method of~\citet{gagnon2022improving}.  
We run all samplers until $5 \times 10^5$ evaluations of $\pi$ have been made, and use the last 50\% of the samples to estimate $M_{2,p} = \int \norm{x}_2^2  \, \pi (\ud x)$, of which the true value is equal to $p$; see Figure~\ref{fig:mtm.normal}. Additional simulation results are provided in Section~\ref{app:mt-it-normal} of \SUPP{}. It was reported in~\citet{gagnon2022improving} that locally balanced MTM schemes have superior performance to   other MTM schemes. But our simulation suggests that  MT-IT performs even better in that it moves to high-density regions faster and yields more accurate estimates.    

\begin{figure}
    \centering
    \includegraphics[width=0.75\linewidth]{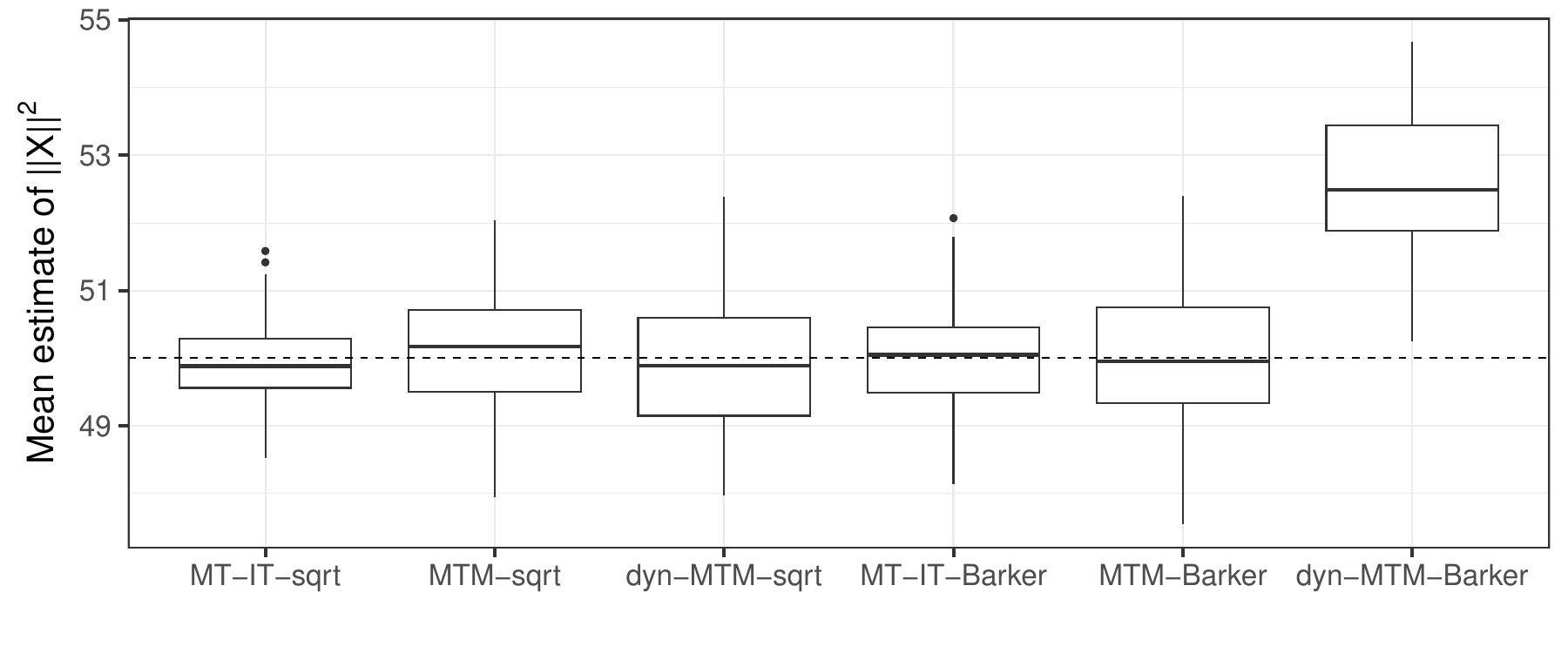}
    \caption{Box plot of the estimate of $M_{2, p}$ over 100 runs using  half of the samples. }
    \label{fig:mtm.normal}
\end{figure}
 
\subsection{Application of MT-IT to Record Linkage}\label{sec:record}

We consider the application of the MT-IT algorithm to a bipartite record linkage problem studied in~\citet{zanella2020informed}.  The data is available in the \texttt{R} package \texttt{italy}.  
The goal is to generate samples from the  joint posterior distribution  $\pi(\bM, p_{\mathrm{match}}, \lambda)$ considered in~\citet{zanella2020informed},  where $p_{\mathrm{match}}, \lambda$ are two continuous hyperparameters and $\bM$ is a discrete parameter of interest. Specifically, $\bM = (M_1, \dots, M_{n_1})$ takes values in  $\{0, 1, \dots, n_2\}^{n_1}$, where $n_1, n_2$ denote the sample sizes of two data sets. If $M_i = j > 0$, it means the $i$-th record in the first data set is matched with the $j$-th record in the second. 
A standard approach  is to use the Metropolis-within-Gibbs sampler that alternates between direct sampling from $\pi( p_{\mathrm{match}}, \lambda \mid \bM)$  and  an M--H scheme targeting $\pi( \bM \mid p_{\mathrm{match}}, \lambda )$.  
We devise an MT-IT scheme using a mixture proposal distribution: given current state $(\bM, p_{\mathrm{match}}, \lambda)$, with probability $\delta$ we propose a new value for $(p_{\mathrm{match}}, \lambda)$ from its conditional distribution given $\bM$, and with probability $1 - \delta$ we propose a new value for $\bM$ by uniformly sampling its neighborhood. 
More details of the samplers and numerical results are given in Section~\ref{sec:real} of \SUPP{}. 
We find that, regardless of the sample size,  MT-IT always moves to high-posterior regions more quickly than M--H schemes.   

\subsection{MT-IT on Discrete Spaces} \label{sec:mtm-iit}
Algorithms such as MTM and MT-IT are especially useful on discrete spaces, since gradient-based samplers (e.g. Metropolis-adjusted Langevin algorithm) are not applicable, and typically one needs to evaluate the stationary probabilities of states in the current neighborhood to figure out in which direction $\pi$ is likely to increase. Below we describe two special cases  of Algorithm~\ref{alg:mtm-iit} on discrete spaces. We use $\cN_x$ to denote the set of neighboring states around $x$. 
Assume that  $y \in \cN_x$ whenever $x \in \cN_y$, and $x \notin \cN_x$,  $|\cN_x| \geq 2$ for each $x$. 
 
\begin{example}\label{ex:IIT} 
Assume that $|\cN_x| = N$ for all $x$, and for each $x$, let $q(y_1, \dots, y_m \mid x)$  simply be a random permutation of states in $\cN_x$. 
Then, $q(y \mid x) = q(x \mid y) = 1 / N$ for every $y \in \cN_x$, and $Z(x, \cS)$  only depends on $x$. 
In this case, MT-IT is reduced to the IIT algorithm described in~\citet{zhou2022rapid}, which we recall in Algorithm~\ref{alg:iit}. 
The un-normalized importance weight for state $x$ is $1/Z(x)$ where $Z(x) = \sum_{y \in \cN_x} \alpha(x, y)$. It is straightforward to show that Algorithm~\ref{alg:iit} is still correct if we drop the assumption that $|\cN_x|$ is a constant. 
\end{example}  

\begin{algorithm}\setstretch{1.2}  
\caption{Informed importance tempering (IIT).}\label{alg:iit}
\begin{algorithmic} 
\Require Balancing function $h$, number of iterations $T \geq 1$, initial state $x^{(0)} \in \cX$. 
\For{$i = 0, \dots, T$}
\State Calculate $\alpha(x^{(i)}, y) = h( \pi(y)/\pi(x^{(i)}) )$  for every $y \in \cN(x^{(i)})$.
\State Calculate $Z(x^{(i)}) = \sum_{y \in \cN(x^{(i)})} \alpha(x^{(i)}, y)$.
\State Draw $x' \in \cN(x^{(i)})$ with probability $ \alpha(x^{(i)}, x')/Z(x^{(i)})$.
\State Set $w^{(i)} \leftarrow 1/Z(x^{(i)}), \; x^{(i+1)} \leftarrow x'$.
\EndFor 
\Ensure  Samples $x^{(1)}, \dots, x^{(T)}$, un-normalized importance weights $w^{(1)}, \dots, w^{(T)}$.
\end{algorithmic}
\end{algorithm}

To our knowledge, the first IIT algorithm was the tempered Gibbs sampler proposed by~\citet{zanella2019scalable}, which can be seen as the application of IIT to Bayesian variable selection with $\alpha(x, y) = 1 + \pi(y)/\pi(x)$. 
\citet{rosenthal2021jump} proposed the rejection-free Metropolis algorithm, which can be seen as IIT with $\alpha(x, y) = \min\{1, \pi(y) / \pi(x)\}$ (see also Remark~\ref{rmk:other.mh.it}). 
Similar ideas can be traced back to kinetic Monte Carlo algorithms~\citep{bortz1975new, dall2001faster}, which are also known as  ``$n$-fold way'' and ``waiting time method.'' These methods simulate a continuous-time Markov chain whose stationary distribution coincides with $\pi$, and they can be recast as variants of IIT where the importance weight for $x$ is generated as an exponential random variable with mean $1/Z(x)$. 
More efficient, non-reversible versions have been proposed in~\citet{hamze2007large} and~\citet{power2019accelerated}. 
 
\begin{example}\label{ex:RN-IIT}
Let $m \leq \min_x |\cN_x|$. 
For each $x$, let $q(y_1, \dots, y_m \mid x)$ be  the distribution of a simple random sample of size $m$ drawn from $\cN_x$ without replacement, 
Hence, $q(y \mid x) =   m / |\cN_x|$ for every $y \in \cN_x$, and 
$q(y_2, \dots, y_m \mid x, y)$ is the distribution of a simple random sample of size $m - 1$ drawn from $\cN_x \setminus \{y\}$ without replacement. For a locally balanced weighting scheme, we can express the weight by 
\begin{equation}
    \alpha(x, y) = h \left( \frac{\pi(y) |\cN_x|}{\pi(x) |\cN_y|} \right).
\end{equation}
We refer to MT-IT in this case as the random neighborhood informed importance tempering (RN-IIT) sampler, which we summarize in Algorithm~\ref{alg:rn-iit} below. 

RN-IIT has the advantage of not requiring evaluation of $\pi$ over the entire neighborhood $\cN_x$, which can be time consuming if $\cN_x$ is large. 
In Section~\ref{sec:sim}, we consider a variable selection problem where the posterior distribution $\pi$ is highly multimodal, and RN-IIT achieves the best performance among all methods we have tried.   
\end{example}  

\begin{algorithm}\setstretch{1.2}  
\caption{Random neighborhood informed importance tempering (RN-IIT)}\label{alg:rn-iit}
\begin{algorithmic} 
\Require Balancing function $h$, neighborhood size $m\geq 2$, number of iterations $T\geq 1$, initial state $(x^{(0)}, \cS^{(0)}) $.
\For{$i = 0, \dots, T$}
\State Calculate $\alpha(x^{(i)}, y) = h \left( \frac{\pi(y) |\cN(x^{(i)})|}{\pi(x^{(i)}) |\cN_y|} \right)$ for every $y \in \cS^{(i)}$.
\State Calculate $Z(x^{(i)}, \cS^{(i)}) = \sum_{y\in\cS^{(i)}}\alpha(x^{(i)}, y)$.
\State Draw $x'\in \cS^{(i)}$ with probability $\alpha(x^{(i)}, x') / Z(x^{(i)}, \cS^{(i)})$.
\State Draw a simple random sample $(y_2', \dots, y_{m}')$  without replacement from $\cN_{x'} \setminus \{x\} $.
\State Set $w^{(i)} \leftarrow 1/  Z(x^{(i)}, \cS^{(i)})$, $x^{(i+1)}\leftarrow x'$, $\cS^{(i+1)}\leftarrow\{x^{(i)}, y_2', \cdots, y_{m}'\}$.
\EndFor 
\Ensure  Samples $x^{(1)}, \dots, x^{(T)}$, un-normalized importance weights $w^{(1)}, \dots, w^{(T)}$.
\end{algorithmic}
\end{algorithm}

\begin{remark}\label{rmk:gibbs}
RN-IIT is significantly different from Gibbs sampling, though both algorithms generate the next state $x'$ by drawing it from a set $\cS$ around the current state $x$. 
For RN-IIT, $\cS$ is a random subset of $\cN_x$ (and recall $x \notin \cN_x$).  
For Gibbs sampling, we require $\cX = \prod_k \cX_k$ is a product space and have $\cS = \{x\} \cup \cN_k(x)$ for some $k$, where $\cN_k(x)$ is the collection of all states that differ from $x$ only at the $k$-th coordinate.  
The key distinction is that a Gibbs sampler can still get stuck at a single state (e.g., the global mode of $\pi$),   
while RN-IIT always leaves the current state and assigns a large importance weight to a local mode.    
An algorithm similar to RN-IIT, but with $\cS = \cN_k(x)$, has been cleverly crafted in~\cite{chen2023sampling}. It is more complicated than RN-IIT due to the difficulty in ensuring the detailed balance condition. 
\end{remark}

\begin{remark}\label{rmk:m}
In this work we do not address the optimal selection of the parameter $m$ for MT-IT and RN-IIT.
This is a very difficult problem of great interest from both theoretical and practical perspectives.  
Similar questions have been investigated for pseudo-marginal MCMC methods~\citep{bornn2017use, sherlock2017pseudo}, and the theoretical tools developed in those works may be useful for understanding the effect of $m$ on the efficiency of MT-IT and RN-IIT.  
\end{remark}

\section{Pseudo-marginal Informed Importance Tempering} \label{sec:pseudo.iit} 
\subsection{Algorithm}\label{sec:pseudo-alg} 
Algorithms~\ref{alg:mtm-iit} and~\ref{alg:iit} can be further extended by utilizing other MCMC techniques. 
For example, for  doubly intractable target distributions, we can integrate importance tempering with pseudo-marginal methods~\citep{andrieu2009pseudo}.  Here,  ``doubly intractable'' means that the evaluation of $\pi$  up to a normalizing constant involves another intractable integral~\citep{murray2006mcmc}.  
In this section, we consider a pseudo-marginal extension of IIT on discrete spaces, which has also been numerically studied in~\cite{rosenthal2021jump}. 
The pseudo-marginal version of MT-IT on general state spaces can be devised similarly. 

We follow~\citet{andrieu2015convergence} to assume that, for each $x$, we have an unbiased estimator $\hat{\pi}(x)$, and we can express it by $\hat{\pi}(x) = U_x \pi(x) $, where $U_x > 0$ is unknown with distribution $g(\cdot \mid x)$ such that $ \int u \, g(u \mid x) \ud u = 1$. 
Then we calculate the proposal weight of each $x$  as in Algorithm~\ref{alg:iit}, but with $\pi(x)$  replaced by $\hat{\pi}(x)$. That is, we pick a balancing function $h$ and set 
\begin{equation}\label{eq:alpha.pseudo}
   \halpha(x, y ) =  h \left( \frac{\hpi(y)}{\hpi(x)} \right).
\end{equation} 
The resulting pseudo-marginal version of IIT is described in Algorithm~\ref{alg:pm-iit}.   
Note that when we move from $x$ to $x'$,  both $\hat{\pi}(x)$ and $\hat{\pi}(x')$ are saved and will be re-used in the next iteration. This step is crucial for the correctness of the algorithm.   
All other estimates, $\hat{\pi}(y)$ for $y \in \cN_x \setminus \{x'\}$,  can be thrown out as the algorithm runs---they are always resampled before being used next time.  
 
To show that Algorithm~\ref{alg:pm-iit} is correct, we  again use the Markov chain importance sampling framework described in Section~\ref{sec:mcis}. 
The auxiliary variable is $(\hat{\pi}(x), \hat{\pi}(y_1), \dots, \hat{\pi}(y_{N}))$, where $N = |\cN_x|$ and $\{y_1, \dots, y_{N}\} = \cN_x$. Since we assume $\hat{\pi}(x) = U_x \pi(x)$, equivalently, we can denote this auxiliary variable by $ (u_x, u_1, \dots, u_{N})$. 
Consider the transition from 
$(x,   u_x, u_1, \dots, u_{N})$ to $(x',  u_{x'}, u'_1, \dots, u'_{N'}) $, where $N' = |\cN_{x'}|$,  and 
denote the neighbors of $x'$ by $\{y'_1, \dots, y'_{N'}\} = \cN_{x'}$.
Since the labeling of the states in $\cN_x$ and $\cN_{x'}$ is irrelevant, 
 we can assume $x' = y_1, x = y'_1, u_x = u'_1, u_1 = u_{x'}$ without loss of generality. Then, 
\begin{equation}\label{eq:pm-transition}
   p_{\pseu}(  (x,   u_x, u_1, \dots, u_N), (x',  u_{x'}, u'_1, \dots, u'_{N'})  ) 
= \frac{ \halpha(x, y_1; u_x, u_1) }{ Z(x, u_x, u_1, \dots, u_N)}  \prod_{i = 2}^{N'} g(u'_i \mid y'_i), 
\end{equation}
where the notation $\halpha(x, y_1; u_x, u_1)$ indicates that $\halpha$ is evaluated by using the estimates $\hpi(x) = u_x \pi(x)$ and $\hpi(y_1) = u_{1} \pi(y_1)$, and   
\begin{equation}\label{eq:def-Z-pm}
    Z(x, u_x, u_1, \dots, u_N)= \sum_{i = 1}^{N}  \halpha(x, y_i; u_x, u_i).  
\end{equation} 

\begin{lemma}\label{lm:pm}
The transition kernel defined by~\eqref{eq:pm-transition} is reversible with respect to the  distribution 
\begin{equation}\label{eq:def-station-pm}
     \pi_{\pseu}(x, u_x, u_1, \dots, u_N) \propto  u_x \, \pi(x) \, g(u_x \mid x) \, Z(x, u_x, u_1, \dots, u_N) \, \prod_{i=1}^{N} g( u_i \mid y_i), 
\end{equation}
where $\{y_1, \dots, y_N\} = \cN_x$. 
\end{lemma}
\begin{proof}
See Section~\ref{app:proof} in the \SUPP{}.
\end{proof}

By Lemma~\eqref{lm:pm}, $1 / Z(x, u_x, u_1, \dots, u_N)$ is the un-normalized importance weight for target joint distribution $\pi(x, u_x, u_1, \dots, u_N) = u_x   \pi(x)   g(u_x \mid x)  \prod_{i=1}^{N} g( u_i \mid y_i),$  which has $\pi(x)$ as the marginal. This establishes the correctness of Algorithm~\ref{alg:pm-iit}.  

\begin{algorithm}\setstretch{1.2}  
\caption{Pseudo-marginal informed importance tempering (P-IIT).}\label{alg:pm-iit}
\begin{algorithmic} 
\Require Balancing function $h$, unbiased estimator $\hat{\pi}$, number of iterations $T \geq 1$,  initial state $x^{(0)} \in \cX$, $x^{(1)} \in \cN(x^{(0)})$.
\State Generate $\hat{\pi}(x^{(0)})$ and $\hat{\pi}(x^{(1)})$.
\For{$i = 0, \dots, T$}
\State Calculate $\hat{\alpha}( x^{(i)}, x^{(i-1)}) = h( \hpi(x^{(i-1)})/ \hpi(x^{i}))$.
\For{$y \in \cN(x^{(i)}) \setminus \{x^{(i-1)}\}$}
\State Generate $\hat{\pi}(y)$ and calculate $\hat{\alpha}( x^{(i)}, y) = h( \hpi(y)/ \hpi(x^{i}))$.
\EndFor
\State Calculate $\hat{Z}(x^{(i)}) = \sum_{y \in \cN(x^{(i)})} \hat{\alpha}(x^{(i)}, y)$.
\State Draw $x' \in \cN(x^{(i)})$ with probability $ \hat{\alpha}(x^{(i)}, x')/\hat{Z}(x^{(i)})$.
\State Set $w^{(i)} \leftarrow 1/\hat{Z}(x^{(i)}), \; x^{(i+1)} \leftarrow x'$. 
\State Save $\hpi(x^{i}), \hpi(x^{(i+1)})$.
\EndFor 
\Ensure  Samples $x^{(1)}, \dots, x^{(T)}$, un-normalized importance weights $w^{(1)}, \dots, w^{(T)}$.
\end{algorithmic}
\end{algorithm}

\subsection{An Example in Approximate Bayesian Computation}\label{sec:abc}
In approximate Bayesian computation, the target distribution $\pi$  is an approximation to the  true  posterior distribution and is constructed by repeatedly simulating the data and comparing it to the observed data~\citep{marin2012approximate}. In general, $\pi$ is doubly intractable and pseudo-marginal M--H schemes are often invoked to evaluate integrals with respect to $\pi$. A known issue of this approach is that, especially in the tails of $\pi$, the sampler can get stuck at some $x$ when we happen to generate some $\hat{\pi}(x) \gg \pi(x)$. 
In Section~\ref{app:pse} of the \SUPP{}, we study a toy example presented in Section 4.2 of~\citet{lee2014variance}, where $\pi$ is a geometric distribution with success probability $1 - ab$ for some $a, b\in (0, 1)$. 
For each $x \in \{1, 2, \dots, \}$,   $\hat{\pi}(x)$ is generated by $\hat{\pi}(x) = \tilde{U}_x (1-a)a^{x - 1} $, where $\tilde{U}_x$ is the mean of $K$ Bernoulli random variables with success probability $b^x$. 
Our simulation study   shows that Algorithm~\ref{alg:pm-iit} outperforms the pseudo-marginal M--H algorithm by a very wide margin, when both samplers are initialized at very large values.

\section{Fast Importance Estimation on Discrete Spaces}\label{sec:method} 
\subsection{Algorithm} \label{sec:mh-iit-algo}
In this section, we introduce another MCMC sampler  which generalizes IIT (i.e., Algorithm~\ref{alg:iit}). 
To motivate it, we first make an observation that connects  IIT to  M--H algorithms.  
Assume that $\alpha(x, y) = h(\pi(y) / \pi(x))$ takes values in $[0, 1]$. 
The un-normalized  importance weight of each $x$ in IIT is given by $1 / Z (x)$, where $Z(x) = \sum_{y \in \cN_x} \alpha (x, y)$. 
In IIT, we always exactly calculate this weight.  
Alternatively, we can apply the acceptance-rejection method where
 $y$ is repeatedly sampled from $\cN_x$  uniformly and accepted with probability $\alpha(x, y)$.  
Denote by $\tau_{\rm{ar}}(x)$ the number of acceptance-rejection iterations needed until a  proposal is accepted. 
It is clear that $\tau_{\rm{ar}}(x)$ is a geometric random variable with success probability 
$Z(x)/ |\cN_x|$, and that the accepted proposal has the same distribution as the next state $x'$ generated in Algorithm~\ref{alg:iit}.  
Such an acceptance-rejection scheme is essentially the standard M--H algorithm, which only differs from IIT in that the weight $1 / Z (x)$ is unbiasedly estimated by $\tau_{\rm{ar}}(x)/ |\cN_x|$.  

\begin{remark}\label{rmk:mh.hastings} 
In the original work of~\citet{hastings1970monte}, the acceptance probability function $\alpha$ is only required to take values in $[0, 1]$ and satisfy~\eqref{eq:alpha}.   
The choice  $\alpha(x, y) = \min\{1, \pi(y) / \pi(x)\}$ (assuming $q(x,y) = q(y,x)$ for any $x\neq y$) became prevalent because it was  shown to be optimal for M--H schemes~\citep{peskun1973optimum}, a result known as Peskun's ordering.  
By~\eqref{eq:def.local.balance.w}, this corresponds to using balancing function $h(r) = 1 \wedge r$. 
However, for IIT, this choice is often too conservative in the sense that all neighboring states $y$ with $\pi(y) \geq \pi(x)$ are treated equally. 
The optimal choice for IIT is very difficult to characterize. \citet{zhou2022rapid} recommends using balancing function $h(r) = \sqrt{r}$, which is more aggressive than  $h(r) = 1 \wedge r$ but offers much more robust performance than other aggressive choices such as $h(r) = 1 + r$.  
\end{remark} 

We now have two methods for estimating the importance weight $1/Z(x)$ at $x$: the exact calculation used by  IIT and the acceptance-rejection method used by M--H schemes. The former always requires $|\cN_x|$ evaluations of $\pi$, while the latter, on average, requires $|\cN_x| / Z(x)$ evaluations. Consequently, when $Z(x) < 1$, the exact method has a smaller computational cost.  
For most states on $\cX$, we should have $Z(x) > 1$, but a local mode $\tilde{x}$ may have $Z(\tilde{x}) \ll 1$, which can trap the M--H algorithm for a long time.  
This motivates us to propose a new estimator by combining the two methods. 
In each iteration of the acceptance-rejection method, with probability $\rho$ (which may depend on $x$), we terminate the iteration and exactly calculate $|\cN_x|/Z(x)$, which is the expected number of additional acceptance-rejection iterations needed to leave $x$ (due to the memoryless property of $\tau_{\rm{ar}}$).   
This scheme is described in Algorithm~\ref{alg:mix.weight}, 
and the resulting importance tempering sampler is given in Algorithm~\ref{alg:mh-iit}, which we call MH-IIT.  
In the next subsection, we present simulation studies which show that using a small positive value of $\rho$ often works well, and MH-IIT, as an interpolation of M--H and IIT schemes, can significantly outperform both M--H and IIT in some scenarios. 

\begin{algorithm}
\setstretch{1.2}  
\caption{Importance Weight Estimation}\label{alg:mix.weight}
\begin{algorithmic} 
\Require Balancing function $h \in [0, 1]$, IIT update frequency $\rho$, current state $x$.
\State Set $y \gets x$ and $w \gets 0$.
\While{$y = x$}
    \State Draw $U \sim \mathrm{Unif}(0, 1)$.
    \If{$U \leq \rho$}
        \State Calculate $\alpha(x, y) \gets h\left(\frac{\pi(y)}{\pi(x)}\right)$ for every $y \in \cN_x$.
        \State Calculate $Z(x) \gets \sum_{y \in \cN_x} \alpha(x, y)$.
        \State Set $w \gets w + \frac{|\cN_x|}{Z(x)}$.
        \State Draw $y$ with probability $\frac{\alpha(x, y)}{Z(x)}$.
    \Else
        \State Set $w \gets w + 1$.
        \State Draw $y$ uniformly from $\cN_x$ and calculate $\alpha(x, y) \gets h\left(\frac{\pi(y)}{\pi(x)}\right)$.
        \State Set $y \gets x$ with probability $1 - \alpha(x, y)$.
    \EndIf
\EndWhile
\State Set $w \gets \frac{w}{|\cN_x|}$. 
\Ensure Importance weight estimate $w$ for state $x$, and next state $y$.
\end{algorithmic}
\end{algorithm}

\begin{algorithm}\setstretch{1.2}  
\caption{Metropolis--Hastings-boosted informed importance tempering (MH-IIT).}\label{alg:mh-iit}
\begin{algorithmic} 
\Require Balancing function $h \in [0, 1]$,  function $\rho \colon \cX \rightarrow [0, 1]$, number of iterations $T \geq 1$, initial state $x^{(0)} \in \cX$.
\For{$i = 0, \dots, T$}
\State Set $(w^{(i)}, x^{(i+1)}) \leftarrow$ Algorithm~\ref{alg:mix.weight} $(  h, \rho(x), x^{(i)})$.
\EndFor 
\Ensure  Samples $x^{(1)}, \dots, x^{(T)}$, un-normalized importance weights $w^{(1)}, \dots, w^{(T)}$.
\end{algorithmic}
\end{algorithm}

To show that Algorithm~\ref{alg:mh-iit} is an importance tempering scheme targeting $\pi$, it suffices to verify that the samples of Algorithm~\ref{alg:mh-iit} form a Markov chain with stationary distribution $\tpi$ such that $\pi(x)/\tpi(x)$ is unbiasedly estimated up to a normalizing constant by Algorithm~\ref{alg:mix.weight}. 
These are proved in Lemma~\ref{lemma:mh-iit} and  Lemma~\ref{lemma:exp-weight} below respectively.  
For later use, in Lemma~\ref{lemma:exp-weight} we also derive the expected computational cost of Algorithm~\ref{alg:mix.weight} in terms of the number of evaluations of $\pi$. 
 
\begin{lemma}\label{lemma:mh-iit}
    The samples $x^{(1)}, \dots, x^{(T)}$ generated from Algorithm~\ref{alg:mh-iit} form a Markov chain with stationary distribution $\tpi(x) \propto \pi(x) Z(x)$, where $Z(x) = \sum_{y \in \cN_x} h(\pi(y) / \pi(x) )$. 
\end{lemma}
\begin{proof}
See Section~\ref{app:proof} in the \SUPP{}.
\end{proof}

\begin{lemma}\label{lemma:exp-weight}
Let $W = W(x)$ denote the importance weight estimate  generated by Algorithm~\ref{alg:mix.weight} with input $(h, \rho, x)$. 
Let $Z = Z(x)$ and $N = |\cN_x|$. 
Then, $\bbE[W ] = 1 / Z  $, and 
\begin{equation}\label{eq:var.W}
    \Var(W ) =  \frac{ (1 - Z/N)(1 - \rho )}{Z^2 + \rho Z (N - Z)}. 
\end{equation}
Let $K = K(x)$ denote the number of evaluations of $\pi$ involved in Algorithm~\ref{alg:mix.weight}, assuming that the  ``if'' part requires $N$ evaluations and the ``else'' part requires one. Then, 
\begin{equation}\label{eq:cost.mix}
\bbE[K] = \frac{ \rho  (  N - 1  ) + 1 }{  \rho  (1 -  Z  / N)  + Z / N}.
\end{equation} 
\end{lemma}
\begin{proof}
See Section~\ref{app:proof} in the \SUPP{}.
\end{proof}

\begin{remark}\label{rmk:compare.var.mh.iit}
     $\Var(W )$ is monotone decreasing with $\rho$. When $\rho = 0$ (e.g., in an M--H scheme), $\Var(N W ) =  N(N - Z)/Z^2$, which is the variance of a geometric random variable with mean $N/Z$. When $\rho = 1$, $\Var(W) = 0$ since the importance weight is  calculated exactly. 
\end{remark}

\begin{remark}\label{rmk:other.mh.it}
The importance sampling perspective on M--H algorithms has  been well studied in the literature. The exact calculation of importance weights on discrete spaces was proposed by~\cite{rosenthal2021jump}, and as they noted, this can be implemented very efficiently when parallel computing is available. IIT is a simple extension of their algorithm that allows for tuning of the balancing function $h$. 
Other methods for reducing the importance weight estimator  have also been proposed~\citep{casella1996rao, sahu2003self, atchade2005improving,  jacob2011using, douc2011vanilla, iliopoulos2013variance}. 
But these works consider general state spaces where exact calculation of  importance weights is not an option, and their methods either have limited application (e.g. can only be used for independence M--H algorithms) or are computationally very expensive. 
Moreover, all these works seem to have only considered the special case $h(r) = \min\{1, r\}$ (i.e., the acceptance probability function used in the standard M--H schemes).  
\end{remark}

\subsection{Simulation Studies on Three Toy Models} \label{sec:sim.toy}
In Section~\ref{app:toy-varsel} of the \SUPP{}, we present simulation studies for three toy examples on  $\cX = \{0, 1\}^p$.  We assume $\pi$ takes the following forms: 
\begin{align}
    \pi_{\mathrm{uni}}(x) \propto\;&  \exp(-\theta \norm{x - x^*}_1),    \\ 
    \pi_{\mathrm{dep}}(x) \propto\;& \exp\left[- \theta \left\{ (\norm{x}_1 - 1) \ind_{\{x_1 = 1\}} + (2p - \norm{x}_1) \ind_{\{x_1 = 0\}} \right\}\right], \\ 
    \pi_{\mathrm{bi}}(x) \propto\;&  \exp(-\theta \norm{x - x_{(1)}^*}_1) + \exp(-\theta \norm{x - x_{(2)}^*}_1), 
\end{align}
where $\norm{\cdot}_1$ denotes the $\ell^1$-norm, $\theta > 0$ is a tuning parameter, and $x^*, x_{(1)}^*, x_{(2)}^* \in \cX$ denote  fixed locations of the modes. 
These choices of $\pi$ correspond to three typical scenarios in variable selection:  a unimodal posterior with independent coordinates,  a unimodal posterior with dependent coordinates, and a bimodal posterior.  
The key advantage of these choices is that  the normalizing constant can be computed in closed form, allowing us to numerically evaluate the total variation distance between the target distribution and the importance-weighted empirical distribution of MCMC samples. 
The efficiency of each sampler is measured by the number of evaluations of $\pi$ needed to approximately reach stationarity; see Section~\ref{app:toy-varsel} \SUPP{} for precise definitions. 
We summarize our main findings below. 

\begin{enumerate}
    \item In all three settings,   when $\theta$ is small, the target distribution is flat and the standard M--H algorithm works well. When $\theta$ is large, IIT, RN-IIT and MH-IIT show significant advantages over M--H and other competing algorithms, but the best performer depends on the setting. 
    \item For $\pi_{\mathrm{uni}}$ with large $\theta$, MH-IIT with $\rho(x) = 0.025$ and balancing function $h(r) = 1 \wedge r$ is overwhelmingly more efficient than all other algorithms including M--H and IIT. 
    \item For $\pi_{\mathrm{dep}}$, another MH-IIT scheme with a more aggressive balancing function performs much better than the MH-IIT scheme with $h(r) = 1 \wedge r$.  This shows that the optimal choice of $h$ for MH-IIT schemes depends on the problem, which is consistent with our discussion on Peskun's ordering in Remark~\ref{rmk:mh.hastings}.  
    A more detailed analysis on the convergence rate of MH-IIT schemes for $\pi_{\mathrm{dep}}$ will be given in Section~\ref{sec:example.complex}.  
    \item The performance of RN-IIT is least sensitive to the choice of $\theta$ and the setting. Further, it is always more efficient than the locally balanced MTM algorithm~\citep{changrapidly}, which has a very similar dynamics but does not use importance weighting. 
\end{enumerate}


\subsection{Simulation Study on Variable Selection with Dependent Designs} \label{sec:sim}

Our last simulation study considers a more realistic  setting for Bayesian variable selection, which has often been studied in the literature~\citep{yang2016computational, zhou2022dimension, changrapidly}. 
First, we generate the design matrix $\bL \in \bbR^{n \times p}$ with $n = 1,000$ and $p = 5,000$ by sampling each row independently from the multivariate normal distribution $N(0, \bm{\Sigma})$, where $\Sigma_{ij} = e^{-|i - j|}$. The response vector is generated by $\bY =  \bL \bsbeta  + \be$, where $\be \sim N(0, \bm{I})$, $\beta_j = 0$ if $j > 20$,  and for $j = 1, \dots, 20$, $\beta_j$ is drawn independently from $2 \sqrt{ (\log p )/n } \, \mathrm {Unif}\bigl((2, 3)\cup (-3, -2)\bigr)$. This is called the intermediate SNR (signal-to-noise ratio) case in~\citet{yang2016computational}.  
The parameter of interest is $\bsgamma \in \{0, 1\}^p$, indicating which predictor variables are selected (i.e., have nonzero regression coefficients), and we calculate its posterior distribution $\pi$ in the same way as in~\citep{yang2016computational, zhou2022dimension}.  It has been shown in those works that $\pi$ tends to be highly multimodal in the intermediate SNR case, making the MCMC sampling challenging. 

We consider nine MCMC samplers, details of which are given in Section~\ref{supp:var-sel-real} of the \SUPP{}. Each sampler is initialized at the same starting point, $\bsgamma^{(0)}$, with 10 randomly selected covariates, and run each sampler until $2.5 \times 10^6$ evaluations of $\alpha$ have been made. Denote by $\hat{\bsgamma}$ the model with the highest posterior probability among all models visited by any of the samplers, and record the number of evaluations of $\alpha$ taken by each sampler to find $\hat{\bsgamma}$. We repeat this simulation 100 times, and show the results in Figure~\ref{fig:vs.sim}. The advantage of IIT methods over the others is quite significant, and RN-IIT has the best performance.  
  
\begin{figure}
    \centering
    \includegraphics[width=0.8\linewidth]{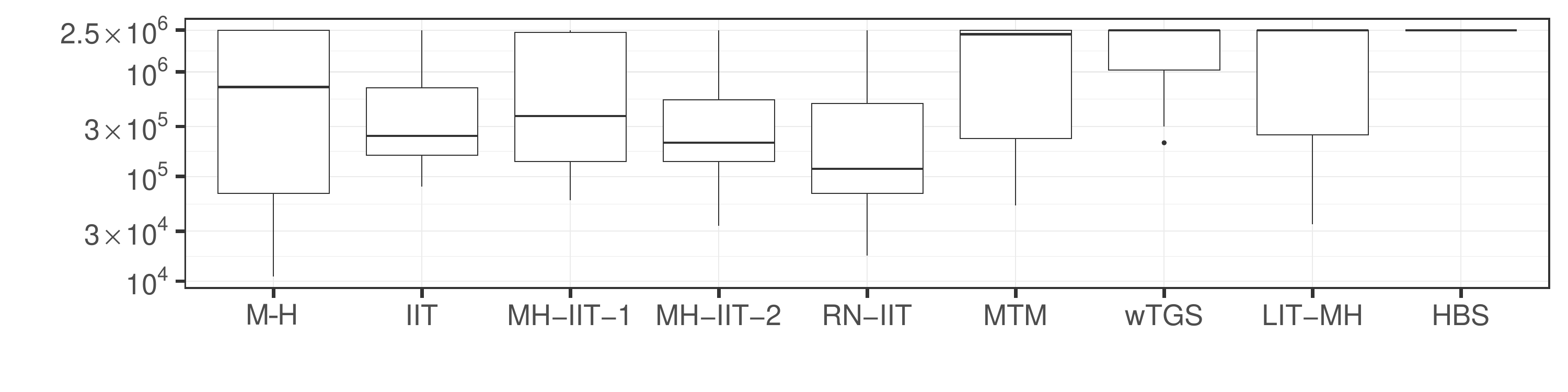}
    \caption{Box plot for the number of evaluations of $\alpha$  (truncated at $2.5 \times 10^6$) needed to find $\hat{\bsgamma}$ in the variable selection simulation with $p=5,000$ and $n=1,000$. Each box represents $100$ replicates.}
    \label{fig:vs.sim}
\end{figure}
   
\section{Complexity Analysis of Algorithm~\ref{alg:mh-iit}} \label{sec:complexity.two}  
\subsection{Theoretical Results}\label{sec:theory-complexity}
The complexity of importance tempering methods consists of two parts: the convergence rate of the importance sampling  estimator $\hat{f}_T$ given by~\eqref{eq:iit.estimator}, and the computational cost for obtaining each $(x^{(i)}, w^{(i)})$.  A general theory on this complexity analysis is provided in Section~\ref{sec:theory} of the \SUPP{}, and in this section we focus on its application to Algorithm~\ref{alg:mh-iit},  MH-IIT.  

The samples $(x^{(1)}, \dots, x^{(T)})$  generated from MH-IIT form a Markov chain, and we denote its transition matrix by $\tP$. Note that $\tP$ depends on the choice of $\alpha$ but not $\rho$.  
We can express $\tP$ and its stationary distribution $\tpi$ by 
\begin{equation}\label{eq:def.tP}
    \tP(x, y) =  \frac{ \alpha(x, y)}{ Z(x)} \ind_{\cN_x}(y), \quad  \tpi(x) 
    = \frac{ \pi(x) Z (x) }{ \pi(Z)}, 
\end{equation}  
where we recall $\alpha(x, y) = h(\pi(y) / \pi(x))$ for some balancing function $h \in [0, 1]$, $Z(x) = \sum_{y \in \cN_x} \alpha(x, y)$, and $\pi(f) = \sum_{x \in \cX} \pi(x) f(x)$.  

Let $K(x)$ be as given in Lemma~\ref{lemma:exp-weight}. 
We propose to measure the complexity of MH-IIT using $\kappa_{ \rho} \sigma^2_{ \rho}(f) $, where $\kappa_{\rho} = \sum_{x \in \cX} \tpi(x) \, \bbE[ K(x) ]$ is the expected cost of each iteration of Algorithm~\ref{alg:mh-iit}, and $\sigma^2_{\rho}(f)$ denotes the asymptotic variance of $\sqrt{T} \hat{f}_T$ as $T \rightarrow \infty$.  
When $\rho \equiv 0$, this is just the asymptotic variance of the time-average estimator produced from a vanilla M--H algorithm; see Section~\ref{sec:theory} in the \SUPP{}. 
When $\rho \equiv c \in [0,1]$, we simply write  $\kappa_c, \sigma^2_{c}$ in place of  $\kappa_\rho, \sigma^2_{\rho}$. 
Using~\eqref{eq:def.tP} and Lemma~\ref{lemma:exp-weight}, it is easy to verify that   
\begin{align*}
    \kappa_{0} = \sum_{x \in \cX}  \frac{ \tpi (x) }{   Z(x) / |\cN_x| } \leq \frac{N_{\mathrm{max}}}{\pi(Z) }, \quad 
      \kappa_{ 1} = \sum_{x \in \cX} \tpi(x) |\cN_x| \leq N_{\mathrm{max}}, 
\end{align*}  
where $N_{\mathrm{max}} = \max_x |\cN_x|$ is the maximum neighborhood size.
We propose to use $\rho(x) =  a / |\cN_x|$ for some fixed $a$.
By Corollary~\ref{coro:cost.bound}, this guarantees that the computational cost of Algorithm~\ref{alg:mh-iit} is comparable to the lower costs of IIT and M--H schemes.   
\begin{corollary}\label{coro:cost.bound}
For $\rho(x) = a / |\cN_x|$ with $ a\in (0,  \min_x |\cN_x|]$,  we have 
$$\kappa_\rho \leq \min \left\{ a^{-1} (a+1) \kappa_{1}, \; (a+1)  \kappa_{0} \right\}.$$  
\end{corollary} 
\begin{proof}
See Section~\ref{app:proof} in the \SUPP{}.
\end{proof}

Our last theorem characterizes $\sigma^2_{\rho}(f)$, where $\Gap(\cdot)$ denotes the spectral gap of a transition matrix  or a transition rate matrix.  

\begin{theorem}\label{coro:mh.iit} 
Fix a balancing function $h$, and suppose that $\Gap( \tP )<1$. 
For $f \colon \cX \rightarrow \bbR$ with $\pi(f) = 0$, define $\gamma(f) = \pi(Z ) \pi( f^2 / Z ). $
For any $\rho \colon \cX \rightarrow [0, 1]$, we have  \smallskip \\ 
(i) $\sigma^2_{1}(f)  \leq \sigma^2_{\rho}(f)  \leq  \sigma^2_{0}(f) \leq \sigma^2_{1}(f) + \gamma(f)$; \smallskip \\
(ii)  there exists $f^*$ with $\pi(f^*) = 0$ such that $\sigma^2_{1}(f^*)  \leq \sigma^2_{\rho}(f^*)  \leq   2  \sigma^2_{1}(f^*)$; \smallskip \\
(iii) $\sigma^2_{ \rho}(f)  \leq 2 \pi( f^2 ) / \Gap (\tP^{\cont}  )$ 
where   the transition rate matrix $\tP^{\cont} $ is defined by 
\begin{align*}
    \tP^{\cont} (x, y) = \frac{ \tP(x, y) Z (x) }{ \pi(Z ) }, \quad \text{ if } x \neq y,  
\end{align*}
and $\tP^{\cont} (x, x) = - \sum_{x' \neq x} \tP^{\cont}  (x, x')$. 
\end{theorem}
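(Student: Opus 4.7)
My plan is to derive a single master identity for the asymptotic variance and then read off the three parts from it. The estimator $\hat f_T$ is a self-normalised importance-sampling average built from the Markov chain $(x^{(i)}, w^{(i)})$ whose $x$-marginal is $\sP_h$ with stationary distribution $\pi_h$. By Lemma~\ref{lemma:exp-weight}, $\bbE[w^{(i)}\mid x^{(i)}]=1/Z_h(x^{(i)})$, so $\pi_h(w)=1/\pi(Z_h)$, and the delta method applied to $\hat f_T=\overline{wf}/\bar w$ gives $\sigma^2_{h,\rho}(f)=\pi(Z_h)^2\,\tau_{h,\rho}(wf)$, where $\tau_{h,\rho}$ is the Markov chain asymptotic variance operator for the extended chain. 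A careful look at Algorithm~\ref{alg:mix.weight} shows that given $x^{(i)}$, the weight $w^{(i)}$ and the next state $x^{(i+1)}$ are conditionally independent: whether the terminating step of the while-loop is an IIT draw or an accepted MH proposal, $x^{(i+1)}$ is distributed as $\wt_h(\cdot\mid x^{(i)})/Z_h(x^{(i)})=\sP_h(x^{(i)},\cdot)$. Cross-time covariances of $wf$ therefore collapse to those of the conditional mean $G(x):=f(x)/Z_h(x)$, yielding the master identity
\begin{equation*}
  \sigma^2_{h,\rho}(f) \;=\; \sigma^2_{h,1}(f) \;+\; \pi(Z_h)^2\,\pi_h\bigl(f^2\,\Var(W)\bigr), \qquad \sigma^2_{h,1}(f)=\pi(Z_h)^2\tau(G),
\end{equation*}
where $\tau(G)$ is the discrete-time asymptotic variance of $G$ under $\sP_h$.

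Part~(i) is then immediate from Lemma~\ref{lemma:exp-weight}: $\Var(W(x))\in\bigl[0,\,(1-Z_h(x))/Z_h(x)^2\bigr]$ is monotone decreasing in $\rho(x)$, so the second summand of the master identity is sandwiched between $0$ and $\pi(Z_h)^2\pi_h(f^2(1-Z_h)/Z_h^2)=\pi(Z_h)[\pi(f^2/Z_h)-\pi(f^2)]\le\gamma(f)$. This gives both the monotone ordering $\sigma^2_{h,1}(f)\le\sigma^2_{h,\rho}(f)\le\sigma^2_{h,0}(f)$ and the gap bound $\sigma^2_{h,0}(f)\le\sigma^2_{h,1}(f)+\gamma(f)$.

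For Part~(ii), I plan to exhibit an $f^*$ with $\gamma(f^*)\le\sigma^2_{h,1}(f^*)$; combined with Part~(i) this forces $\sigma^2_{h,\rho}(f^*)\le 2\sigma^2_{h,1}(f^*)$. Because $\Gap(\sP_h)<1$ and $\sP_h$ is $\pi_h$-reversible, there is an eigenfunction $\phi$ with positive eigenvalue $\lambda=1-\Gap(\sP_h)\in(0,1)$ satisfying $\pi_h(\phi)=0$. Set $f^*=Z_h\phi$, so $\pi(f^*)=\pi(Z_h)\pi_h(\phi)=0$ and $G^*=\phi$. The spectral formula for reversible chains gives $\tau(G^*)=\|\phi\|_{\pi_h}^2(1+\lambda)/(1-\lambda)\ge\|\phi\|_{\pi_h}^2$, while $\gamma(f^*)=\pi(Z_h)^2\|\phi\|_{\pi_h}^2$, and the conclusion follows.

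For Part~(iii), my plan is to identify $\sigma^2_{h,1}(f)+\gamma(f)$ with the asymptotic variance of the continuous-time chain $\sP^{\cont}_h$. Writing $\sP^{\cont}_h=D(\sP_h-I)$ with $D_{xx}=Z_h(x)/\pi(Z_h)$, the $\pi_h$-reversibility of $\sP_h$ transfers directly to $\pi$-reversibility of $\sP^{\cont}_h$. Combining the standard Poisson-equation identity $\sigma^2_{\cont}(f)=-2\langle f,(\sP^{\cont}_h)^{-1}f\rangle_\pi=2\pi(Z_h)^2\langle G,(I-\sP_h)^{-1}G\rangle_{\pi_h}$ with the Kipnis--Varadhan formula $2\langle G,(I-\sP_h)^{-1}G\rangle_{\pi_h}=\tau(G)+\|G\|_{\pi_h}^2$ gives $\sigma^2_{\cont}(f)=\sigma^2_{h,1}(f)+\gamma(f)$. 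Part~(i) then yields $\sigma^2_{h,\rho}(f)\le\sigma^2_{\cont}(f)$, and the classical bound $\sigma^2_{\cont}(f)\le 2\pi(f^2)/\Gap(\sP^{\cont}_h)$ for reversible continuous-time chains finishes the argument. The main obstacle is the master identity itself: one must unpack Algorithm~\ref{alg:mix.weight} carefully enough to obtain the conditional independence of $w^{(i)}$ and $x^{(i+1)}$ given $x^{(i)}$, and invoke a CLT for functionals of the extended chain (available in Appendix~\ref{sec:def.complexity}); after that, the three parts reduce to spectral bookkeeping.
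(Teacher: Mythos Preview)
Your proposal is correct and follows essentially the same architecture as the paper's proof: your ``master identity'' is precisely Theorem~\ref{th:comparison} specialized to $\tsP=\sP_h$, your eigenvector construction for part~(ii) is exactly the argument of Proposition~\ref{prop:conv.bound1} (with $f^*=Z_h\phi$ matching the paper's $f^*=g\,\tpi/\pi$), and your conditional-independence observation for $w^{(i)}$ and $x^{(i+1)}$ given $x^{(i)}$ is what underlies the paper's product form~\eqref{eq:def.git.kernel}. The one genuine variation is in part~(iii): the paper reaches the continuous-time bound by introducing auxiliary exponential weights $\sR_{\exp}$ (Remark~\ref{rmk:compare.exp} and Proposition~\ref{prop:conv.bound2}), noting that $v_{\sR_{h,\rho}}(x)\le v_{\sR_{\exp}}(x)=\pi(Z_h)^2/Z_h(x)^2$ and that the exponential-weight scheme is literally a time-average of the continuous-time chain $\sP^{\cont}_h$; you instead compute the identity $\sigma^2_{\cont}(f)=\sigma^2_{h,1}(f)+\gamma(f)$ directly via the resolvent formula $2\langle G,(I-\sP_h)^{-1}G\rangle_{\pi_h}=\tau(G)+\|G\|_{\pi_h}^2$. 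Both routes are valid and yield the same inequality chain; yours is arguably more direct since it avoids the auxiliary $\sR_{\exp}$, while the paper's route has the advantage of making the continuous-time interpretation of the limiting variance completely explicit.
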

\begin{proof}
See Section~\ref{app:proof} in the \SUPP{}.
\end{proof}

\begin{remark}\label{rmk:complexity.approx}
By parts (i) and (ii),  roughly speaking, the choice of $\rho$ does not have a significant impact on the asymptotic variance (at least for some function $f$). 
Hence, by part (iii), we can estimate the complexity of MH-IIT using 
\begin{equation}\label{eq:alg3.complexity}
    \mathrm{Comp}(\rho) = \frac{ \kappa_{\rho} }{\Gap(\tP^{\cont} )}. 
\end{equation} 
The use of a continuous-time Markov chain for studying importance sampling estimators was employed in~\citet{zanella2019scalable} for analyzing their variable selection sampler and in~\citet{zhou2022rapid} for analyzing IIT. 
\end{remark}

\subsection{A Numerical Example}\label{sec:example.complex}
We revisit the toy example  $\pi_{\mathrm{dep}}$ discussed in Section~\ref{sec:sim.toy}. 
Consider 
\begin{equation}\label{eq:def.hc}    
h_c(r) = (1 \wedge r e^{-c} ) \vee (r \wedge e^{-c} ),  \quad c \geq 0. 
\end{equation}    
It is straightforward to check that $h_c$ is  a balancing function and takes values in $[0, 1]$. When $c = 0$, we get $h_c(r) = 1 \wedge r$. As $c \rightarrow \infty$, we get $e^c \, h_c(r)  = 1 \vee r$. Hence, a larger $c$ implies more aggressive behavior of the weighting scheme $\alpha(x, y) = h(\pi(y) / \pi(x))$. 

We fix $p=5$ and numerically calculate the complexity estimate defined in~\eqref{eq:alg3.complexity} for MH-IIT schemes using $h_c$.  To indicate the dependence on the choice of $c$, we write $\mathrm{Comp}(c, \rho)$ instead of $\mathrm{Comp}(\rho)$  and $\Gap(\tP^{\cont}_c )$  instead of $\Gap(\tP^{\cont})$. 
For simplicity, we only consider three choices of $\rho$: $\rho  \equiv 0$, $\rho \equiv 1$, and $\rho  \equiv 0.5$. 
In Table~\ref{tab:gap}, we report the optimal values of $\mathrm{Comp}(c, \rho)$ and $\Gap(\tP^{\cont}_c )$ for each choice of $\rho$. Optimal choices of $c$ are given in parentheses, and the minimum complexity in each column is highlighted in bold. 
When $\rho \equiv 0$ (i.e., the standard M--H algorithm), the optimal choice of $c$ is always $c = 0$, in which case $h_c$ is reduced to $h_c(r) = 1 \wedge r$; this is expected according to Peskun's ordering. 
When $\rho \equiv 1$,  the expected computational cost per iteration is a constant independent of $\alpha$,  and thus the optimal choice of $c$ is the one that maximizes the spectral gap.  
From Table~\ref{tab:gap}, we see that when $\theta = 1$, the smallest complexity is achieved by $\rho \equiv 0$, which is not surprising since $\pi$ is flat. 
As $\theta$ gets larger, the complexity of MH-IIT and IIT becomes significantly smaller than that of M--H. 

\begin{table}
    \centering
        \caption{Optimal complexity estimates for MH-IIT with balancing function $h_c$}
    \begin{tabular}{cccc}
    \hline 
         &   $\theta = 1$   &  $\theta = 2$  &   $\theta = 3$  \\
    \hline
    $\max_{c \geq 0} \Gap(\sP^{\cont}_c )$ &  0.62 ($c=2.43$) & 1.19 ($c=3.53$) &  2.77 ($c=4.58$) \\ 
    \hline
    $\min_{c \geq 0}  \mathrm{Comp}(c, \rho)$ with $\rho \equiv 0$  \hspace{0.5cm} & \textbf{5.19} ($c = 0$) & 5.03 ($c=0$) & 5.0 ($c=0$) \\
    \hline
    $\min_{c \geq 0}  \mathrm{Comp}(c, \rho)$ with $\rho \equiv 1$ \hspace{0.5cm} & 8.07 ($c=2.43$) & 4.20 ($c=3.53$) &  \textbf{1.81} ($c=4.58$)  \\ 
    \hline
    $\min_{c \geq 0}  \mathrm{Comp}(c, \rho)$ with $\rho \equiv 0.5$ \hspace{0.5cm} & 7.82 ($c=1.46$) & \textbf{4.18} ($c=2.15)$ & 1.90 ($c = 3.05$) \\
    \hline
    \end{tabular}
    \label{tab:gap}
\end{table}

\section{Discussion}\label{sec:disc}

The primary motivation of the proposed  methodology is to make efficient use of  ``informed proposals'',  which has gained increasing popularity recently~\citep{zanella2020informed, grathwohl2021oops}.  
Unlike  a random-walk-type proposal used in vanilla M--H schemes, an informed proposal as defined in~\citet{zanella2020informed} assigns a proposal weight to each neighboring state that depends on the stationary probability, and it was shown in~\citet{zanella2020informed} that M--H algorithms with locally balanced weighting schemes have desirable theoretical properties. In a broad sense, MTM can  be viewed as an informed M--H scheme as well where a random subset of neighbors is generated in each iteration. 
For problems such as variable selection,  \citet{zhou2022dimension} showed that, surprisingly, locally balanced  proposals could lead to exceedingly low acceptance rates  that are even worse than uninformed M--H algorithms. 
This work advocates for the adoption of importance tempering as a straightforward and effective solution to this issue. Our simulation studies clearly demonstrate its advantage over the conventional M--H implementation.  

The use of informed proposals is also the key distinction between the proposed methods and existing importance sampling-based MCMC algorithms. 
We explicitly specify the proposal weighting scheme $\alpha$ or balancing function $h$ rather than the stationary distribution of the sampling chain, and the resulting importance weight of Algorithm~\ref{alg:mtm-iit}, $1/Z(x, \cS)$, is ``locally informed'' in the sense that it contains information of $\pi$ in the neighborhood of $x$. 
In contrast, for existing Markov chain importance sampling methods, the stationary distribution of the sampler, which we denote by $\tpi$, 
is often set to  $\tilde{\pi}(x) \propto \pi(x)^a$ for some $a > 0$~\citep{jennison1993discussion, gramacy2010importance} or some pre-specified approximation to $\pi$. 
For example, in Bayesian statistical applications, $\tpi$ can be constructed by using an approximate marginal likelihood function or a different prior distribution~\citep{tan2015honest, vihola2020importance}. 
Other approaches to utilizing importance sampling in MCMC algorithms include the algorithm of~\citet{schuster2020markov} and~\citet{rudolf2020metropolis}, which still runs an M--H chain but makes use of rejected samples via importance sampling, 
and the dynamic weighting method~\citep{liu2001theory, liang2002dynamically}, which uses importance weights to increase acceptance probabilities in M--H algorithms on the fly. 

The pseudo-marginal example presented in Section~\ref{sec:pseudo.iit} illustrates that IIT can often be combined with other MCMC techniques. Such extensions have also been studied in~\cite{rosenthal2021jump}, where they devised rejection-free versions of pseudo-marginal and parallel tempering M--H algorithms. Another interesting algorithm closely related to ours is the ``partial neighbor search'' of~\cite{chen2023sampling}, which considers a fixed partition of $\cN_x$ and calculates the importance weight on only one subset of $\cN_x$.  This algorithm is conceptually similar to our MT-IT and RN-IIT algorithms and can be very useful when $\cX$ naturally admits such a partition. On the other hand, our algorithms are easier to implement and applicable to any space.  
 
Finally, we discuss some limitations of this work. 
In Section~\ref{sec:complexity.two}, we apply the theory developed in Section~\ref{sec:theory} of \SUPP{} to study  MH-IIT. 
Essentially the same arguments can be used to understand the complexity of other importance tempering schemes  where the distribution of the importance weight estimate only depends on the current state $x$. 
For MT-IT, we have a bivariate Markov chain $(x^{(i)}, \cS^{(i)})_{i \geq 1}$, where   the importance weight $1/Z(x^{(i)}, \cS^{(i)})$ is not conditionally independent of $x^{(i-1)}$ given $x^{(i)}$.  
Theoretical analysis of such schemes is much more complicated, and it may be beneficial to utilize techniques developed in~\citet{franks2020importance, andral2022importance}. 
Another important question not investigated in this work is how to choose the parameter $m$ for MT-IT and RN-IIT schemes, which has been discussed in Remark~\ref{rmk:m}. 

In all of our numerical examples, we assume parallel computing is not available and force each algorithm considered to have the same computational cost, and thus informed MCMC schemes,  including the proposed samplers and informed M--H algorithms,  are run for much fewer iterations than uninformed M--H. This is a conservative choice, showing informed schemes in the worst possible light. Though, remarkably, our samplers remain superior to other competing methods under these conditions, a simulation study incorporating parallel implementation for informed schemes would be more relevant to modern computational practices and offer even stronger evidence for the advantages of the proposed algorithms. 

\section*{Acknowledgement}
AS was supported in part by NSERC Discovery Grant (NSERC-DG) program. 
QZ was supported in part by NSF grants  DMS-2311307 and DMS-2245591.  
QZ would like to thank Professor Jeffrey Rosenthal for directing us to  related references and Professors Faming Liang, Florian Maire, Giacomo Zanella, Huiyan Sang, Persi Diaconis, among many others, for the helpful conversations.

\newpage 
\appendix 
\numberwithin{algorithm}{section}
\numberwithin{example}{section}
\numberwithin{assumption}{section}
\numberwithin{proposition}{section}
\numberwithin{theorem}{section}
\numberwithin{lemma}{section}
\numberwithin{remark}{section}
\numberwithin{definition}{section}
\numberwithin{equation}{section}
\begin{center}
    \LARGE{Appendices}
\end{center}

\section{Multiple-try Metropolis Algorithm}



In Algorithm~\ref{alg:mtm}, we recall the multiple-try Metropolis (MTM) algorithm. Note that $\alpha$ must satisfy $ \pi(x) q(y \mid x) \alpha(x, y) =  \pi(y) q(x \mid y) \alpha(y, x).$

\begin{algorithm}\setstretch{1.2}  
\caption{Multiple-try Metropolis (MTM).}\label{alg:mtm}
\begin{algorithmic} 
\Require  Proposal $q$,  weighting function $\alpha$, neighborhood size $m \geq 2$, number of iterations $T \geq 1$, initial state $x^{(0)} \in \cX$.
\For{$i = 0, \dots, T$}
\State Draw $\cS^{(i)} = (y_1, \dots, y_m )$ from $q(\cdot \mid x^{(i)})$.   
\State Calculate $\alpha(x^{(i)}, y)$  for every $y \in \cS^{(i)}$.
\State Calculate $Z(x^{(i)}, \cS^{(i)}) = \sum_{y \in \cS^{(i)}} \alpha(x^{(i)}, y) $.
\State Draw $x' \in \cS^{(i)}$ with probability $ \alpha(x^{(i)}, x')/Z(x^{(i)}, \cS^{(i)})$.
\State Draw $(y'_2, \dots, y'_{m})$ from the conditional distribution $q( \cdot \mid x',   x^{(i)})$.
\State Set $\cS' \leftarrow  \{x^{(i)}, y'_2, \dots, y'_{m}\}$. 
\State Calculate $\alpha(x', y')$  for every $y' \in \cS'$ .
\State Calculate $Z(x', \cS') = \sum_{y' \in \cS'} \alpha(x', y') $. 
\State Calculate the acceptance probability $a  =  \min\{1, Z(x, \cS)/ Z(x', \cS') \}$.
\State With probability $a$, set $x^{(i+1)} \leftarrow x'$; with probability $1 - a$,  set $x^{(i+1)}  \leftarrow x$.
\EndFor 
\Ensure  Samples $x^{(1)}, \dots, x^{(T)}$.
\end{algorithmic}
\end{algorithm}

\newpage 

\section{Complexity  Analysis of Importance Tempering Methods}\label{sec:theory}

\subsection{Measuring Complexity of Importance Tempering Schemes}\label{sec:def.complexity}  

For our theoretical analysis, we consider the following setting where the importance weight estimate only depends on the current state. 
Let $\cX$ be discrete and target distribution $\pi > 0$. 
Let$\bigl((X_i, W_i)\bigr)_{i \in \bbN}$ be a bivariate Markov chain with state space $\cX \times [0, \infty)$ and transition kernel 
\begin{equation}\label{eq:def.git.kernel}
    \sP_{\GIT}( (x, w), (x', \ud w')) = \tsP(x, x') \sR(x', \ud w'), 
\end{equation}
where  $\tsP, \sR$ are transition kernels that satisfy the following conditions. 

\begin{assumption}\label{ass1}
    \rm $\tsP$ is irreducible and reversible w.r.t. a probability distribution $\tpi > 0$, and there exists a constant $\fC > 0$ such that 
    \begin{align*}
        \int_0^\infty   w \, \sR(x, \ud w) =  \frac{\fC \,  \pi(x) }{ \tpi(x)}, \quad \quad \forall\, x \in \cX.
    \end{align*} 
\end{assumption}

Given a function $f$, denote its expectation with respect to $\pi$ by $\pi(f) = \sum_x \pi(x) f(x)$, and 
define the importance sampling estimator for $\pi(f)$ by 
\begin{equation}\label{eq:i-mcmc.hatf} 
\hat{f}_T (\tsP, \sR)  =  \frac{ \sum_{i=1}^T  f(X_i) W_i }{ \sum_{i=1}^T  W_i }, 
\end{equation}    
where $T$ denotes the number of samples. 
Denote the collection of   ``centered'' functions by 
$$L_0^2(\cX, \pi) = \{ f \colon f \text{ is a mapping from $\cX$ to $\bbR$}, \; \pi(f) = 0, \;  \pi(f^2) < \infty \}.$$  
For   $f  \in L_0^2(\cX, \pi)$, by a routine argument that applies the central limit theorem for the bivariate Markov chain $((X_i, W_i))_{i \in \bbN}$ and Slutsky's theorem (see, e.g. Proposition 5 in~\citet{deligiannidis2018ergodic}),  we find that 
$\sqrt{T} \hat{f}_T  \overset{\mathrm{Law}}{\longrightarrow} N(0, \sigma^2(f; \tsP, \sR)),$ 
where $\sigma^2(f; \tsP, \sR)$ is called the asymptotic variance of the estimator $\hat{f}_T$. 
Asymptotic variances are commonly used in the MCMC literature to measure the efficiency of samplers. 

To derive a practically useful complexity metric for importance tempering  schemes, we need to take into account the  computational cost for simulating the kernel $\sP_{\GIT}$, which can vary widely across different algorithms.  Since  $X_{i + 1} \sim \tsP(X_i, \cdot)$ and $W_i \sim \sR(X_i, \cdot)$, we may assume that the computational cost for generating $(X_{i+1}, W_i)$ only depends on the value of $X_i$. 
Hence, we denote by $\kappa(x; \tsP, \sR)$ the cost of simulating a pair $(X, W)$ where $X \sim \tsP(x, )$ and $W \sim \sR(x, \cdot)$. Our formalism allows $\kappa$ to be any function of $x$, but in practice we will typically choose this to be the expected  number of times that the posterior is evaluated (up to a normalizing constant) during each step of the algorithm.
Let 
\begin{equation}\label{eq:def.kappa}
 \kappa(\tsP, \sR ) = \sum_{x \in \cX} \tpi(x) \kappa (x; \tsP, \sR)
\end{equation} 
denote the cost averaged over $\tpi$.  
We use  $\sigma^2(f; \tsP, \sR)   \kappa(\tsP, \sR)$ to measure the complexity of the importance tempering scheme $\sP_{\GIT}$, which we think of as the effective computational cost of each independent sample from $\pi$.  

\begin{example} [Relationship to Theory for Uninformed M--H]\label{ex:comp.mh}
\rm Let $\cX$ satisfy Assumption $1$ and the proposal at $x$ be a uniform distribution on $\cN_x$. Consider the vanilla M--H algorithm with transition matrix given by 
\begin{equation*} \label{eq:mh.kernel}
 \sP_{\mh}  (x, y) =  \ind_{ \{y \neq x\} } \frac{ \alpha(x, y) }{N}    + \ind_{ \{y = x\} } \left( 1 -  \sum_{x' \neq x} \frac{   \alpha(x, x')}{N} \right), 
\end{equation*}   
where $\ind$ is the indicator function. Let $(Y_1, Y_2, \dots, )$ denote a trajectory of the M--H sampler. Then $T^{-1/2} \sum_{i=1}^T f(Y_i)$ converges in distribution to $N(0, \sigma^2_{\mh}(f))$, where $\sigma^2_{\mh}(f)$ denotes the corresponding asymptotic variance. 
If we rewrite it as an importance tempering scheme (i.e., $X_i$ is the $i$-th accepted state and $W_i$ is the number of iterations the M--H sampler stays at $X_i$), then the pair $(\tsP, \sR)$ is given by 
\begin{align*}
 &\tsP_{\mh}  (x, y) = \ind_{ \{y \neq x\} }  \frac{   \alpha(x, y) }{  Z(x) }, \quad   
  \sR_{\mh}(x, \cdot) =  \mathrm{Geom} ( Z(x) / N), 
\end{align*}
where $Z(x) = \sum_{x' \neq x} \alpha(x, x')$. 
Since each uninformed MH iteration requires one evaluation of $\pi$ for the proposed state, we have $\kappa(x; \tsP_{\mh}, \sR_{\mh}) = N / Z(x)$, the expected number of iterations needed to leave $x$. 
Recall that the stationary distribution of $\tsP_{\mh}$ is given by $\tilde \pi(x) = \pi(x)   Z(x) / \pi( Z)$, and it follows that 
$\kappa( \tsP_{\mh}, \sR_{\mh} ) =  N / \pi(Z)$.  
By definition, $\sigma^2( \tsP_{\mh}, \sR_{\mh} )$ is the asymptotic variance of 
\begin{align*}
   \sqrt{N}   \frac{ \sum_{i=1}^N  f(X_i) W_i }{ \sum_{i=1}^N  W_i } 
  =  \frac{ \sum_{i=1}^{T_N}  f(Y_i)  }{\sqrt{T_n}} \frac{1} { \sqrt{ (\sum_{i=1}^N  W_i )/ N} }, 
\end{align*}
where $T_k = \sum_{i=1}^k W_i$ denotes the number of iterations the MH sampler spend on the first $k$ accepted states.  An application of Slutsky's theorem and Law of Large Numbers yields that 
 $   \sigma^2( \tsP_{\mh}, \sR_{\mh} )\kappa( \tsP_{\mh}, \sR_{\mh} ) = \sigma^2_{\mh}(f),$  
which shows that our definition of the complexity metric is consistent with the existing theory on MCMC methods. 
\end{example} 

\subsection{Bounding the Asymptotic Variance}\label{sec:comparing.git} 

Fix an irreducible and reversible transition matrix $\tsP$ with stationary distribution $\tpi$. Let $\cR(\tsP, \pi)$ denote the collection of all possible choices of $\sR \colon \cX \times \cB([0,\infty)) \rightarrow [0,1]$ such that Assumption~\ref{ass1} is satisfied, where $\cB$ denotes the Borel $\sigma$-algebra. 
For each $\sR \in \cR(\tsP, \pi)$, let $m_{\sR}(x) = \int   w  \, \sR(x, \ud w)$.  By Assumption~\ref{ass1}, we find that 
\begin{align*}
   m_{\sR}(x) =  \tpi ( m_{\sR} )   \frac{ \pi(x) }{ \tpi(x) };  
\end{align*} 
that is,  the constant $\fC$ in Assumption~\ref{ass1} is given by $\fC = \tpi ( m_{\sR} )$.  
Let 
\begin{equation}\label{eq:def.v} 
v_{\sR}(x) =  \frac{1}{ \tpi ( m_{\sR} )^2 } \int   \left(w - m_{\sR}(x) \right)^2 \sR(x, \ud w) 
\end{equation} 
denote the variance of the normalized importance weight estimator at  $x$. 
For $f \in L_0^2 (\cX, \pi)$, the asymptotic variance $ \sigma^2(f; \tsP, \sR) $ can be expressed by 
\begin{equation}\label{eq:def.asymp.var}
 \sigma^2(f; \tsP, \sR)  = \lim_{T \rightarrow \infty}  \frac{1}{T \, \tpi ( m_{\sR} )^2} \, \Var \left(   \sum_{i=1}^T  f( X_i ) W_i   \right). 
\end{equation}   
By~\citet[Corollary 21.1.6]{douc2018markov} and the fact that $\sP_{\GIT}$ is Harris recurrent and reversible with respect to  $\tpi(x) \sR(x, \ud w)$ under Assumption~\ref{ass1}, 
\eqref{eq:def.asymp.var} holds regardless of the distribution of $(X_0, W_0)$.
So henceforth whenever we use~\eqref{eq:def.asymp.var}, we assume $(X_0, W_0)$ is drawn from the stationary distribution,   $X_0 \sim \tpi$ and $W_0 \sim \sR(X_0, \cdot)$.  
We first prove a comparison result. 

\begin{theorem}\label{th:comparison}   
For  $\sR_1, \sR_2 \in \cR(\tsP, \pi)$ and $f \in L_0^2(\cX, \pi) $, 
$$\sigma^2(f; \tsP, \sR_1) - \sigma^2(f; \tsP, \sR_2) =  \tpi ( f^2 (v_{\sR_1} - v_{\sR_2}) ),$$ 
where $v_{\sR_j}$ is given by~\eqref{eq:def.v}.    
\end{theorem}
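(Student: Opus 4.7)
The plan is to decompose the centered sum $\sum_{i=1}^T f(X_i) W_i$ into an $\sR$-independent piece and a conditionally uncorrelated noise piece, then apply the law of total variance.

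First I would write
\begin{equation*}
\sum_{i=1}^T f(X_i) W_i \; = \; \sum_{i=1}^T f(X_i)\, m_{\sR}(X_i) \; + \; \sum_{i=1}^T f(X_i)\bigl(W_i - m_{\sR}(X_i)\bigr),
\end{equation*}
and observe that, by the product form of $\sP_{\GIT}$, conditional on the whole trajectory $(X_i)_{i \ge 0}$ the weights $W_i$ are independent with $W_i \sim \sR(X_i,\cdot)$. Consequently, writing $\tilde v_{\sR}(x) = \int (w - m_{\sR}(x))^2\sR(x,\ud w) = \tpi(m_{\sR})^2 v_{\sR}(x)$, the law of total variance gives
\begin{equation*}
\Var\!\left(\sum_{i=1}^T f(X_i) W_i\right) = \bbE\!\left[\sum_{i=1}^T f(X_i)^2 \tilde v_{\sR}(X_i)\right] + \Var\!\left(\sum_{i=1}^T f(X_i)\, m_{\sR}(X_i)\right),
\end{equation*}
since all the conditional cross-terms vanish.

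Next I would normalize by $T\,\tpi(m_{\sR})^2$ and handle each piece separately. Using stationarity of $X_0 \sim \tpi$ under $\tsP$, the first piece becomes $T^{-1}\sum_{i=1}^T \tpi(f^2 v_{\sR}) = \tpi(f^2 v_{\sR})$ for every $T$. For the second piece, Assumption~\ref{ass1} gives $m_{\sR}(x) = \tpi(m_{\sR})\,\pi(x)/\tpi(x)$, so defining $g(x) = f(x)\pi(x)/\tpi(x)$ (which lies in $L_0^2(\cX,\tpi)$ since $\tpi(g) = \pi(f) = 0$), the second piece reduces to
\begin{equation*}
\frac{1}{T}\Var\!\left(\sum_{i=1}^T g(X_i)\right) \;\xrightarrow{T\to\infty}\; \sigma_{\tsP}^2(g),
\end{equation*}
the usual $\tsP$-asymptotic variance of $g$, which is finite by the spectral-gap assumption on $\tsP$ and independent of $\sR$. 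Passing to the limit yields
\begin{equation*}
\sigma^2(f;\tsP,\sR) \; = \; \tpi(f^2 v_{\sR}) + \sigma_{\tsP}^2(g).
\end{equation*}

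Since $\sigma_{\tsP}^2(g)$ depends only on $\tsP$ and $f$, subtracting the identity for $\sR_1$ from the identity for $\sR_2$ cancels that term and yields $\sigma^2(f;\tsP,\sR_1) - \sigma^2(f;\tsP,\sR_2) = \tpi\bigl(f^2(v_{\sR_1} - v_{\sR_2})\bigr)$, as claimed. The main technical point to verify carefully is the interchange of limit and expectation in the first piece and the finiteness of $\sigma_{\tsP}^2(g)$; both follow from Assumption~\ref{ass1} together with the standard CLT apparatus for reversible chains invoked just before the theorem statement (in particular Corollary 21.1.6 of Douc et al., already cited).
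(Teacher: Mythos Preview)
Your proposal is correct and follows essentially the same idea as the paper's proof: both exploit the conditional independence of the $W_i$ given the trajectory $(X_i)$, so that the cross-terms in $\Var\bigl(\sum f(X_i)W_i\bigr)$ depend on $\sR$ only through $m_{\sR}$, which by Assumption~\ref{ass1} is proportional to $\pi/\tpi$ regardless of $\sR$. The paper subtracts the two second moments at each finite $T$ and shows the difference equals $T\,\tpi\bigl(f^2(v_{\sR_1}-v_{\sR_2})\bigr)$ exactly, whereas you go one step further and identify the $\sR$-independent piece explicitly as the ordinary $\tsP$-asymptotic variance of $g=f\pi/\tpi$, yielding the stronger identity $\sigma^2(f;\tsP,\sR)=\tpi(f^2v_{\sR})+\sigma_{\tsP}^2(g)$. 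That is a useful byproduct (it immediately gives Remark~\ref{rmk:opt.iit} and part of Proposition~\ref{prop:conv.bound1}). One small correction: Assumption~\ref{ass1} does not include a spectral-gap hypothesis, so you should not invoke one to justify finiteness of $\sigma_{\tsP}^2(g)$; rather, finiteness of $\sigma^2(f;\tsP,\sR)$ is already implicit in the CLT statement preceding the theorem, and your decomposition then forces $\sigma_{\tsP}^2(g)<\infty$ as well.
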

\begin{proof}
See Section~\ref{sec:appx.gap}. 
\end{proof}

\begin{remark}\label{rmk:opt.iit}
\rm   By Theorem~\ref{th:comparison},  if we have an upper bound on $\sigma^2(f; \tsP, \sR)$, we can use it to bound $\sigma^2(f; \tsP, \sR')$ for any other  $\sR' \in \cR(\tsP, \pi)$.  Further, for any $f \in L_0^2(\cX, \pi) $, $\min_{\sR \in \cR(\tsP, \pi)} \sigma^2(f; \tsP, \sR) $ is achieved when for every $x \in \cX$, $\sR(x, \cdot)$ is a Dirac measure; in other words,  exact calculation of importance weights minimizes the asymptotic variance.   
\end{remark} 

Recall the  definition of  spectral gap. 
\begin{definition}\label{def:gap1}
\rm Given a reversible transition kernel or transition matrix $\sP$  with state space $\cY$ and stationary distribution $\mu$,  let $\sP_0$ denote its restriction to the space 
$$L_0^2(\cY, \mu) = \{ f \colon f \text{ is a mapping from $\cY$ to $\bbR$}, \; \mu(f) = 0, \; \mu(f^2) < \infty \},$$
Let $\mathrm{Spec}(\sP_0)$ denote the spectrum of $\sP_0$ and 
$$\Gap(\sP) \coloneqq 1 - \sup\{ \lambda \colon \lambda \in \mathrm{Spec}(\sP_0) \}$$ 
denote the spectral gap of $\sP$.  
\end{definition}
Note that, when $\sP$ is a transition matrix, $\Gap(\sP) = 1 - \lambda_2(\sP)$, where $\lambda_2(\sP)$ is the second largest eigenvalue of $\sP$. 

\begin{definition}\label{def:gap2}
\rm For a transition rate matrix $\sP^{\cont}$ (``$\cont$'' denotes continuous-time) with eigenvalues $0 = \lambda_1( \sP^{\cont} ) \geq \lambda_2( \sP^{\cont} ) \geq \cdots$, we define $\Gap(\sP^{\cont}) = -\lambda_2( \sP^{\cont} )$. 
\end{definition}

We  prove a technical lemma which shows that the spectral gap of the kernel $\sP_{\GIT}$ defined in~\eqref{eq:def.git.kernel} coincides with $\Gap(\tsP)$ whenever $\Gap(\tsP) < 1$. 
\begin{lemma}\label{lm:gap.tsp} 
Suppose Assumption~\ref{ass1} holds for $(\tsP, \sR)$, and define $\sP_{\GIT}$ as in~\eqref{eq:def.git.kernel}. 
If $\Gap(\tsP) \geq 1$, then $\Gap(\sP_{\GIT}) \geq 1$;
if $\Gap(\tsP) < 1$,  then $\Gap(\sP_{\GIT}) = \Gap(\tsP)$. 
\end{lemma}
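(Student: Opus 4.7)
The plan is to exploit a key structural property of the kernel $\sP_{\GIT}$ in~\eqref{eq:def.git.kernel}: because $w$ is resampled from $\sR(x', \cdot)$ after each move and never used thereafter, a single application of $\sP_{\GIT}$ erases all dependence on $w$. Concretely, let $\mu(\ud x, \ud w) \coloneqq \tpi(x)\sR(x, \ud w)$, which one verifies is stationary under $\sP_{\GIT}$ using reversibility of $\tsP$ with respect to $\tpi$. Then for any $g \in L^2(\mu)$,
\begin{equation*}
    (\sP_{\GIT} g)(x, w) \;=\; \int g(x', w')\, \tsP(x, \ud x') \, \sR(x', \ud w') \;=\; (\tsP \bar g)(x),
\end{equation*}
where $\bar g(x') \coloneqq \int g(x', w')\, \sR(x', \ud w')$. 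Thus $\sP_{\GIT} g$ depends only on $x$.

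My strategy is to diagonalize $\sP_{\GIT}$ on $L_0^2(\mu)$ through the orthogonal splitting into
\begin{equation*}
    V_0 \;=\; \{ g \in L_0^2(\mu) : g(x,w) = \phi(x) \text{ for some } \phi \text{ with } \tpi(\phi) = 0 \}, \qquad V_1 \;=\; \{ g \in L_0^2(\mu) : \bar g \equiv 0 \}.
\end{equation*}
The conditional-expectation decomposition $g = \bar g + (g - \bar g)$ shows that $L_0^2(\mu) = V_0 \oplus V_1$, and $V_0$ is naturally isometric to $L_0^2(\cX, \tpi)$. The displayed identity above implies that $\sP_{\GIT}$ preserves this splitting, acting as $\tsP$ on $V_0$ (after the identification) and as the zero operator on $V_1$. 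Since both subspaces are closed and invariant, the spectrum satisfies $\mathrm{Spec}(\sP_{\GIT}|_{L_0^2(\mu)}) = \mathrm{Spec}(\tsP|_{L_0^2(\cX,\tpi)}) \cup \{0\}$ when $V_1 \neq \{0\}$, and equals $\mathrm{Spec}(\tsP|_{L_0^2(\cX,\tpi)})$ otherwise (the latter case includes naive IIT, where $\sR(x, \cdot)$ is a point mass at $1/Z_h(x)$).

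The two claims of the lemma follow directly from this spectral description. If $\Gap(\tsP) < 1$, then $\sup \mathrm{Spec}(\tsP|_{L_0^2(\cX,\tpi)}) = 1 - \Gap(\tsP) > 0$ strictly dominates any extra $\{0\}$ contribution from $V_1$, yielding $\Gap(\sP_{\GIT}) = \Gap(\tsP)$. If $\Gap(\tsP) \geq 1$, then $\sup \mathrm{Spec}(\tsP|_{L_0^2(\cX,\tpi)}) \leq 0$, so $\sup \mathrm{Spec}(\sP_{\GIT}|_{L_0^2(\mu)}) \leq 0$ and therefore $\Gap(\sP_{\GIT}) \geq 1$. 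The only points requiring a little care are the closedness of $V_0$ and $V_1$ (standard, via the $L^2$-projection onto the conditional expectation) and the decomposition of spectra along closed invariant subspaces (also standard); no real analytical obstacle is anticipated.
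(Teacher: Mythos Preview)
Your argument is correct and gives a cleaner, more structural proof than the paper's. The paper works directly with the variational (Rayleigh-quotient) characterization of the gap: it first notes $\Gap(\sP_{\GIT}) \le \Gap(\tsP)$ by restricting to functions of $x$ alone, and then, writing $m_f(x)=\int f(x,w)\sR(x,\ud w)$ and $v_f(x)=\int (f-m_f)^2\sR(x,\ud w)$, observes that $\bbE[f(X_0,W_0)f(X_1,W_1)]=\bbE[m_f(X_0)m_f(X_1)]$, so the extra $\bbE[v_f(X_0)]$ appears additively in both the numerator and denominator of the Rayleigh quotient; the elementary inequality $(a+b)/(a+c)\ge \min\{1,b/c\}$ then yields $\Gap(\sP_{\GIT})\ge \min\{1,\Gap(\tsP)\}$.

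Your approach instead identifies the full operator: since $\sP_{\GIT}g=\tsP\bar g$ depends only on $x$, the orthogonal decomposition $L_0^2(\mu)=V_0\oplus V_1$ diagonalizes $\sP_{\GIT}$ block-wise as $\tsP\oplus 0$, giving $\mathrm{Spec}(\sP_{\GIT}|_{L_0^2})=\mathrm{Spec}(\tsP|_{L_0^2})\cup\{0\}$ (with the $\{0\}$ absent only when $V_1=\{0\}$). This is more informative---you recover the entire spectrum, not just the gap---and makes the role of the ``auxiliary'' coordinate $w$ transparent: it contributes only a zero eigenvalue. The paper's route is a bit more elementary in that it avoids invoking the invariant-subspace spectral decomposition, but both arguments rest on exactly the same conditional-expectation identity, so there is no substantive analytic difference.
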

\begin{proof}
See Section~\ref{sec:appx.gap}. 
\end{proof}
 
Lemma~\ref{lm:gap.tsp} enables us to obtain an upper bound on $\sigma^2(f)$ using $\Gap(\tsP)$. 
Note that for most real problems of interest, the challenge is to find a nonzero lower bound on $\Gap(\tsP)$ (which in turn yields an upper bound on the asymptotic variance), so we will not consider the case $\Gap(\tsP) \geq  1$. 

\begin{proposition}\label{prop:conv.bound1}
Suppose $\Gap(\tsP) < 1$. For any $f \in L_0^2(\cX, \pi)$ and $\sR \in \cR(\tsP, \pi)$, 
\begin{align*}
\sigma^2(f; \tsP, \sR) \leq  \frac{ 2  \left(  \gamma(f) + \tpi ( f^2 v_{\sR} ) \right)}{  \Gap(\tsP)}, 
\end{align*}  
where $\gamma(f) =  \pi(  f^2 \pi / \tpi ).$ 
Further, there   exists  some $f^* \in L_0^2(\cX, \pi)$ such that $\sigma^2(f^*; \tsP, \sR)  \geq \gamma(f^*) / \Gap(\tsP)$. 
\end{proposition}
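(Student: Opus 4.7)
The plan is to derive an exact formula for $\sigma^2(f;\tsP,\sR)$ as the sum of two pieces, the asymptotic variance of $\tsP$ applied to a reweighted function and a ``weight noise'' term, and then bound each piece separately.

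First I would compute the key decomposition. Since $W_{i+1}$ depends only on $X_{i+1}$ under $\sP_{\GIT}$, the marginal chain $(X_i)$ has transition kernel $\tsP$, and conditional on $(X_i)_{i\ge 0}$ the $W_i$'s are independent with $W_i\sim\sR(X_i,\cdot)$. Setting $\tilde f(x) = f(x)\pi(x)/\tpi(x)$, assumption~\ref{ass1} gives $m_\sR(x) = \tpi(m_\sR)\pi(x)/\tpi(x)$, so $\bbE[f(X_i)W_i\mid X_i] = \tpi(m_\sR)\tilde f(X_i)$ and $\pi(f)=0$ implies $\tpi(\tilde f)=0$. A direct computation with the stationary distribution then yields
\begin{align*}
\Var(f(X_i)W_i) &= \tpi(m_\sR)^2\bigl(\tpi(\tilde f^2) + \tpi(f^2 v_\sR)\bigr), \\
\Cov(f(X_i)W_i, f(X_j)W_j) &= \tpi(m_\sR)^2\, \tpi(\tilde f\, \tsP^{|j-i|}\tilde f), \quad i\ne j.
\end{align*}
Plugging these into~\eqref{eq:def.asymp.var} and using the standard convergence of $\tfrac{1}{T}\sum_{i<j}$ to $\sum_{k\ge 1}$ gives the identity
\begin{equation*}
\sigma^2(f;\tsP,\sR) = \sigma^2_{\tsP}(\tilde f) + \tpi(f^2 v_\sR),
\end{equation*}
where $\sigma^2_{\tsP}(\tilde f) = \tpi(\tilde f^2) + 2\sum_{k=1}^\infty \tpi(\tilde f\,\tsP^k\tilde f)$ is the usual asymptotic variance of the reversible Markov chain $\tsP$ applied to the centered function $\tilde f\in L_0^2(\cX,\tpi)$. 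Note $\tpi(\tilde f^2) = \gamma(f)$.

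For the upper bound, I would invoke the classical spectral-gap bound for reversible chains, namely $\sigma^2_{\tsP}(\tilde f) \le \tfrac{2}{\Gap(\tsP)}\tpi(\tilde f^2) = \tfrac{2\gamma(f)}{\Gap(\tsP)}$ (which follows from the spectral representation $\sigma^2_{\tsP}(\tilde f) = \int \tfrac{1+\lambda}{1-\lambda}\,\mathrm{d}\mu_{\tilde f}(\lambda)$ on $[-1, 1-\Gap(\tsP)]$). Since $\Gap(\tsP) \le 1$ we also have $\tpi(f^2 v_\sR) \le \tfrac{2}{\Gap(\tsP)}\tpi(f^2 v_\sR)$, so adding the two pieces produces the stated bound $\sigma^2(f;\tsP,\sR)\le 2(\gamma(f)+\tpi(f^2 v_\sR))/\Gap(\tsP)$.

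For the lower bound, using $\tpi(f^2 v_\sR)\ge 0$ it suffices to exhibit $f^*\in L_0^2(\cX,\pi)$ with $\sigma^2_{\tsP}(\tilde f^*) \ge \gamma(f^*)/\Gap(\tsP)$. I would pick a function $g\in L_0^2(\cX,\tpi)$ of unit $\tpi$-norm with $\tpi(g\,\tsP g)$ close to $1-\Gap(\tsP)$; by Jensen's inequality applied to the convex function $\lambda\mapsto(1+\lambda)/(1-\lambda)$ and the spectral measure of $g$, we get $\sigma^2_{\tsP}(g) \ge (1+\bar\lambda)/(1-\bar\lambda)$ with $\bar\lambda = \tpi(g\,\tsP g)$, which exceeds $1/\Gap(\tsP)$ once $\bar\lambda$ is sufficiently close to $1-\Gap(\tsP)$ (using $\Gap(\tsP)<1$). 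Setting $f^* = g\,\tpi/\pi$ gives $\tilde f^* = g$ and $\gamma(f^*)=\tpi(g^2)=1$, completing the argument.

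The main obstacles are two routine but slightly delicate pieces: verifying the variance/covariance formulas (this requires carefully using the conditional independence structure of $(W_i)$ given $(X_i)$ together with stationarity), and handling the lower bound when $\cX$ is infinite so that the supremum of the spectrum need not be attained; the Jensen-plus-approximation argument sketched above sidesteps this by approximating the top of the spectrum rather than exhibiting an exact eigenfunction.
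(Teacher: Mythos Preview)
Your proof is correct but follows a genuinely different route from the paper's. You first derive the exact decomposition $\sigma^2(f;\tsP,\sR)=\sigma^2_{\tsP}(\tilde f)+\tpi(f^2 v_\sR)$ with $\tilde f=f\pi/\tpi$, and then bound each summand using the classical spectral representation of the asymptotic variance for the \emph{univariate} chain $\tsP$. The paper instead works directly with the bivariate chain $((X_i,W_i))$: it invokes Lemma~\ref{lm:gap.tsp} to identify $\Gap(\sP_{\GIT})=\Gap(\tsP)$ and then applies a general asymptotic-variance bound (Proposition~22.5.1 in \citet{douc2018markov}) to the bivariate chain, combined with the second-moment identity~\eqref{eq:cond2}. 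For the lower bound, the paper picks an exact eigenvector of $\tsP$ at $\lambda_2$ and uses Remark~\ref{rmk:opt.iit} to drop the weight-noise term, whereas you approximate the top of the spectrum and use Jensen's inequality on $\lambda\mapsto(1+\lambda)/(1-\lambda)$.

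What each buys: your decomposition is more elementary and self-contained---it does not require the bivariate spectral-gap lemma, and in fact your identity $\sigma^2(f;\tsP,\sR)=\sigma^2_{\tsP}(\tilde f)+\tpi(f^2 v_\sR)$ immediately implies Theorem~\ref{th:comparison} as a corollary (take the difference for two choices of $\sR$). Your lower-bound argument is also more robust: it does not assume $\lambda_2$ is an eigenvalue, so it goes through verbatim on countably infinite $\cX$, whereas the paper's proof tacitly assumes an eigenvector exists. The paper's route, on the other hand, is economical in context because Lemma~\ref{lm:gap.tsp} and Theorem~\ref{th:comparison} are proved separately and reused elsewhere.
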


\begin{proof}
See  Section~\ref{sec:appx.gap}. 
\end{proof}

An alternative way to bound the asymptotic variance is to use Theorem~\ref{th:comparison} and fix a ``reference'' choice of $\sR$. A convenient choice is   $\sR_{\exp}$, which is defined by: 
\begin{align*}
 \sR_{\exp}(x, \cdot) \text{ is the exponential distribution with mean } \fC  \pi(x) / \tpi(x), \text{ where $c>0$ is a constant}. 
\end{align*}
The constant $\fC$ in the above definition can be chosen arbitrarily. For $\sR_{\exp}$, we have the following asymptotic variance, which is derived by converting $\sP_{\GIT}$ into a continuous-time Markov chain. 

\begin{proposition}\label{prop:conv.bound2}
Let $ \sR_{\exp} \in \cR(\tsP, \pi)$ be as given above. 
For  $f \in L_0^2(\cX, \pi)$, 
\begin{align*}
\sigma^2(f; \tsP, \sR_{\exp}) \leq  \frac{ 2 \pi(f^2) }{ \Gap (\sP^{\cont} ) }, 
\end{align*}
where the transition rate matrix $\sP^{\cont}$ is defined by 
\begin{align*}
\sP^{\cont}(x, y) = \left\{\begin{array}{cc}
\tsP(x, y) \tpi(x) / \pi(x),  & \text{ if } x \neq y, \\
- \sum_{x' \neq x} \sP^{\cont}(x, x'),  & \text{ if } x = y. 
\end{array} \right.
\end{align*}
\end{proposition}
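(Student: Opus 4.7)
The plan is to identify the importance-tempered chain driven by $(\tsP, \sR_{\exp})$ with the embedded-jump representation of a reversible continuous-time Markov chain whose generator is proportional to $\sP^{\cont}$, and then invoke the standard spectral bound for continuous-time asymptotic variances.

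Concretely, construct $(Y_t)_{t \geq 0}$ by taking the jump chain to be $\tsP$ and the holding time at state $x$ to be an independent exponential with mean $\fC\,\pi(x)/\tpi(x)$. Its infinitesimal generator is
\[
L(x,y) \;=\; \frac{\tsP(x,y)\,\tpi(x)}{\fC\,\pi(x)} \;=\; \frac{\sP^{\cont}(x,y)}{\fC}, \qquad y \neq x,
\]
so $L = \sP^{\cont}/\fC$ and $\Gap(L) = \Gap(\sP^{\cont})/\fC$. Reversibility of $L$ with respect to $\pi$ follows immediately from reversibility of $\tsP$ with respect to $\tpi$: $\pi(x)L(x,y) = \tsP(x,y)\tpi(x)/\fC = \tsP(y,x)\tpi(y)/\fC = \pi(y)L(y,x)$. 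The stationary distribution of $Y_t$ is $\pi$, since it is proportional to $\tpi(x) \cdot m_{\sR_{\exp}}(x) = \fC\pi(x)$.

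Because $Y_s = X_i$ on $s \in [S_{i-1}, S_i)$ with $S_k = \sum_{i=1}^k W_i$, I obtain the pathwise identity $\sum_{i=1}^T f(X_i) W_i = \int_0^{S_T} f(Y_s)\, ds$. Since $\tpi(m_{\sR_{\exp}}) = \fC$, the defining formula~\eqref{eq:def.asymp.var} becomes
\[
\sigma^2(f;\tsP,\sR_{\exp}) \;=\; \lim_{T\to\infty} \frac{1}{T\fC^2}\,\Var\!\left( \int_0^{S_T} f(Y_s)\, ds \right).
\]
The ergodic theorem gives $S_T/T \to \fC$ almost surely, and an Anscombe-type argument combined with the continuous-time CLT for $\int_0^t f(Y_s)\, ds$ converts the random end time $S_T$ into deterministic time, showing that the limit above equals $\fC^{-1} \sigma^2_{\cont}(f; L)$, where $\sigma^2_{\cont}(f; L) = \lim_{t\to\infty} t^{-1}\Var\!\left(\int_0^t f(Y_s)\, ds\right)$.

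Finally, the classical spectral bound for reversible continuous-time Markov chains, obtained by writing $\sigma^2_{\cont}(f;L) = -2\langle f, L^{-1} f\rangle_\pi$ via the spectral theorem applied to $-L$ on $L_0^2(\cX,\pi)$, yields $\sigma^2_{\cont}(f;L) \leq 2\pi(f^2)/\Gap(L)$. Chaining the estimates and observing that the free parameter $\fC$ cancels,
\[
\sigma^2(f;\tsP,\sR_{\exp}) \;\leq\; \frac{1}{\fC}\cdot\frac{2\pi(f^2)}{\Gap(L)} \;=\; \frac{2\pi(f^2)}{\Gap(\sP^{\cont})},
\]
which is the claimed bound. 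The main obstacle I expect is the technical step transferring the continuous-time CLT to a discrete-time sum with random stopping time $S_T$; this can be handled either by a direct Anscombe argument, by a martingale decomposition of $\int_0^t f(Y_s)\,ds$ coupled with uniform integrability of increments, or by mirroring the approach used to establish the essentially identical Theorem~\ref{coro:mh.iit}(iii) for Algorithm~\ref{alg:mh-iit}.
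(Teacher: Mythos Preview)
Your approach is essentially the same as the paper's: both identify $\sum_{i=1}^T f(X_i)W_i$ as the time integral of a continuous-time Markov chain with generator proportional to $\sP^{\cont}$, then invoke the standard spectral bound for continuous-time asymptotic variances (the paper cites Proposition~4.29 of Aldous--Fill for this last step). The paper's write-up is terser; the one point it makes explicit that you do not is the case $\tsP(x,x)>0$, which it handles by noting that a geometric sum of i.i.d.\ exponentials is exponential, so the self-loops of $\tsP$ are absorbed into the holding times and the resulting process still has generator $\sP^{\cont}$---your construction implicitly accommodates this, but it is worth stating.
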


\begin{proof}
See Section~\ref{sec:appx.gap}.  
\end{proof}

\begin{remark}\label{rmk:compare.exp}
\rm By Theorem~\ref{th:comparison}, for  any $\sR \in \cR(\tsP, \pi)$ such that $v_{\sR}(x) \leq \pi(x)^2 / \tpi(x)^2$ for every $x \in \cX$, 
$\sigma^2(f; \tsP, \sR)  \leq \sigma^2(f; \tsP, \sR_{\exp}) $, so the bound in Proposition~\ref{prop:conv.bound2} can be applied as well.
\end{remark}

\subsection{Proofs} \label{sec:appx.gap} 

\begin{proof}[Proof of Theorem~\ref{th:comparison}] 
Let $(X_i)_{i \in \bbN}$ be a Markov chain with transition matrix $ \tsP$ and $X_0 \sim \tpi$. 
Let $W_i^{(j)}   \sim \sR_j(X_i, \cdot)$ and $\widecheck{W}_i^{(j)} = W_i^{(j)} / \tpi ( m_{\sR_j} )$.  
Note that for $j = 1, 2$,  $\bbE[ \widecheck{W}_0^{(j)} \mid X_0 ] = \pi(X_0) / \tpi(X_0)$. By conditioning on $X_0$, we get  
\begin{align}
\bbE[ f( X_0 ) \widecheck{W}_0^{(j)}  ] =  \bbE[  \, \bbE[ f( X_0 ) \widecheck{W}_0^{(j)} \mid X_0 ]  \, ]  = \pi( f ) = 0,  \label{eq:cond1} \\
\bbE[ ( f( X_0 ) \widecheck{W}_0^{(j)} )^2  ] = \bbE \left[  f( X_0 )^2  \left (  \frac{\pi(X_0)^2 }{\tpi(X_0)^2} + v_{\sR_j}(X_0)  \right) \right].   \label{eq:cond2} 
\end{align} 
By~\eqref{eq:def.asymp.var} and~\eqref{eq:cond1}, 
\begin{align*}
\sigma^2(f; \tsP, \sR_j) 
= \lim_{T \rightarrow \infty}  \frac{1}{T }  \bbE \left[   \left(  \sum_{i=1}^T  f( X_i ) \widecheck{W}_i^{(j)}   \right)^2 \right]. 
\end{align*} 
For any $T \geq 1$,  using the conditioning argument again, we find that 
\begin{align*} 
 & \bbE  \left[ \left(   \sum_{i=1}^T  f( X_i ) \widecheck{W}_i^{(1)}   \right)^2 - \left(   \sum_{i=1}^T  f( X_i ) \widecheck{W}_i^{(2)}     \right)^2   \right] \\
 = \;& \sum_{i=1}^T  \bbE  \left[     \left( f( X_i ) \widecheck{W}_i^{(1)}   \right)^2 -     \left( f( X_i ) \widecheck{W}_i^{(2)}   \right)^2   \right] \\
 = \;&  T \, \bbE  \left[     \left( f( X_0 ) \widecheck{W}_0^{(1)}     \right)^2 -     \left( f( X_0 ) \widecheck{W}_0^{(2)}   \right)^2   \right] \\
= \;& T \,  \tpi ( f^2 (v_{\sR_1} - v_{\sR_2}) ), 
\end{align*}
where the last step follows from~\eqref{eq:cond2}.   The asserted result then follows. 
\end{proof} 

\begin{proof}[Proof of Lemma~\ref{lm:gap.tsp}]
Clearly, $\sP_{\GIT}$ is reversible with respect to the distribution $\tpi( \ud x) \sR(x, \ud w)$. 
Let $((X_i, W_i))_{i \in \bbN}$ be a Markov chain with kernel $\sP_{\GIT}$, $X_0 \sim \tpi$ and $W_0 \sim \sR(X_0, \cdot)$. 
We can express  $\Gap(\sP_{\GIT})$ by~\citet[Theorem 22.A.19]{douc2018markov}
\begin{equation}\label{eq:gap}
\Gap(\sP_{\GIT}) = \inf_{f \, :\,  \Var ( f(X_0, W_0)  ) =1} (\bbE [  f(X_0, W_0)^2]  -  \bbE [ f(X_0, W_0) f(X_1, W_1)  ]). 
\end{equation}
Similarly, 
\begin{equation}
\Gap(\tsP)  = \inf_{g \, : \, \Var(g(X_0) ) =1}  (\bbE[ g(X_0)^2 ] -  \bbE[ g(X_0) g(X_1) ]).  
\end{equation}
Since any function $g(x)$ can also be seen as  a bivariate function $f$ such that $f(x, w) = g(x)$, we have $\Gap(\sP_{\GIT}) \leq \Gap(\tsP)$. 

For each $x \in \cX$ and function $f(x, w)$,  define $m_f(x) = \int f(x, w) \sR(x, \ud w)$, and $v_f(x) = \int  ( f(x, w)   - m_f(x) )^2 \sR(x, \ud w) $. 
 By conditioning and using the fact that $W_i$ is conditionally independent of everything else given $X_i$, we find that 
\begin{align*}
 \Var ( f(X_0, W_0) )  = \;&   \bbE[  v_f(X_0)  ] +  \Var(m_f(X_0) ), \\ 
   \bbE [ f(X_0, W_0) f(X_1, W_1)  ]  =\;& \bbE[ m_f(X_0) m_f(X_1) ], \\
  \bbE [  f(X_0, W_0)^2] =\;&   
  \bbE[  v_f(X_0)  ] + \bbE[ m_f(X_0)^2 ]. 
\end{align*} 
For real numbers $a, b, c \geq 0$, we have $(a + b)/(a + c) \geq b/c$ if $b \leq c$, and  $(a + b)/(a + c) \geq 1$ if $b \geq c$. Hence,  
\begin{align*}
\Gap(\sP_{\GIT}) = \;&  \min_f   \frac{ \bbE[  v_f(X_0)  ] +   \bbE[ m_f(X_0)^2 ] -  \bbE[ m_f(X_0) m_f(X_1) ]  }{     \bbE[  v_f(X_0)  ] +  \Var(m_f(X_0) ) } \\
\geq \;&  \min \left\{ 1, \, \min_f   \frac{    \bbE[ m_f(X_0)^2 ] -  \bbE[ m_f(X_0) m_f(X_1) ]  }{     \Var(m_f(X_0) ) } \right\} 
= \min \left\{ 1, \,  \Gap(\tsP)   \right\}.  
\end{align*}  
Hence,  if $\Gap(\tsP) \geq 1$, then $\Gap(\sP_{\GIT}) \geq 1$. 
If $\Gap(\tsP) < 1$, then $\Gap(\sP_{\GIT}) \geq \Gap(\tsP)$. Since we have already shown $\Gap(\sP_{\GIT}) \leq \Gap(\tsP)$, the two spectral gaps must coincide.  
\end{proof}

\begin{proof}[Proof of Proposition~\ref{prop:conv.bound1}]
Let $(X_i)_{i \in \bbN}$ be a Markov chain with transition matrix $ \tsP$ and $X_0 \sim \tpi$. Let $W_i \sim \sR(X_i, \cdot)$. 
Applying Proposition 22.5.1 in~\citet{douc2018markov} to the bivariate Markov chain $((X_i, W_i))_{i \in \bbN}$, we get 
\begin{align*}
T^{-1} \, \Var \left(   \frac{ \sum_{i=1}^T  f( X_i ) W_i  }{  \tpi( m_{\sR} ) } \right) \leq \frac{2}{ \Gap(\sP_{\GIT}) } \bbE\left[ \frac{f( X_0 )^2 W_0^2 }{  \tpi( m_{\sR} )  ^2}   \right]. 
\end{align*}
By Lemma~\ref{lm:gap.tsp}, $\Gap(\sP_{\GIT}) = \Gap(\tsP)$. The asserted upper bound on $\sigma^2(f; \tsP, \sR)$ then follows from~\eqref{eq:cond2}.  

For the second claim,  recall that $\Gap(\tsP) =  1 -  \lambda_2(\tsP)$, where $\lambda_2(\tsP)$ is the second largest eigenvalue of $\tsP$. 
Let  $g$ be an eigenvector of $\tsP$ such that $\tsP g = \lambda_2 g$, which must satisfy $\tpi(g) = 0$.  By~\citet[Chap. 11.2.3]{bremaud2013markov},  
\begin{align*}
 \lim_{T \rightarrow \infty}  T^{-1} \, \Var \left(   \sum_{i=1}^T  g(X_i) \right) = \frac{1 + \lambda_2 (\tsP)}{1 - \lambda_2 (\tsP)} \tpi( g^2). 
\end{align*}
Define $f^*(x) = g(x) \tpi(x) / \pi(x)$. By Remark~\ref{rmk:opt.iit}, $\sigma^2(f^*; \tsP, \sR_{\dirac} ) $ is minimized when $\sR(x, \cdot)$ is a Dirac measure for every $x$. Let $\sR_{\dirac}  \in \cR(\tsP, \tpi)$ be such a choice of $\sR$. 
Then,  
\begin{align*}
\sigma^2(f^*; \tsP, \sR)  \geq 
\sigma^2(f^*; \tsP, \sR_{\dirac} ) 
=\;&  \lim_{T \rightarrow \infty}  T^{-1} \, \Var \left(   \sum_{i=1}^T  g(X_i) \right) \geq  \frac{\tpi( g^2) }{ \Gap(\tsP)} . 
\end{align*}
But $\tpi( g^2) = \tpi(  (f^*)^2 \pi^2 / \tpi^2 )  = \gamma(f^*)$.  
\end{proof}

\begin{proof}[Proof of Proposition~\ref{prop:conv.bound2}] 
By~\eqref{eq:def.asymp.var}, 
\begin{align*}
\sigma^2(f; \tsP, \sR_{\exp})  = \lim_{T \rightarrow \infty } \frac{1}{T} \Var\left( \sum_{i=1}^T f(X_i) W_i^* \right)
\end{align*}
where  $(X_i)_{i \in \bbN}$ is a Markov chain with transition matrix $ \tsP$ and $X_0 \sim \tpi$, and given $X_i$, $W_i^*$ follows an exponential distribution with mean $\pi(X_i) / \tpi(X_i)$. 
If $\tsP(x, x) = 0$ for all $x \in \cX$, it is easy to see that $\sum_{i=1}^T f(X_i) W_i^* / T$ is   the time average of a continuous-time Markov chain with transition rate matrix $\sP^{\cont}$. 
If $\tsP(x, x) \neq 0$ for some $x$, this interpretation still holds, and one can prove it using the fact that a geometric sum of i.i.d. exponential random variables is still exponential. 
The asserted bound then follows from Proposition 4.29 of~\citet{aldous2002reversible}. 
\end{proof}

\clearpage 
\newpage 

\section{Proofs for the Main Text}\label{app:proof}

\begin{proof}[Proof of Lemma~\ref{lm:mtit}]
Let $\cS = \{x', y_2, \dots, y_m\}$ and $\cS' = \{x, y'_2, \dots, y'_m\}$. By~\eqref{eq:mt.transition}, 
    \begin{align*}
    & \pi(x) \, Z(x, \cS) \, q(x', y_2, \dots, y_m \mid x) \, p_{\mt}( (x, \cS), (x', \cS') )  \\ 
   = \;& \pi(x) \, Z(x, \cS) \, q(x' \mid x) \, q(  y_2, \dots, y_m \mid x,  x') \, p_{\mt}( (x, \cS), (x', \cS') ), \\  
   = \;& (m-1)! \, \pi(x) \, q(x' \mid x)  \, \alpha(x, x') \,  q(  y_2, \dots, y_m \mid x,   x')  \,  q ( y'_2, \dots, y'_m \mid x',   x).  
\end{align*} 
Since $\alpha$ must satisfy~\eqref{eq:alpha}, $\pi(x) q(x' \mid x)  \alpha(x, x')$ is symmetric in $x, x'$. It then follows that $p_{\mt}$ satisfies the detailed balance condition with respect to $\pi_{\mt}$, proving the claim. 
\end{proof}

\begin{proof}[Proof of Lemma~\ref{lm:pm}]
Denote $v = (u_x, u_1, \cdots, u_N)$ and $v' = (u_{x'}, u_1', \cdots, u_{N'})$. Recall that we assume $x' = y_1$, $x = y_1'$, $u_x = u_1'$, and $u_1 = u_{x'}$. To prove the claim, observe that
\begin{align*}
& \pi_{\pseu}(x,  v) p_{\pseu}((x,  v), (x',   v')) \\
\propto\;& u_x \pi(x) g(u_x|x) Z(x,  v) \left\{ \prod_{i=1}^N g(u_i | y_i) \right\} \frac{\hat \alpha(x, y_1; u_x, u_1)}{Z(x, v)} \left\{ \prod_{i=2}^{N'} g(u_i' | y_i') \right\} \\
=&  u_x \pi(x)  \hat \alpha(x, y_1; u_x, u_1) 
 \left\{\prod_{i=1}^N g(u_i | y_i)\right\} \left\{\prod_{i=1}^N g(u_i' | y_i')\right\}.
\end{align*}
Since $\hat \alpha(x, y_1; u_x, u_1)  $ can be expressed by 
\begin{equation}
    \hat \alpha(x, y_1; u_x, u_1)  
    = \hat \alpha(x, x'; u_x, u_{x'})  
    = h \left( \frac{\hat{\pi}(x')}{\hat{\pi}(x)} \right)
    = h \left( \frac{ u_{x'} \pi(x')}{u_x \pi(x)} \right), 
\end{equation}
and $h$ is a balancing function, we get 
 $   u_x \pi(x)  \hat \alpha(x, x'; u_x, u_{x'})   
=  u_{x'} \pi(x') \hat \alpha(x', x; u_{x'}, u_{x}).$ 
It follows that $p_{\pseu}$ satisfies the detailed balance condition with respect to $\pi_{\pseu}$, thereby proving the claim.   
\end{proof}

\begin{proof}[Proof of Lemma~\ref{lemma:mh-iit}]
Suppose $y \in \cN_x$ (which implies $y \neq x$). The transition probability of the MH-IIT algorithm moving from state $x$ to state $y$ is given by:
\[\tilde{p}(x, y) = \rho \frac{\alpha(x, y)}{Z(x)} + (1 - \rho)\frac{\alpha(x, y)  }{\sum_{x'\in\cN_x}\alpha(x, x') } = \frac{\alpha(x, y)}{Z(x)}.\]
Given that $\alpha(x, y) = h\bigl(\pi(y) / \pi(x)\bigr)$ where $h(\cdot)$ is a balancing function, it follows that 
\[\pi(x)Z(x)\tilde p(x, y) = \pi(x)h\left( \frac{\pi(y) }{\pi(x)} \right)\]
is symmetric in $x$ and $y$. Consequently, $\tilde{p}$ satisfies the detailed balance condition with respect to $\tilde{\pi}$, thereby proving the claim.
\end{proof}

\begin{proof}[Proof of Lemma~\ref{lemma:exp-weight}]
Let $p_Z \coloneqq Z / N$ be the probability that an MH update in Algorithm~\ref{alg:mix.weight} is accepted. Whenever we perform an IIT update, we can think of $p_Z$ as the expected number of remaining MH updates needed to stop (if IIT updates are not allowed). Since geometric distribution is memoryless, this shows that $\bbE[W] = 1 / Z$. 

For a  more straightforward proof, let $I$ denote the number of iterations in Algorithm~\ref{alg:mix.weight} and observe that the first $ I -1$ iterations have to be MH updates. Further, 
\begin{align*}
& \bbP(I  = k  \text{ and the last iteration is an MH update} ) = (1 - \rho )^{k} (1 - p_Z)^{k-1} p_Z  , \\ 
& \bbP(I  = k  \text{ and the last iteration is an IIT update} ) = (1 - \rho )^{k-1} (1 - p_Z)^{k-1} \rho.  
\end{align*}
Thus, the expectation of $W$ can be calculated by  
\begin{align*}
\bbE[ W ] =\;&  \frac{1}{N}\sum_{k = 1}^{\infty} \left\{ (1 - \rho)^{k-1}(1 - p_Z)^{k-1}\rho \left(p_Z^{-1} + k - 1\right)  
 +  (1 - \rho)^k (1 - p_Z)^{k-1} p_Z k \right\} \\
=\;& \frac{1}{N}\sum_{k=1}^{\infty}(1 - \rho)^{k-1}(1 - p_Z)^{k - 1}\biggl\{\rho\left(p_Z^{-1} - 1\right)  
 + \bigl(\rho  + (1 - \rho) p_Z \bigr)k\biggr\} \\
=\;&\frac{1}{N}\left\{\frac{\rho p_Z^{-1} - \rho}{1 - (1 - \rho)(1 - p_Z)} + \frac{\rho + p_Z - \rho p_Z}{[1 - (1-\rho)(1 - p_Z )]^2}\right\}\\
=\;& 1 / Z,
\end{align*}
where in the third step we have used  the identity, 
\begin{equation}\label{eq:sum-fact}
\sum_{k=1}^{\infty}k c^{k-1} = \frac{\rm{d}}{\rm{d} c}\sum_{k=1}^{\infty}c^k =\frac{1}{(1 - c)^2}, \quad \quad \forall \, c \in [0, 1).  
\end{equation}

Similarly, 
\begin{align*}
\Var( W  ) =\;& \sum_{k = 1}^{\infty}\biggl\{ (1 - \rho)^{k-1}(1 - p_Z)^{k-1}\rho\bigl((k - 1) / N\bigr)^2 + (1 - \rho)^k (1 - p_Z)^{k-1} p_Z \left(k / N - Z^{-1}\right)^2\biggr\}\\
=\;& \frac{1}{N^2} \sum_{k = 1}^{\infty}\biggl\{ (1 - \rho)^{k-1}(1 - p_Z)^{k-1}\rho (k - 1)^2 + (1 - \rho)^k (1 - p_Z)^{k-1} p_Z \left(k - p_Z^{-1}\right)^2\biggr\}\\
=\;& \frac{1}{N^2}\sum_{k=1}^{\infty}(1-\rho)^{k-1}(1 - p_Z)^{k-1}\left\{[\rho + (1-\rho)p_Z]k^2 - 2k + \rho + (1-\rho)p_Z^{-1}\right\}\\
=\;&\frac{(1 - p_Z)(1 - \rho)}{Z^2 + Z\rho(N - Z)},
\end{align*}
where  the last step follows from~\eqref{eq:sum-fact} and $\sum_{k=1}^{\infty}(k+1)k c^{k-1} = 2 / (1-c)^3$ for $c \in [0, 1)$.  

For $\bbE[K]$, by the same argument we get 
\begin{align*}
\bbE[K]  =\;& \sum_{k = 1}^{\infty} (1 - \rho)^{k-1}(1 - p_Z)^{k-1}\rho \left(N + k - 1\right) + \sum_{k=1}^{\infty} (1 - \rho)^k (1 - p_Z)^{k-1} p_Z k\\
=\;&\frac{\rho (N - 1)}{1 - (1 - \rho)(1 - p_Z)} + \frac{\rho + p_Z - \rho p_Z}{\bigl(1 - (1-\rho)(1 - p_Z)\bigr)^2}\\
=\;&\frac{\rho(N - 1) + 1}{\rho(1 - p_Z) + p_Z}, 
\end{align*}
which completes the proof. 
\end{proof}

\begin{proof}[Proof of Corollary~\ref{coro:cost.bound}]
Using Lemma~\ref{lemma:exp-weight}, for any $x$, we obtain 
\begin{align*}
\bbE[K(x)] = \frac{(a + 1)|\cN_x| - a}{a(1 - Z(x) / |\cN_x|) + Z(x)}.
\end{align*}
Hence, 
\begin{align*}
\bbE[K(x)] & \leq (a + 1) \frac{|\cN_x|}{Z(x)}, \text{ and } 
\bbE[K(x)]  \leq \frac{(a+1) |\cN_x|}{a + (1 - a / |\cN_x|)Z(x)} \leq \frac{a + 1}{a}|\cN_x|,
\end{align*}
where we have used $a \leq \min_x |\cN_x|$ in the second inequality. This yields 
\begin{align*}
\kappa_{\rho} & \leq \sum_{x\in\cX} \tilde \pi(x) (a + 1) \frac{|\cN_x|}{Z(x)} = (a+1)\kappa_0,\\
\kappa_{\rho} & \leq \sum_{x\in\cX} \tilde \pi(x) \frac{a + 1}{a}|\cN_x| = \frac{a+1}{a}\kappa_1.
\end{align*}
The asserted result then follows. 
\end{proof}

\begin{proof}[Proof of Theorem~\ref{coro:mh.iit}]
We use the notation introduced in Section~\ref{sec:theory} in \SUPP. Let $\sR_{\rho}$ be the transition kernel for generating importance weights used in MH-IIT; that is, $\sR_{\rho}(x, \cdot)$  is the distribution of the output $w$ from Algorithm~\ref{alg:mix.weight} with input state $x$. Since $\tilde \sP$ is reversible with respect to $\tilde \pi \propto \pi Z$, Lemma~\ref{lemma:exp-weight} shows that $\sR_{\rho} \in \cR(\tilde \sP, \pi)$ for any choice of $\rho$ and thus $(\tilde \sP, \sR_{\rho})$ satisfies Assumption~\ref{ass1}. Using the notation of Section~\ref{sec:theory} in \SUPP, we can write $\sigma^2_{\rho} = \sigma^2(f; \tilde \sP, \sR_{\rho})$.  
Let $m_\rho = m_{\sR}$, $v_\rho = v_{\sR}$ with $\sR = \sR_{\rho}$. That is, 
\begin{align}
    m_{\rho}(x) = \int   w  \, \sR_\rho(x, \ud w),  \quad v_{\rho}(x) =  \frac{1}{ \tpi ( m_{\rho} )^2 } \int   \left(w - m_{\rho}(x) \right)^2 \sR_\rho(x, \ud w) 
\end{align} 
By Lemma~\ref{lemma:exp-weight},  we have 
\begin{align*}
& m_\rho(x) = \frac{1}{Z(x)}, \\
& \tilde \pi( m_\rho) = \sum_x  \frac{ \tilde \pi(x) }{Z(x)} = \frac{1}{ \pi(Z) }, \\
& v_\rho(x) = \frac{\pi(Z)^2 (1 - Z(x) / |\cN_x|)(1 - \rho(x))}{Z(x)^2 + \rho(x)Z(x)(|\cN_x| - Z(x))}.
\end{align*}

For every $x$, $v_\rho(x)$ is monotone decreasing in $\rho(x) \in [0, 1]$.  Hence, 
\begin{equation*}\label{eq:Z0}
0 = v_1(x) \leq    v_\rho(x)  \leq   v_0(x), 
\end{equation*} 
where $v_1, v_0$ correspond to $\rho \equiv 1$ and $\rho \equiv 0$, respectively. By Theorem~\ref{th:comparison}, we get 
\begin{align*}
\sigma^2_{1}(f) =  \sigma^2(f; \tilde \sP, \sR_{1}) \leq \sigma^2(f; \tilde \sP, \sR_{\rho})   \leq \sigma^2(f; \sP, \sR_{0}) = \sigma^2_{0}(f) . 
\end{align*} 
A direct calculation gives 
\begin{align*}
    v_0(x) =  \frac{\pi(Z)^2 (1 - Z(x) / |\cN_x| ) }{Z(x)^2 } \leq \frac{ \pi(Z)^2   }{ Z(x)^2 }. 
\end{align*}

Using Theorem~\ref{th:comparison} again, we find that 
\begin{align*}
\sigma^2_{h, 0}(f) - \sigma^2_{h, 1}(f) =\;& \tilde \pi ( f^2 v_0)  
\leq  \sum_{x} f^2(x) \tilde \pi(x)  \frac{ \pi(Z)^2 }{ Z(x)^2 } =  \pi(Z)    \pi\left(  \frac{ f^2  }{Z}  \right),  
\end{align*}  
where we have used $\tilde \pi = \pi Z  / \pi(Z)$. 
This proves part (i).

To prove part (ii), we first use $\tilde \pi = \pi Z  / \pi(Z)$ again to obtain 
$$  \pi(Z)    \pi\left(  \frac{ f^2  }{ Z  }  \right) = \pi\left( \frac{f^2 \pi}{ \tilde \pi} \right),$$ 
which verifies that the definition of $\gamma(f)$ is the same as in Proposition~\ref{prop:conv.bound1}.  
Since $v_1 \equiv 0$, by Proposition~\ref{prop:conv.bound1}, we know that 
there exists $f^*$ such that 
\begin{align*} 
\frac{  \gamma(f^*)}{  \Gap( \tilde \sP)} \leq \sigma^2_{1}(f^*)   \leq  \frac{ 2 \gamma(f^*)}{  \Gap( \tilde \sP)}.
\end{align*}
By part (i), $    \sigma^2_{1}(f^*)   \leq \sigma^2_{\rho}(f^*)   \leq   \sigma^2_{0}(f^*)  \leq \sigma^2_{1}(f^*) + \gamma(f^*)$.
Since $\Gap(\tilde \sP) < 1$, we obtain that 
$\sigma^2_{\rho}(f^*)  \leq2 \sigma^2_{1}(f^*)$. 
This proves part (ii).  

By Remark~\ref{rmk:compare.exp}, for any $\rho$, we have $\sigma^2(f; \tilde \sP, \sR_{\rho}) \leq \sigma^2(f; \tilde \sP, \sR_{\rm{exp}})$ where $\sR_{\rm{exp}}$ is the kernel such that $\sR_{\rm{exp}}(x, \cdot)$ is an exponential distribution for every $x$. Part (iii) then follows from Proposition~\ref{prop:conv.bound2}. 
\end{proof}

\clearpage 
\newpage

\section{Simulation Studies for Section~\ref{sec:mtm}}\label{app:mt-it}
\subsection{Additional   Results for MT-IT with Multivariate Normal Targets}\label{app:mt-it-normal}
For the multivariate normal example discussed in Section~\ref{sec:mtm-sim}, we present additional results for estimation  using either the last $10\%$ of the samples or the entire dataset in Figure~\ref{fig:mtm.normal.supp}. This figure includes box plots similar to those of Figure~\ref{fig:mtm.normal}, comparing estimates of $M_{2,p}$ derived from different sample sizes. The estimates obtained from the last $10\%$ of the samples closely align with those shown in Figure~\ref{fig:mtm.normal}, where the last $50\%$ of the samples were used. When utilizing the full MCMC trajectories, only the MT-IT method with $h(r) = \sqrt{r}$ produces unbiased estimates. The biases observed in the four MTM methods are notably substantial.

\begin{figure}[!h]
    \centering
    \includegraphics[width=0.75\linewidth]{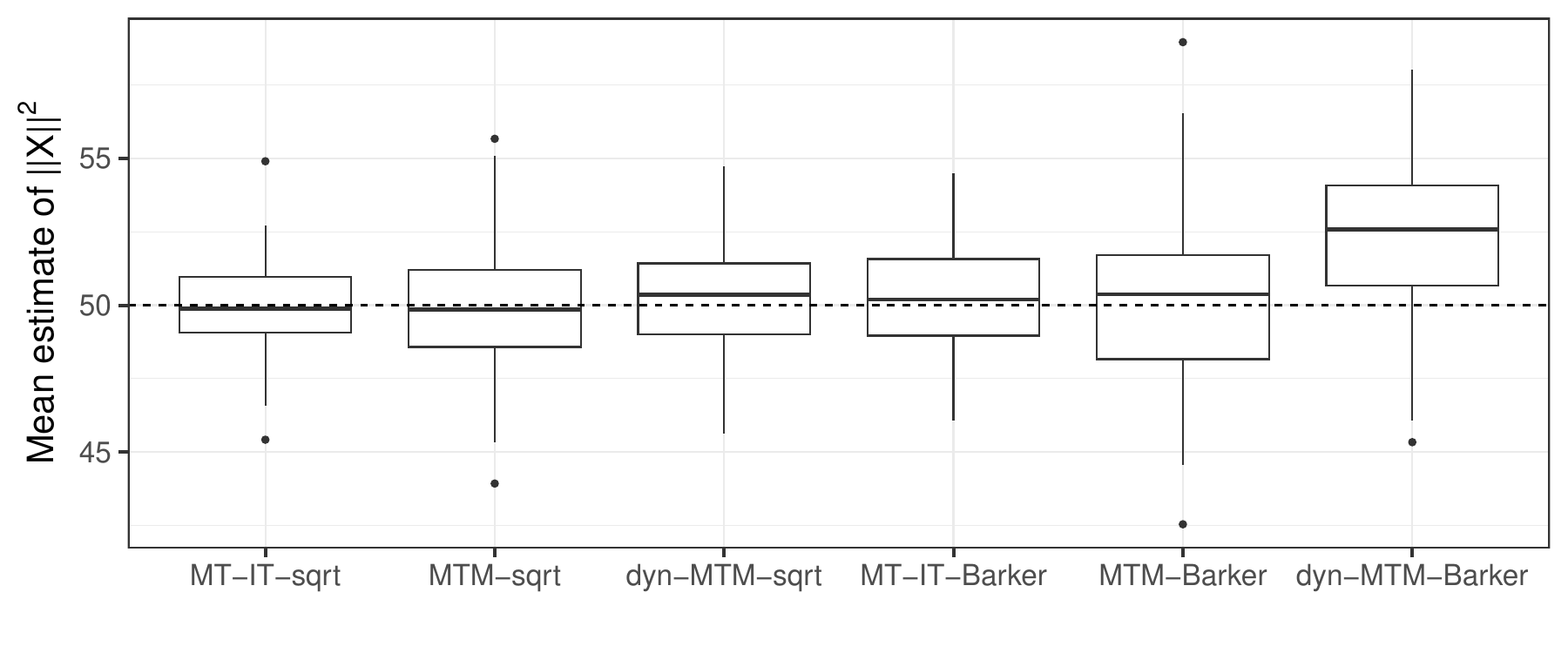}\\
    \includegraphics[width=0.75\linewidth]{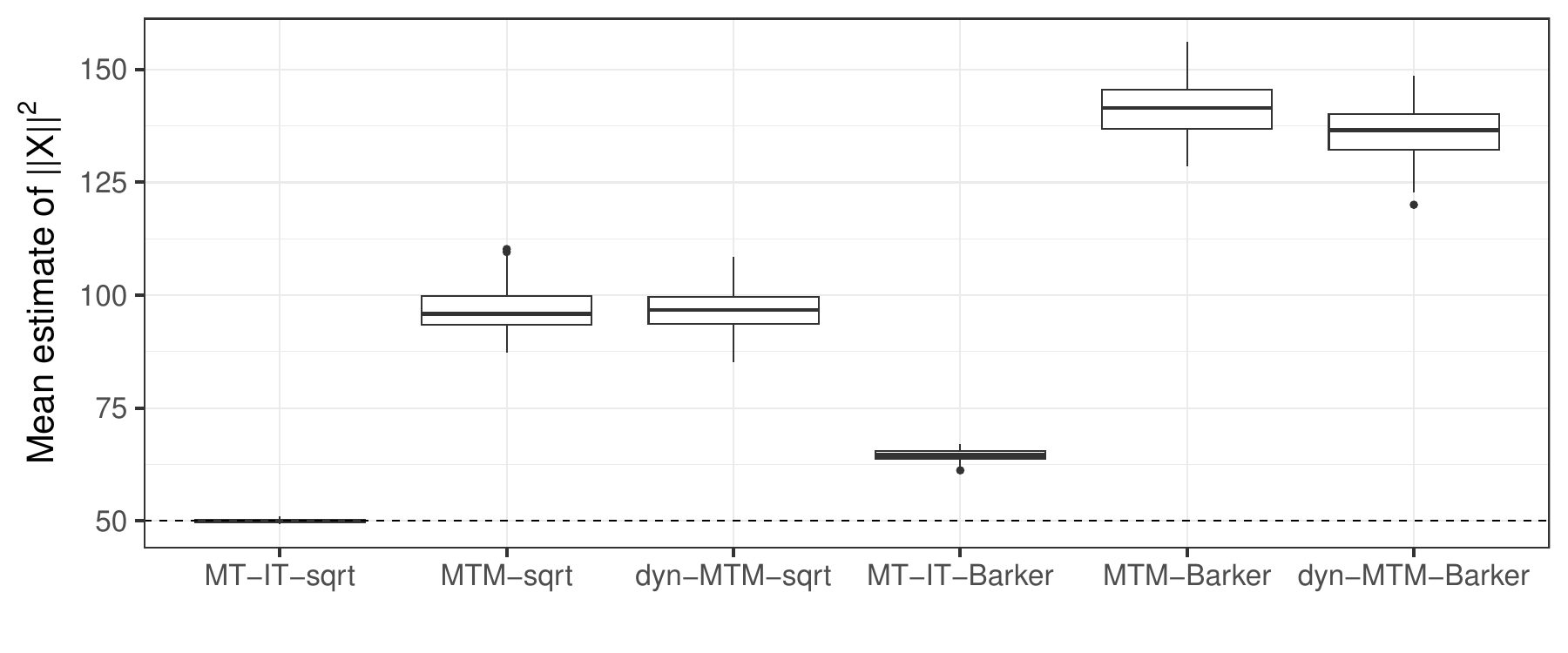}\\
    \caption{ 
    Box plot of the estimate of $M_{2, p}$ over 100 runs with $p = 50$.  
    Top: estimates are calculated by using the last 10\% samples.
    Bottom: estimates are calculated by using all samples. 
    MT-IT: Algorithm~\ref{alg:mtm-iit} with fixed proposal; MTM: MTM with fixed proposal~\citep{gagnon2022improving}; dyn-MTM: adaptive MTM used in~\citep{gagnon2022improving}; sqrt: $h(r) = \sqrt{r}$; Barker: $h(r) = r / (1 + r)$, where $h$ is the balancing function introduced in Remark 1. 
    }\label{fig:mtm.normal.supp} 
\end{figure}
 
\subsection{A Real Data Example on Record Linkage}\label{sec:real} 
In this section, we consider the application of the MT-IT algorithm to a bipartite record linkage problem studied in \citet{zanella2020informed}. 
Let $\bA \in \bbR^{n_1 \times K}$ and $\bB \in \bbR^{n_2 \times K}$ denote two data sets. Let $\ba_i$ denote the $i$-th row of $\bA$, representing the $i$-th record in the first data set, and let $\bb_i$ denote the $i$-th row of $\bB$, representing the $i$-th record in the second data set. The same $K$ covariates are observed for both data sets. 
Let $M_i = j$ if $\ba_i$ is matched with $\bb_j$, and let $M_i = 0$ if $\ba_i$ is not matched with any record in the second data set (``matching'' means that we believe the two records are duplicates). 
Assume that any record is matched with at most one record in the other data set. Hence, we can define $M^{-1}_j = i$ if $M_i = j$, and let $M^{-1}_j = 0$ if $\bb_j$ is not matched with any record in the first data set. The parameter of interest is the $n_1$-dimensional vector $\bM = (M_1, \dots, M_{n_1}) \in \{0, 1, \dots, n_2\}^{n_1}$. 

Fix an arbitrary $\bM$ such that each record in one data set is matched with at most one record in the other data set. We construct a neighborhood $\cN_{\bM} = \{ \bM^{ij} \colon i = 1, \dots, n_1, \; j = 1, \dots, n_2 \}$ with cardinality $n_1 n_2$, where $\bM^{ij}$ is obtained from $\bM$ by one of the following moves:
\begin{itemize}
    \item Add move: If $M_i = 0$ and $M_j^{-1} = 0$, we set $M^{ij}_i = j$.
    \item Delete move: If $M_i = j$, we set $M^{ij}_i = 0$.
    \item  Single switch move (first type): If $M_i = 0$ and $M_j^{-1} = i'$ for some $i' \neq i$, we set $M^{ij}_i = j$ and $M^{ij}_{i'} = 0$.
    \item Single switch move (second type): If $M_i = j'$ for some $j' \neq j$ and $M_j^{-1} = 0$, we set $M^{ij}_i = j$.
    \item Double switch move: If $M_i = j'$ for some $j' \neq j$ and $M_j^{-1} = i'$ for some $i' \neq i$, we set $M^{ij}_i = j$ and $M^{ij}_{i'} = j'$.
\end{itemize}
It is not difficult to check that each pair $(i, j)$ yields a unique $\bM^{ij}$, and thus $|\cN_{\bM}| = n_1 n_2$. 

The goal is to generate samples from the joint posterior distribution $\pi(\bM, p_{\mathrm{match}}, \lambda)$ considered in \citet{zanella2020informed}, where $p_{\mathrm{match}}$ and $\lambda$ are two continuous hyperparameters. A standard approach is to use the Metropolis-within-Gibbs sampler, which alternates between direct sampling from $\pi(p_{\mathrm{match}}, \lambda \mid \bM)$ and an M--H scheme targeting $\pi(\bM \mid p_{\mathrm{match}}, \lambda)$. 

We implement two methods for updating $\bM$: uninformed M--H and the blocked informed M--H scheme of \citet{zanella2020informed} with the number of tries set to 100. We refer to the two resulting Metropolis-within-Gibbs samplers as uninformed and informed M--H, respectively. To mimic the dynamics of the informed M--H sampler, we devise an MT-IT scheme using a mixture proposal distribution: given the current state $(\bM, p_{\mathrm{match}}, \lambda)$, with probability $\delta$ we propose a new value for $(p_{\mathrm{match}}, \lambda)$ from its conditional distribution given $\bM$, and with probability $1 - \delta$ we propose a new value for $\bM$ by uniformly sampling from $\cN_{\bM}$. We set $\delta = 0.99$ and the number of tries $m$ in the MT-IT algorithm to 100. Let $(\bM, p_{\match}, \lambda)$ denote the current state, and use $(\bM', p'_{\match}, \lambda')$ to denote the next. More details of the samplers are given as follows.

\medskip
\noindent \textbf{Uninformed M--H.} This refers to the standard Metropolis-within-Gibbs sampler. Each iteration has two steps. First, we draw $(p_{\match}', \lambda')$ from the conditional posterior distribution (which is known) given $\bM$. Second, we propose $\tilde{\bM}$ by uniform sampling from the neighborhood $\cN_{\bM}$ and accept it with probability $\min\{1, \pi(\tilde{\bM} \mid p_{\match}', \lambda') / \pi(\bM \mid p_{\match}', \lambda') \}$. Set $\bM' = \tilde{\bM}$ if the proposal is accepted, and $\bM' = \bM$ otherwise. Each iteration involves two posterior calls since we need to evaluate both $\pi(\bM \mid p_{\match}', \lambda')$ and $\pi(\tilde{\bM} \mid p_{\match}', \lambda')$.

\medskip
\noindent \textbf{Informed M--H.} This refers to the Metropolis-within-Gibbs sampler where $\bM$ is updated by a locally balanced informed proposal. In each iteration, we first draw $(p_{\match}', \lambda')$ from the conditional distribution given $\bM$. Next, we randomly select a subset of indices $\cI \subset \{1, 2, \dots, n_1\}$ and $\cJ \subset \{1, 2, \dots, n_2\}$ and apply an informed proposal with balancing function $h(r) = \sqrt{r}$ to the neighborhood subset $\{ \bM^{ij} \colon i \in \cI, \, j \in \cJ \} \subset \cN_{\bM}$, where $\bM^{ij}$ is as defined above. This blocking strategy was used in \citet{zanella2020informed} to reduce the computational cost of  informed MH updates. Let $\tilde{\bM}$ be the proposed value. To calculate the acceptance probability, we need to evaluate both $\pi(\bM^{ij} \mid p_{\match}', \lambda')$ and $\pi(\tilde{\bM}^{ij} \mid p_{\match}', \lambda')$ for each $i \in \cI$ and $j \in \cJ$. In our implementation, we set $|\cI| = |\cJ| = 10$, so each iteration involves $2 |\cI| |\cJ| + 1 = 201$ posterior calls.

\medskip
\noindent  \textbf{MT-IT.} Letting $x = (\bM, p_{\match}, \lambda)$, we propose $m$ candidate moves, denoted by $(x_1, \dots, x_m)$, such that each $x_i$ is proposed independently as follows. With probability $\delta$, we set $x_i = (\bM, \tilde{p}_{\match}, \tilde{\lambda})$ where $(\tilde{p}_{\match}, \tilde{\lambda})$ is drawn from the conditional distribution given $\bM$. With probability $1 - \delta$, we set $x_i = (\tilde{\bM}, p_{\match}, \lambda)$ where $\tilde{\bM}$ is sampled from $\cN_{\bM}$ uniformly at random. We set $\delta = 0.99$ and $m=100$ so that the MT-IT has a similar dynamic to the informed M--H. Each iteration of MT-IT involves $m = 100$ posterior calls.

\medskip 

The \texttt{italy} package contains the record data of 20 regions. To illustrate the performance of samplers, we  focus on four regions: region 1, region 4, region 18, and region 20. Region 4 and region 18 have relatively small sample sizes, with $n_1 = 310$ and $n_2 = 277$ for region 4 and $n_1 = 294$ and $n_2 = 292$ for region 18. Meanwhile, region 20 has a moderate sample size of $n_1 = 637$ and $n_2 = 641$, and region 1 has a large sample size of $n_1 = 1285$ and $n_2 = 1121$. 
The results are given in Figure~\ref{fig:linkage}. In scenarios with small sample sizes (specifically in regions 4 and 18), all three sampling methods seem to have converged after approximately $8 \times 10^5$ posterior evaluations. However, MT-IT can move to high-posterior regions more quickly than the other two. It's worth noting that although the performance of uninformed and informed M--H is very similar, the number of iterations for informed M--H is much smaller, given the fixed total number of posterior calls. When parallel computing is available, both MT-IT and informed M--H can be much more efficient. as the sample size increases (notably in regions 20 and 1), the three samplers are not able to converge after $8 \times 10^5$ posterior calls. Nevertheless, we still observe that the log-posterior curve of MT-IT increases much faster than that of the other two samplers. Interestingly, it can be seen that the number of matched pairs does not necessarily reflect the efficiency of the sampler, especially when the sampler has not converged (this statistic was also used in \citet{zanella2020informed}). Compared to M--H algorithms, MT-IT is more likely to visit some $\bM$ with fewer matched pairs but higher posterior probability in the early stages of sampling.

\begin{figure}
    \centering
    \includegraphics[width=0.48\linewidth]{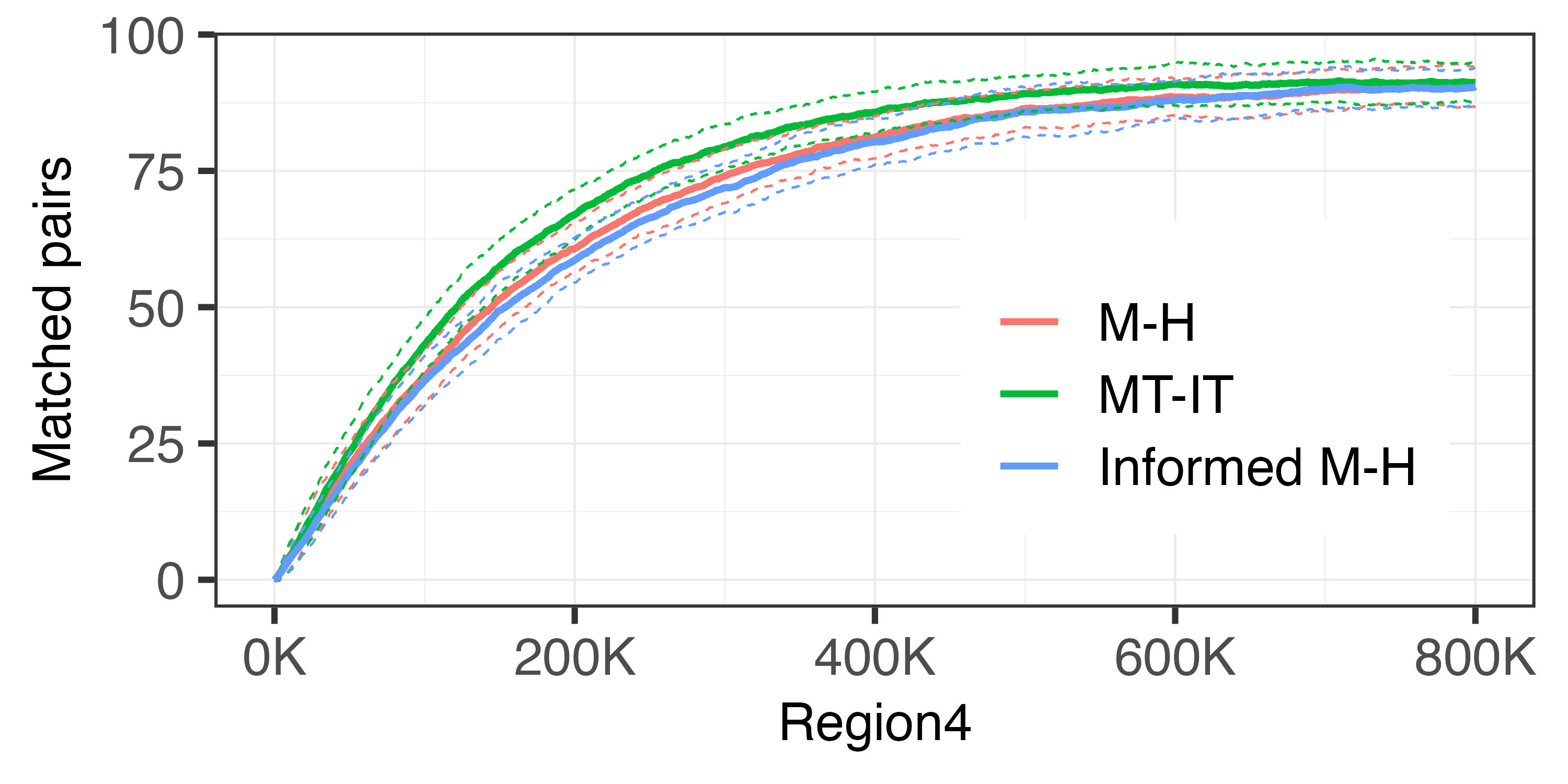}
    \includegraphics[width=0.48\linewidth]{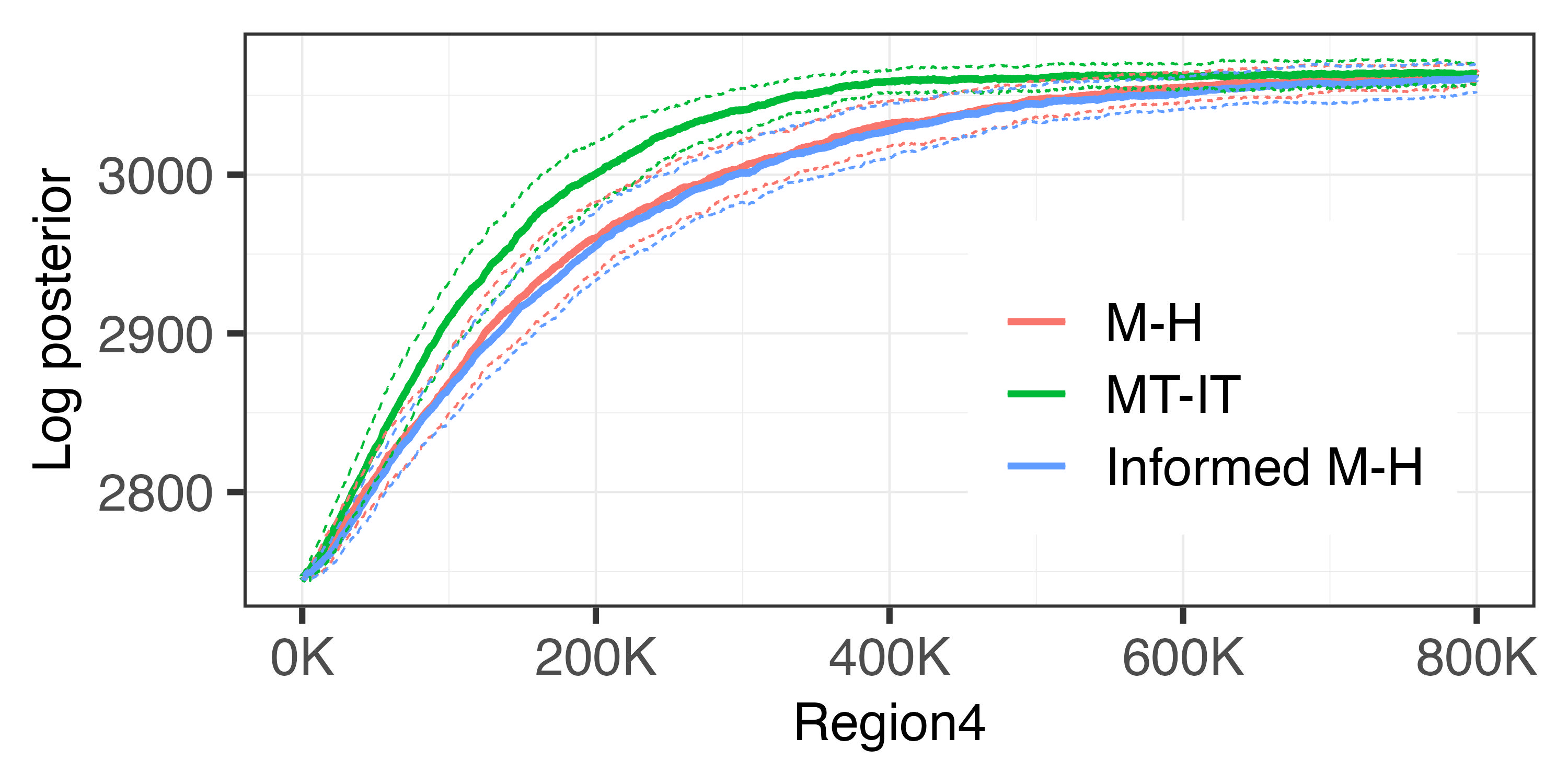}\\
    \includegraphics[width=0.48\linewidth]{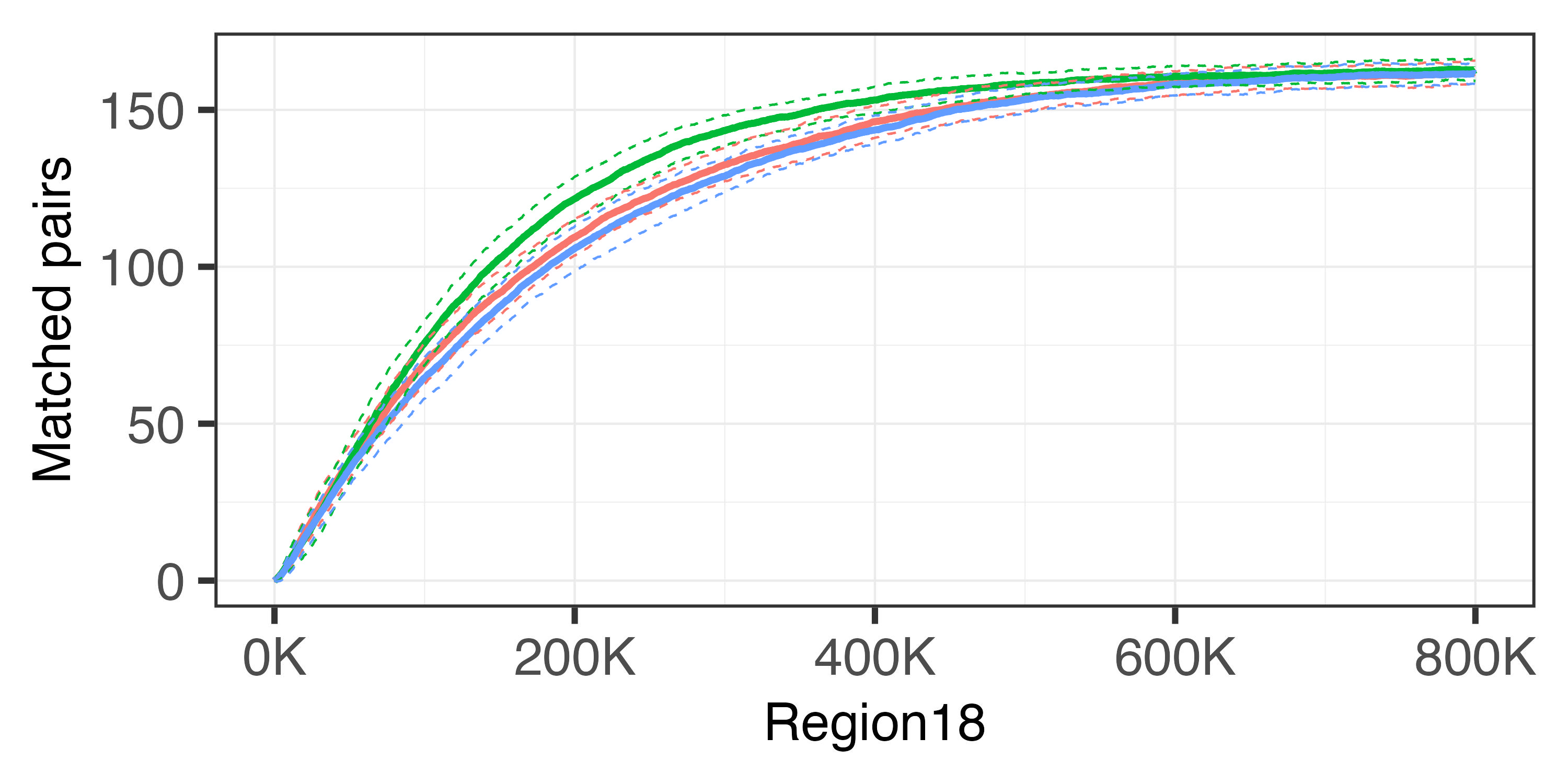}
    \includegraphics[width=0.48\linewidth]{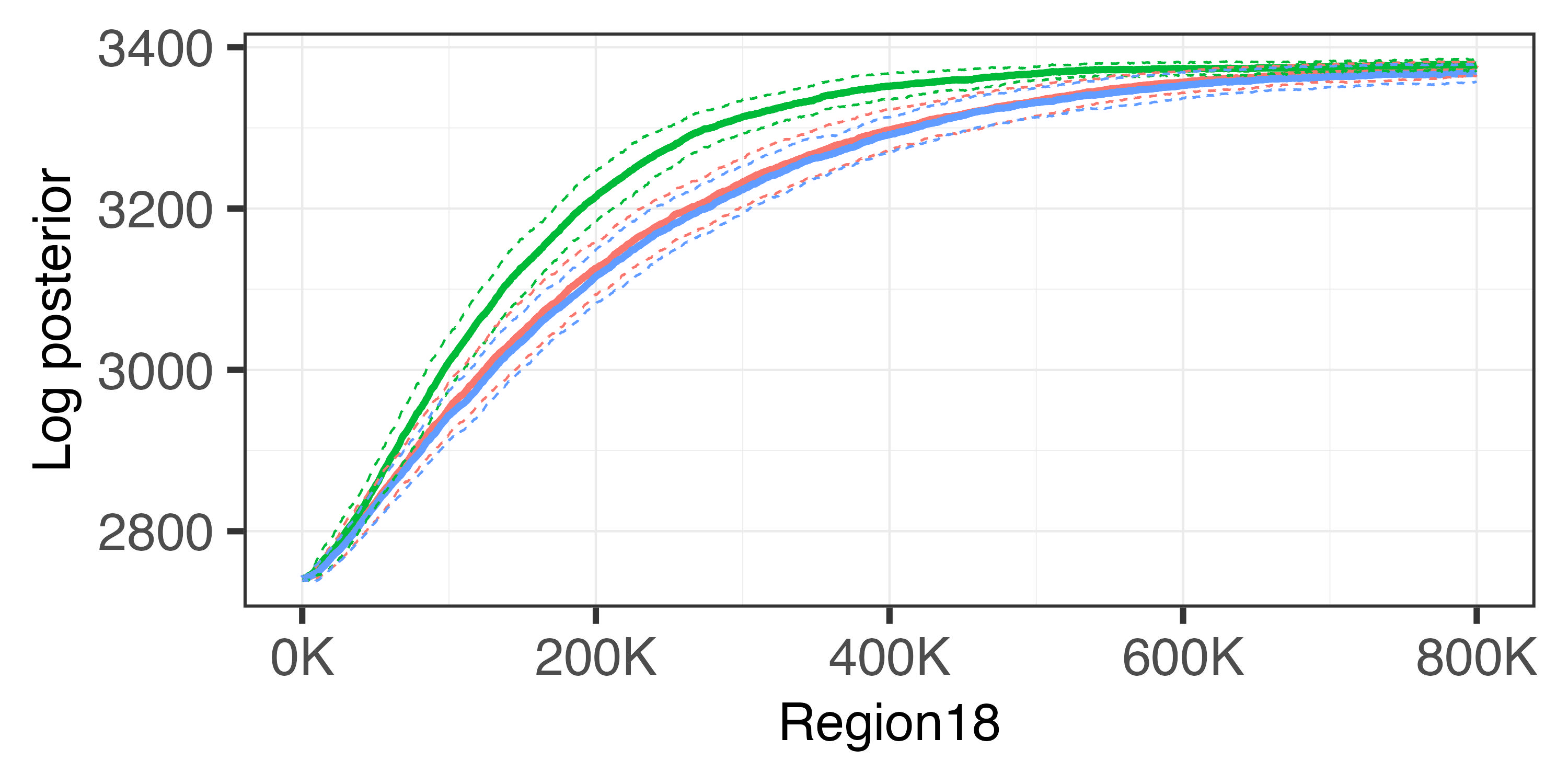} \\
    \includegraphics[width=0.48\linewidth]{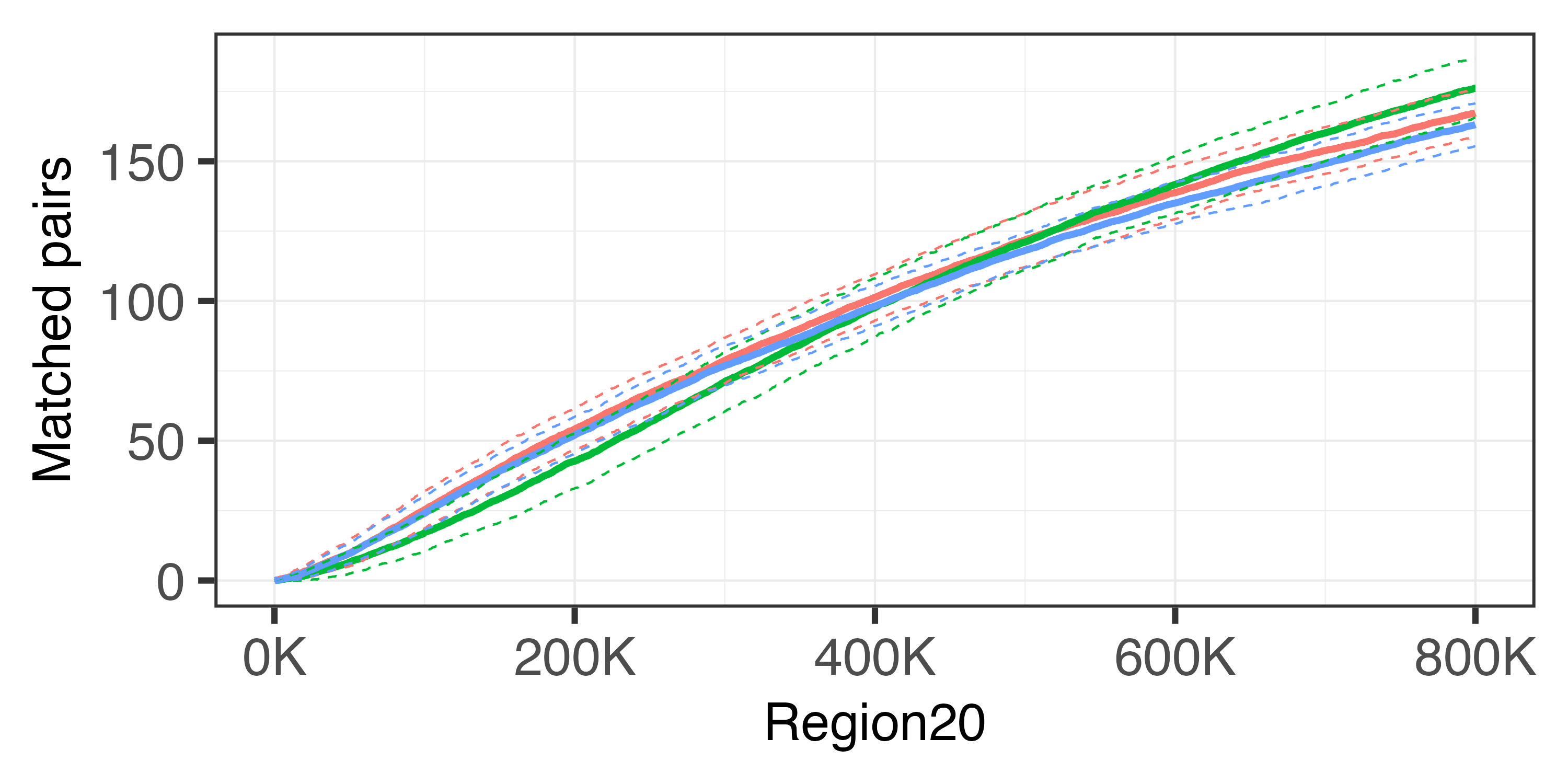}
    \includegraphics[width=0.48\linewidth]{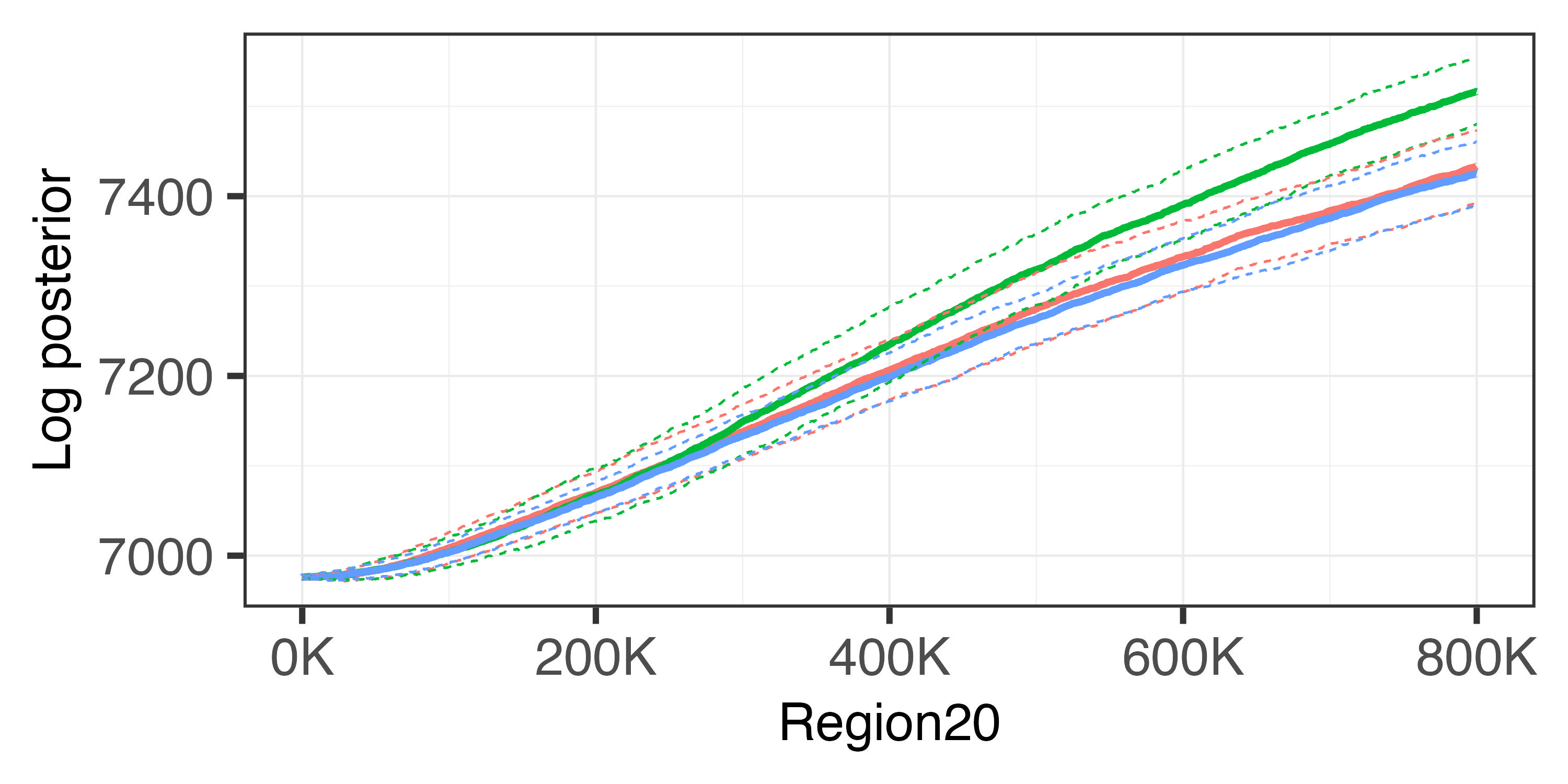} \\
    \includegraphics[width=0.48\linewidth]{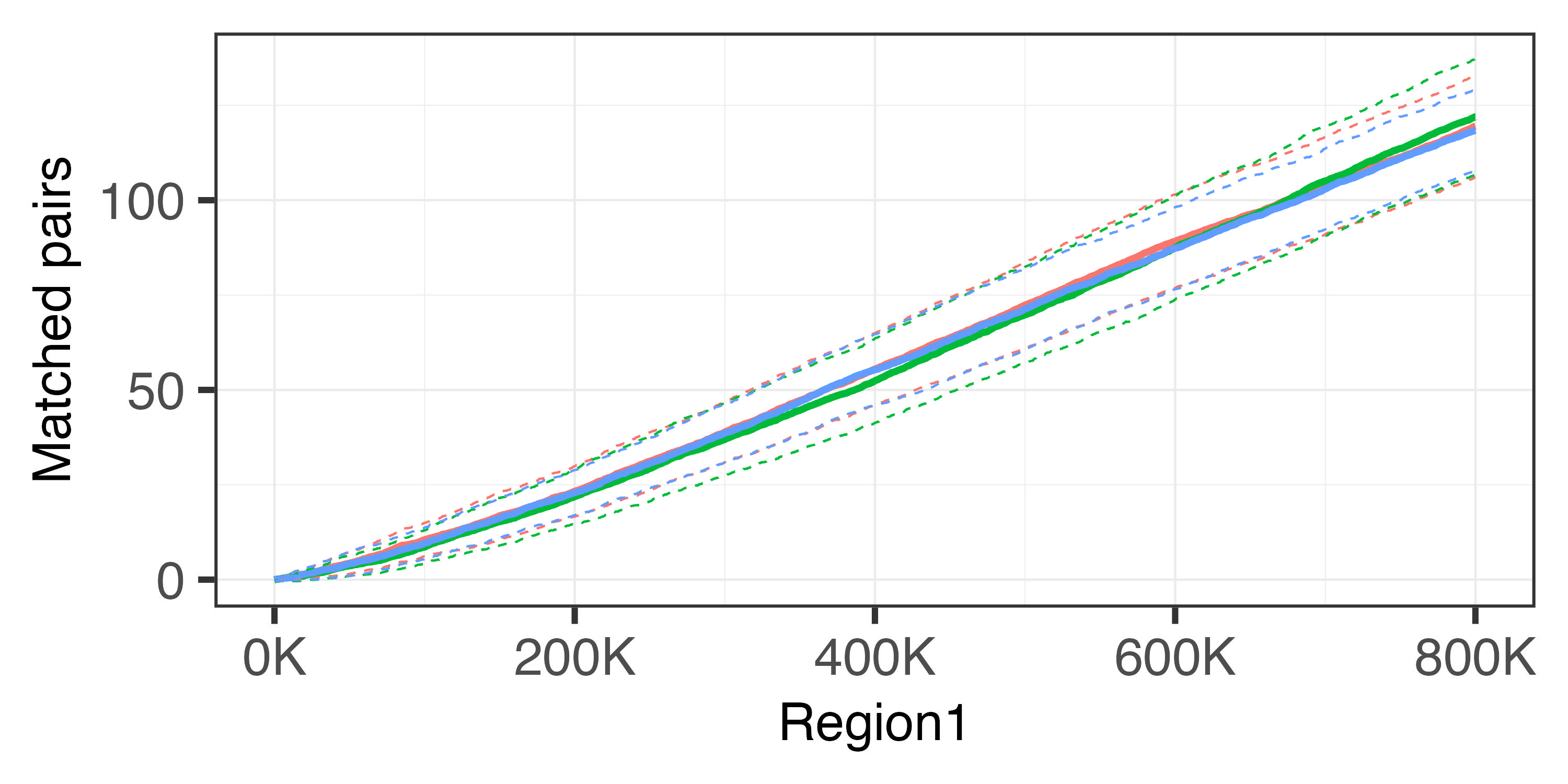}
    \includegraphics[width=0.48\linewidth]{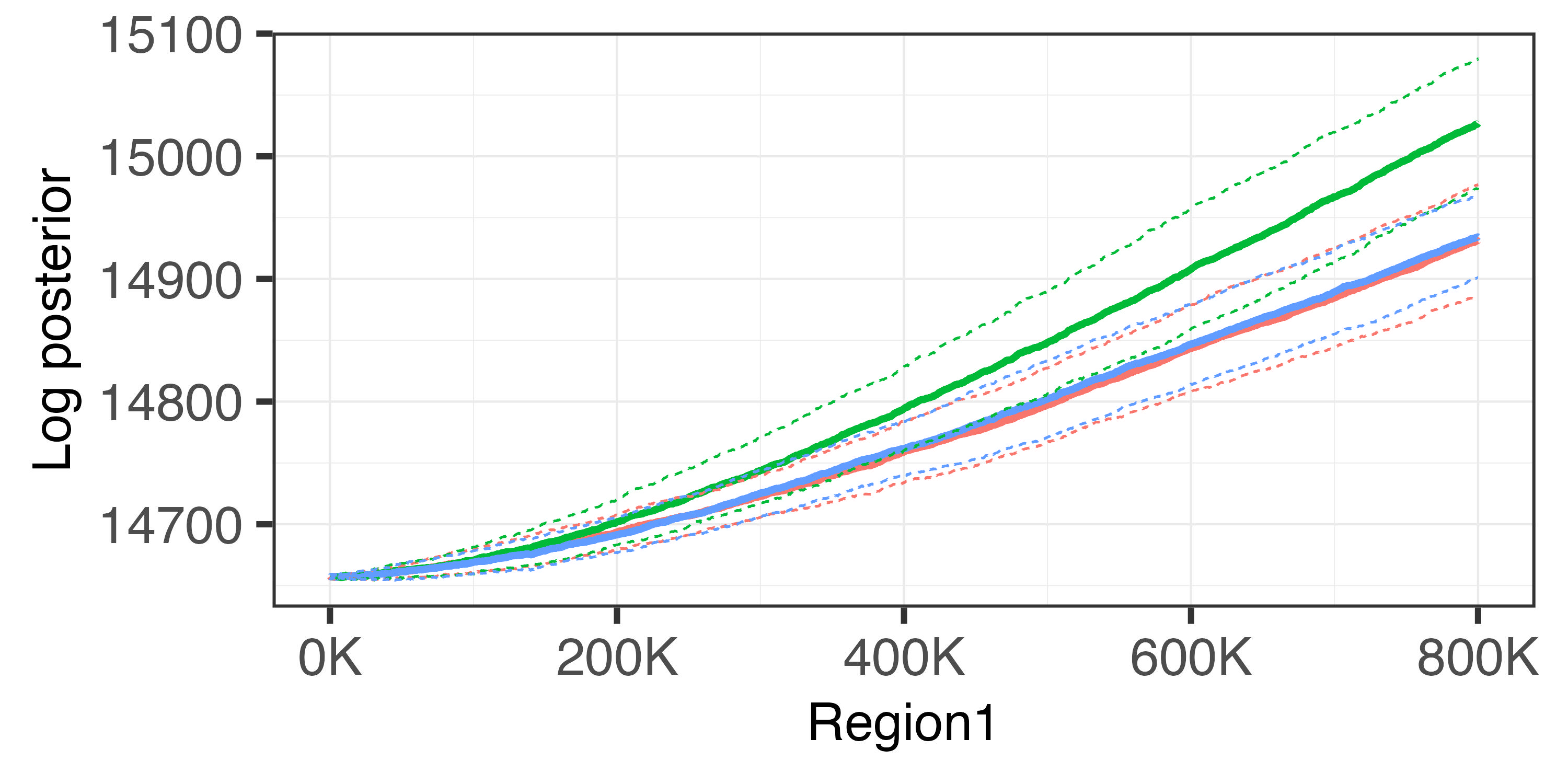} \\
    \caption{Results for the record linkage analysis of regions 1, 4, 18 and 20 in the \texttt{R} package \texttt{italy}. 
    The solid colored lines represent the average values obtained from 50 runs, and dashed lines indicate $\pm 1$ standard deviation. M--H: Metropolis-within-Gibbs sampler with uninformed MH updates; Informed M--H: Metropolis-within-Gibbs sampler with informed MH updates;  
    MT-IT: Algorithm~\ref{alg:mtm-iit}.}
    \label{fig:linkage} 
\end{figure}

\clearpage 
\newpage 

\section{Simulation Study for Section~\ref{sec:abc}}\label{app:pse}
As described in Section~\ref{sec:abc}, we study a toy example presented in Section 4.2 of~\citet{lee2014variance}, where $\pi$ is a geometric distribution with success probability $1 - ab$ for some $a, b\in (0, 1)$. For each $x \in \{1, 2, \dots, \}$,  assume that we have access to some estimator $\hat{\pi}(x)$, which can be written as $\hat{\pi}(x) = \tilde{U}_x (1-a)a^{x - 1} $, where $\tilde{U}_x$ is the mean of $K$ Bernoulli random variables with success probability $b^x$; see~\citet{lee2010utility} for how this example arises from approximate Bayesian computation. 

We use simulation to study the performance of  the pseudo-marginal  M--H algorithm and Algorithm~\ref{alg:pm-iit} with balancing function $h(r) = \sqrt{r}$. 
The neighborhood of each $x$ is simply defined to be $\cN_x = \{x-1, x+1\}$ if $x \geq 2$ and $\cN_x = \{2\}$ if $x = 1$. 
We set $a = 0.5$, $b = 0.4$ and $K = 100$, initialize both samplers at $x^{(0)} = 15$, and consider the estimation of the posterior mean $\sum_{x=1}^\infty x \,  \pi(x)$.  

The left panel of Figure~\ref{fig:pseudo.geom} shows the results when we discard first half of the samples as burn-in.  
We see that Algorithm~\ref{alg:pm-iit}  quickly produces a highly precise estimate, while the M--H scheme requires a huge number of iterations to become roughly unbiased but the variance stays large even after $10^5$ iterations.
If no samples are discarded (the right panel), Algorithm~\ref{alg:pm-iit} is still much more efficient, but both samplers are slightly biased. 

\begin{figure}[!h]
    \centering
    \includegraphics[width=0.48\linewidth]{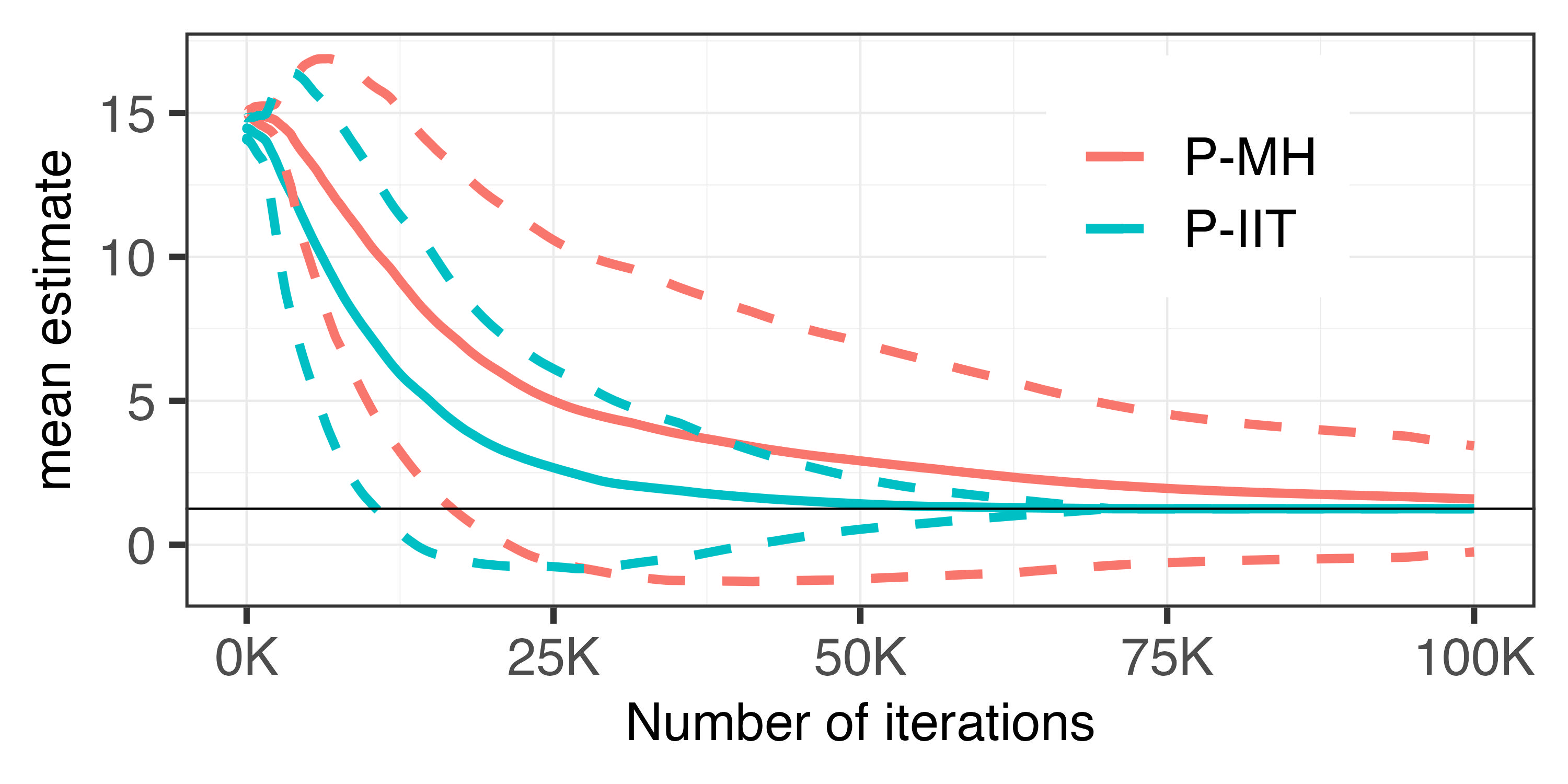}
    \includegraphics[width=0.48\linewidth]{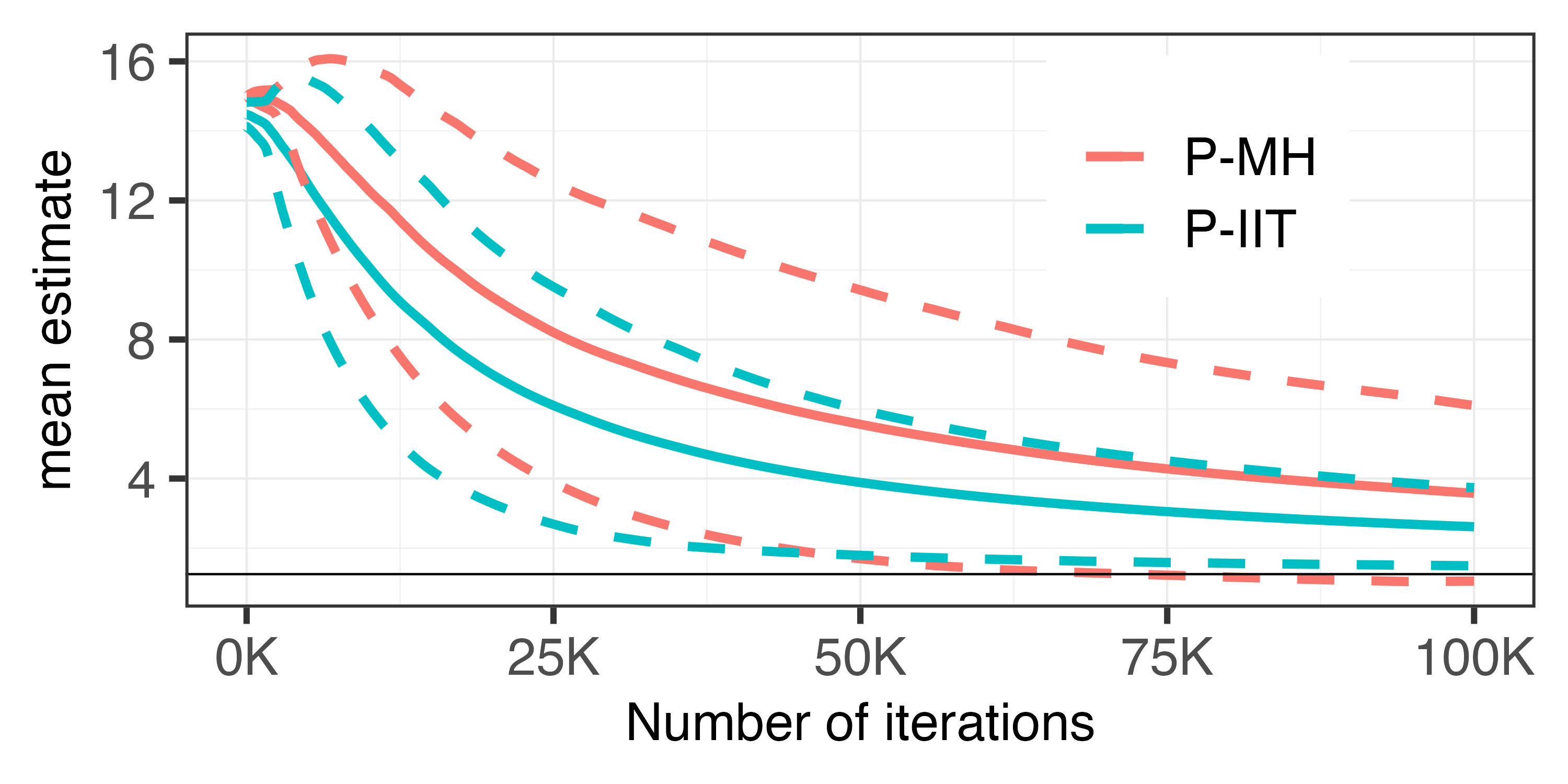}
    \caption{Estimate of the mean of $\pi$ for the geometric distribution example in Section~\ref{sec:abc}. The solid colored lines represent the MCMC estimates, and dashed lines indicate $\pm 1$ standard deviation calculated from 50 runs. The black line indicates the true value, $(1 - ab)^{-1} = 1.25$. 
    Left: half of samples discarded as burn-in when calculating the estimates; right: estimates calculated using all samples collected so far. 
    }   \label{fig:pseudo.geom} 
\end{figure}  

\smallskip
It should be noted that because the auxiliary variable $\tilde{U}_x$ can be zero, Algorithm~\ref{alg:pm-iit} is theoretically biased. We choose not to correct for this bias, since the bias is  negligible (clearly in the left panel of Figure~\ref{fig:pseudo.geom}) and we find that it can significantly help stabilize the algorithm's performance. A detailed explanation is given below.

\smallskip
\noindent \textbf{Why Algorithm~\ref{alg:pm-iit} is biased.} Denote the posterior estimate by $\hat{\pi}(x) = u \pi(x)$. As in Section~\ref{sec:pseudo.iit}, we consider the joint space $(x, u, u_-, u_+)$ such that  $\hat{\pi}(x)   = u \pi(x)$, $\hat{\pi}(x-1)= u_- \pi(x-1)$, $\hat{\pi}(x+1) = u_+ \pi(x+1)$. 
(If $x=1$, we only need to consider the state variable $(x, u, u_+)$.)
The  argument used in the proof of Lemma~\ref{lm:pm} shows that the stationary distribution of Algorithm~\ref{alg:pm-iit} is 
\begin{align*}
    \pi_{\pseu}(x, u, u_-, u_+) \propto  u \, \pi(x)  Z(x, u, u_-, u_+) g( u | x) g(u_- | x - 1) g(u_+ | x + 1), 
\end{align*}  
where $g(\cdot | x)$ denotes the distribution $(K b^x)^{-1} \mathrm{Binom}(K, b^x)$. 

The key difference from the proof of Lemma~\ref{lm:pm} is that now we have $Z(x, u, u_-, u_+) = 0$ if $u > 0$ but $u_- = u_+ = 0$.\footnote{
Assume that $h$ satisfies $h(0) = 0$, which is true for all balancing functions considered in this work. 
} (If $x=1$, then $Z(x, u, u_+) = 0$ whenever $u_+ = 0$.)
Consequently, if we still define the target distribution $\pi(x, u, u_-, u_+)$ as in Lemma 2, that is, 
\begin{align*}
    \pi(x, u, u_-, u_+) \propto  u \, \pi(x)   g( u | x) g(u_- | x - 1) g(u_+ | x + 1), 
\end{align*}
then $\pi_{\pseu}$ will have a strictly smaller support than $\pi$ and thus the importance weights of Algorithm~\ref{alg:pm-iit} are biased. 

\smallskip
\noindent  \textbf{Unbiased importance weights.} We can define our target distribution by 
\begin{align*}
\pi(x, u, u_-, u_+) = \left\{ \begin{array}{cc}
  u \, \pi(x) g(u | x) g(u_- | x - 1) g(u_+ | x + 1) B(x)^{-1},  &  \text{ if } u_- \neq 0  \text{ or } u_+ \neq 0, \\
  0,   &  \text{ if } u_- = u_+ = 0, 
\end{array} 
\right.
\end{align*}
where $B(x)$ is the probability that $u_-, u_+$ are not both equal to zero, when $u_- \sim g(\cdot | x - 1)$ and $u_+ \sim g(\cdot | x + 1)$. (If $x=1$, $B(x)$ is the probability that $u_+ > 0$.) 
One can calculate that 
\begin{align*}
    B(x) = \left\{\begin{array}{cc}
        1 - (1 - b^{x+1})^K, &  \text{ if } x = 1, \\
        1 - (1 - b^{x-1})^K (1 - b^{x+1})^K,   & \text{ if } x \geq 2 
    \end{array}
    \right. 
\end{align*}
Clearly, $\pi(x, u, u_-, u_+)$ still has $\pi(x)$ as the marginal. 
The corrected importance weight is  given by $[ Z(x, u, u_-, u_+) B(x) ]^{-1}$. 

\smallskip
\noindent  \textbf{Should we correct for the bias?} The above discussion reveals that the importance weight of Algorithm~\ref{alg:pm-iit} needs to be multiplied by a correction factor $1/B(x)$. However, observe that $1/B(x) \rightarrow \infty$ as $x \rightarrow \infty$, which can make the importance sampling estimates of P-IIT unstable. 
Indeed, we find in our simulation study that when the factor $1/B(x)$ is taken into account, the performance of P-IIT becomes quite similar to that of P-MH. 
If the factor $1/B(x)$ is not included, which means that we still use slightly biased importance weight estimates of Algorithm~\ref{alg:pm-iit}, we find that the performance of P-IIT is significantly improved, as shown in Figure~\ref{fig:pseudo.geom} in the main text. 
Intuitively, if the P-MH sampler is at some large $x$ and $\tilde{U}_x$ happens to be large as well, the chain can get stuck at $x$ for a huge number of iterations since it is very difficult to get non-zero posterior estimates for the neighboring states $x - 1$ and $x + 1$ (recall that the success probability for the binomial distribution of $\tilde{U}_x$ decreases to $0$ exponentially fast in $x$). 
This causes P-MH to have unbounded asymptotic variance~\citep{lee2014variance}. 
For P-IIT, we can force the chain to keep moving but need to estimate the holding time  which is also unbounded.  
When we do not include $1/B(x)$ in the importance weight calculation, we are essentially using a biased but much more stable estimate for the holding time, and our simulation study confirms that this leads to much better MCMC estimates in practice.

\clearpage 
\newpage 

\section{Simulation Studies for Variable Selection Targets} \label{supp:var-sel}
\subsection{Three Toy Examples} \label{app:toy-varsel}
We consider three toy examples where $\cX = \{0, 1\}^p$ and $\pi$ can be expressed in closed form. 
We define the neighborhood of $x$ as $\cN_x = \{y \colon \norm{x - y}_1 = 1\}$, where $\norm{\cdot}_1$ denotes the $\ell^1$-norm. 
We say $x \in \cX$ is a local mode if $\pi(x) > \pi(y)$ for every $y \in \cN_x$, and $\pi$ is unimodal if there is only one local mode. Despite being idealized and contrived, the three examples illustrate typical scenarios in variable selection: $\pi$ is unimodal with independent coordinates in Example~\ref{ex:vs1}, unimodal with dependent coordinates in Example~\ref{ex:vs2}, and bimodal in Example~\ref{ex:vs3}. 

For each example, we introduce a discrete-valued function $F(x)$ and measure the convergence rates of MCMC samplers using $d(\pi, \hat{\pi}_T; F) \in [0, 2]$, which is defined by
\begin{equation}\label{eq:toy.metric}
    d(\pi, \hat{\pi}_T; F) = \sum_{k} \left| \pi( F^{-1}(k) ) - \hat{\pi}_T (F^{-1}(k)) \right|, \quad F^{-1}(k) = \{x \colon F(x) = k\},
\end{equation}
where the summation is over all possible values of $F$ and $\hat{\pi}_T$ is the (importance-weighted) empirical distribution of $T$ MCMC samples. Thus, $d(\pi, \hat{\pi}_T; F)$ is the total variation distance between the push-forward measures $\pi \circ F^{-1}$ and $\hat{\pi}_T \circ F^{-1}$. The closed-form expression of $\pi$ enables us to exactly calculate $d(\pi, \hat{\pi}_T; F)$. We do not consider the total variation distance between $\pi$ and $\hat{\pi}_T$ since summation over $\cX$ is not computationally feasible. 

\begin{example}\label{ex:vs1}
Let $p_1 \in \{0, 1,  \dots, p\}$ and $x^* \in \bbR^p$ be given by $x^*_i = 1$ if $i \leq p_1$ and $x^*_i = 0$ if $i > p_1$. For $\theta > 0$, define the target distribution by
$$\pi(x) = \frac{\exp(-\theta \norm{x - x^*}_1)}{\normc_\theta}, \quad \text{where} \ \normc_\theta = (1 + e^{-\theta})^p.$$ 
Clearly, $\pi$ is unimodal with mode $x^*$, and $\pi$ has independent coordinates. This example represents an ideal variable selection problem where $x^*$ is the true model, all covariates in $x^*$ have equally strong effects, and the design matrix has orthogonal columns. The parameter $\theta$ controls the tail decay rate and indicates the signal strength.
\end{example}

\begin{example}\label{ex:vs2} 
\rm For $\theta > 0$, define $\pi(x) = \exp(- \theta \ell(x)) / \normc_\theta$ where
\begin{equation}\label{eq:pi.vs2}
\begin{aligned}
     \ell(x) = \left\{
     \begin{array}{cc}
     \norm{x}_1 - 1, & \text{if } x_1 = 1, \\ 
     2p - \norm{x}_1, & \text{if } x_1 = 0,
     \end{array}
     \right. \quad \text{and} \quad \normc_\theta = \left(1 + e^{-\theta(p + 1)}\right)(1 + e^{-\theta})^{p - 1}.
\end{aligned}
\end{equation}
It is not difficult to see that $\pi$ is unimodal with mode $x^* = (1, 0, 0, \dots, 0)$, but $\pi$ has dependent coordinates. In variable selection, such a posterior distribution arises when only the first covariate has a nonzero effect but is correlated with all the other covariates.
\end{example}

\begin{example}\label{ex:vs3} 
\rm Let $p_1 \in \{1, 2, \dots, p\}$. Let $x_{(1)}^*$ and $x_{(2)}^* \in \cX$ be given by $x_{(1)}^* = (1, 0, 1, \dots, 1, 0, \dots, 0)$ and $x_{(2)}^* = (0, 1, 1, \dots, 1, 0, \dots, 0)$ such that $\norm{x_{(1)}^*}_1 = \norm{x_{(2)}^*}_1 = p_1$. For $\theta > 0$, define
\begin{equation}\label{eq:pi.toy3}
    \pi(x) = \frac{\exp(-\theta \norm{x - x_{(1)}^*}_1) + \exp(-\theta \norm{x - x_{(2)}^*}_1)}{\normc_\theta}, \quad \text{where} \ \normc_\theta = 2 (1 + e^{-\theta})^p.
\end{equation}
Clearly, $\pi$ is bimodal with local modes $x_{(1)}^*$ and $x_{(2)}^*$. This corresponds to a variable selection problem where the first two covariates are highly correlated.
\end{example}

\subsubsection{Simulation Results for Example~\ref{ex:vs1}}  
We fix $p = 500$, set $F(x) = \norm{x - x^*}_1$, and use simulation to find the number of evaluations of $\pi$ needed to achieve $d(\pi, \hat{\pi}_T; F) \leq 0.1$. We consider six MCMC samplers: uninformed M--H, Algorithms~\ref{alg:iit}, \ref{alg:mh-iit}, and \ref{alg:rn-iit}, locally balanced MTM of \citet{changrapidly, gagnon2022improving}, and weighted TGS of \citet{zanella2019scalable}. All samplers are initialized at $x^{(0)} = (0, \dots, 0)$. Details of their implementation are given below. 

\begin{figure}
    \centering
    \includegraphics[width=0.75\linewidth]{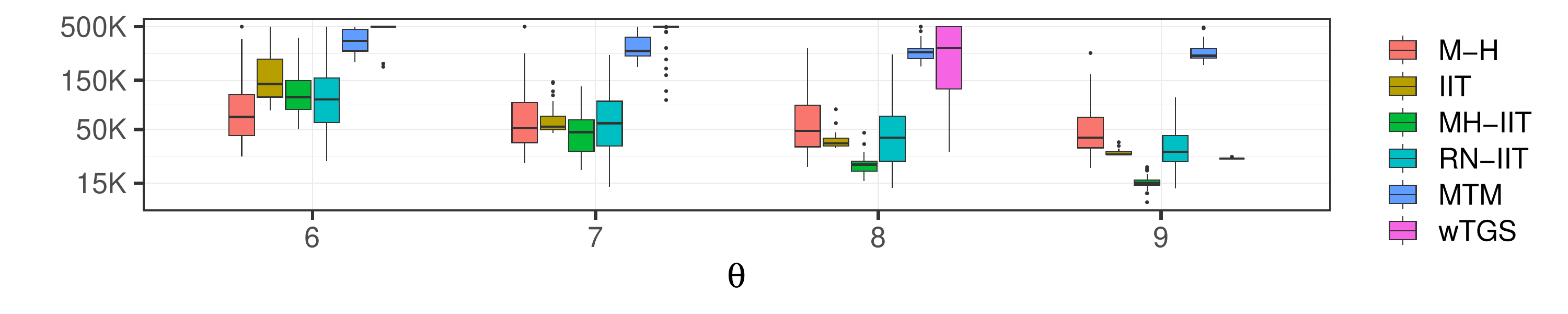}\\
    \includegraphics[width=0.75\linewidth]{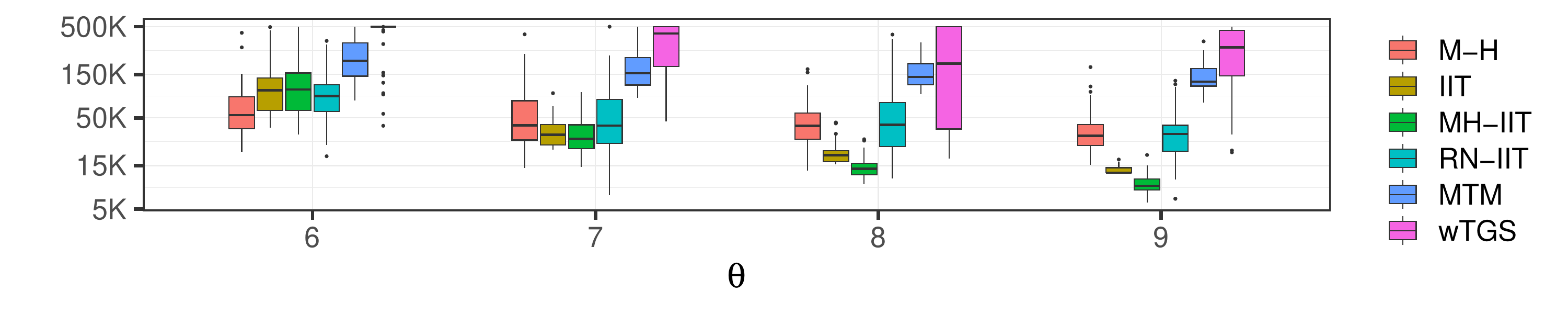}
\caption{Box plot for the number of evaluations of $\pi$ needed to accurately approximate $\pi$ in Example~\ref{ex:vs1} with $p=500$.  Top panel: $p_1=50$; bottom panel: $p_1=20$.   For each $\theta$, we repeat the simulation $50$ times. 
M--H: uninformed M--H; IIT: Algorithm~\ref{alg:iit}; MH-IIT: Algorithm~\ref{alg:mh-iit}; RN-IIT Algorithm~\ref{alg:rn-iit};
MTM: the method of~\citet{changrapidly}; wTGS: the method of~\citet{zanella2019scalable}. 
All samplers are forced to stop after $500$ thousand evaluations of $\pi$. 
} 
\label{fig:vs1}
\end{figure}
 
M--H denotes the uninformed M--H algorithm that proposes the next state by uniformly sampling from $\cN_x$. IIT denotes   Algorithm~\ref{alg:iit}, with balancing function $h(r) = \sqrt{r}$. 
MH-IIT is Algorithm~\ref{alg:mh-iit} with $\rho(x) \equiv 0.025$ and $h(r) = 1 \wedge r$. 
RN-IIT is Algorithm~\ref{alg:rn-iit} with $m = 100$ and  $h(r) = \sqrt{r}$.  
MTM is the locally balanced multiple-try Metropolis algorithm of \citet{changrapidly} with balancing function $h(r) = \sqrt{r}$ and number of tries $= 100$; the number of tries plays the same role as the parameter $m$ in RN-IIT. 
Finally, wTGS denotes the weighted tempered Gibbs sampler of \citet{zanella2019scalable} with the default parameter setting in the authors' code.

Figure~\ref{fig:vs1} shows the results when $p_1 = 50$ or $p_1 = 20$.  It can be seen from Figure~\ref{fig:vs1} that when $\theta$ is small (i.e., $\pi$ is relatively flat), uninformed M--H is the best, as it is efficient at exploring the whole space. When $\theta$ is large, IIT, MH-IIT, and RN-IIT outperform uninformed M--H due to the importance weighting. 
In particular, MH-IIT has the best performance among all methods. RN-IIT seems less efficient than IIT and MH-IIT in this scenario, but it is much better than the locally balanced MTM method, an M--H algorithm with very similar dynamics. The wTGS sampler only performs well when $\theta$ is large, which can be partially explained by the theory developed in \citet{zhou2022rapid}. 

We have also tried the Hamming ball sampler of \citet{titsias2017hamming} and the LIT-MH sampler of \citet{zhou2021dimension}, but both methods performed poorly according to our metric due to the high per-iteration computational cost (these algorithms could be beneficial if parallel computing is available).

\smallskip
\noindent \textbf{On Rao-Blackwellization.} In the variable selection literature, Rao-Blackwellization is often used to improve the posterior estimates~\citep{guan2011bayesian, zhou2019fast, zhou2021dimension, griffin2021search}, and it can be implemented for IIT and wTGS at almost no extra computational cost. 
If the posterior is unimodal with independent coordinates, Rao-Blackwellization can drastically improve the sampler's performance. 
But for the purpose of fair comparison, no Rao-Blackwellization procedure is performed for any method in our simulation studies on the three toy examples. 
Another reason is that, according to our experiments, there is no significant gain if the posterior distribution has strong dependence among the coordinates.

\smallskip
\noindent  \textbf{Remarks on wTGS.} \citet{zanella2019scalable} proposed two importance tempering methods for variable selection, TGS (tempered Gibbs sampler) and its weighted version, wTGS. 
As shown in~\citet{zhou2022rapid}, TGS is essentially an IIT sampler with balancing function $h(r) = 1 + r$. It was proved in~\cite{zhou2022rapid} that this choice of balancing function is overly aggressive and, consequently, TGS can be highly efficient when the target is unimodal with very light tails, but can  underperform uninformed MH in other cases. 
wTGS uses the following proposal weighting scheme. Given $x \in \cX$ and $y \in \cN_x$, if $\norm{y}_1 = \norm{x}_1 - 1$, the un-normalized proposal weight of $y$ is set to $1$; 
if $\norm{y}_1 = \norm{x}_1 + 1$, the un-normalized proposal weight of $y$ is set to $\pi(y) / \pi(x)$. 
This modification can provide significant improvement over TGS in some scenarios, but for Example~\ref{ex:vs1}, wTGS seems to suffer from the same issue as TGS: its informed proposal scheme is too aggressive compared to, for example, a locally balanced proposal with $h(r) = \sqrt{r}$.

\subsubsection{Simulation Results for Example~\ref{ex:vs2}}  
For our simulation study, we use $p = 500$ and define $F(x) = (\norm{x}_1 - 1) \ind_{\{x_1 = 1\}} + p \ind_{\{x_1 = 0\}}$. All samplers are initialized at the state $x^{(0)}$ such that $x^{(0)}_i = 1$ if and only if $i > p - 10$. The settings of MH, RN-IIT, MTM, and wTGS are the same as in Example~\ref{ex:vs1}. 
Unlike in Example~\ref{ex:vs1}, we consider two MH-IIT schemes here: MH-IIT-1 uses $h(r) = 1 \wedge r$ (as in Example~\ref{ex:vs1}), and MH-IIT-2 uses the following balancing function with $c = 2 \theta$:
\begin{equation}\label{eq:def.hc.supp}    
h_c(r) = (1 \wedge r e^{-c} ) \vee (r \wedge e^{-c} ),  \quad c \geq 0. 
\end{equation}     
Both MH-IIT-1 and MH-IIT-2 still use  $\rho(x) \equiv 0.025$ 
MH-IIT-1 is much more conservative than MH-IIT-2: if the current state $x$ has $x_1 = 0$, $h(r) = 1 \wedge r$ assigns the same proposal weight to any $y \in \cN_x$ such that $\norm{y}_1 = \norm{x}_1 + 1$, while $h_c$ favors flipping $x_1$ since $h_c(r) = 1 \wedge r e^{-c}$ for any $r \geq 1$. See Remark~\ref{rmk:hc} below for a more detailed discussion. 
 
Figure~\ref{fig:vs2} shows the number of posterior calls needed to achieve $d(\pi, \hat{\pi}_T; F) \leq 0.2$. As in Example~\ref{ex:vs1}, uninformed MH performs well only when $\theta$ is small. For large $\theta$, $\pi$ concentrates at the mode $x^*$, so the convergence of IIT schemes largely depends on when the first covariate is selected, making MH-IIT-2 much more efficient than MH-IIT-1. It should be emphasized that balancing functions such as $h_c$ are not desirable in MH schemes, because the acceptance probability for any neighboring state with a smaller posterior (than the current one) is extremely low, causing the sampler to remain at a local mode for numerous iterations. However, this limitation is lifted in MH-IIT, and the optimal balancing function is no longer $h(r) = 1 \wedge r$. We acknowledge that the use of $c = 2\theta$ is unrealistic as it requires knowledge about $\pi$, but we have observed that using any $c > \theta$ provides gains over MH-IIT-1. Another interesting observation is that RN-IIT performs better than MH-IIT-1 in Figure~\ref{fig:vs2}, which is different from Example~\ref{ex:vs1}.

\begin{figure}
    \centering
    \includegraphics[width=0.98\linewidth]{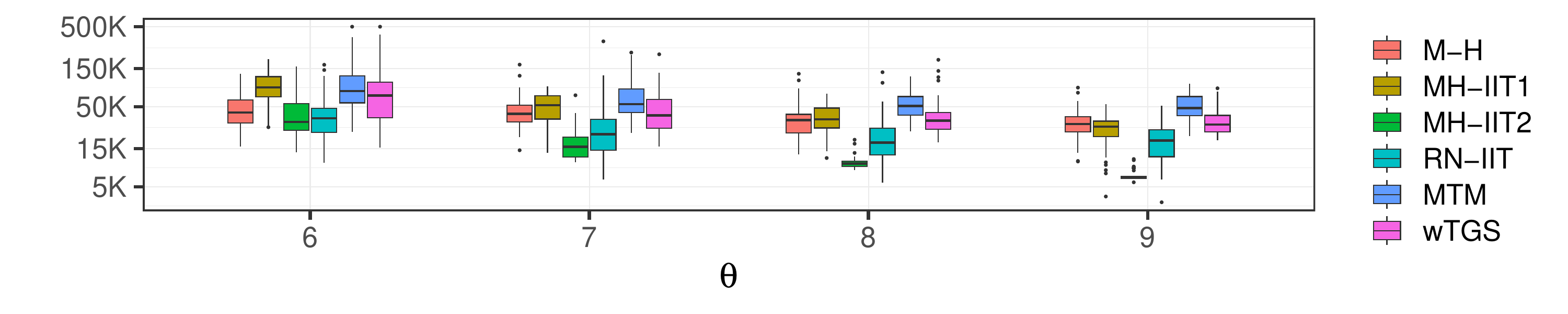}
    \caption{ 
    Box plot for the number of evaluations of $\pi$ (truncated at 500K) needed to accurately approximate $\pi$ in Example~\ref{ex:vs2} with $p=500$.  Each box represents $50$ runs.  MH-IIT-1: Algorithm~\ref{alg:mh-iit} with $h(r)  = 1 \wedge r$;  MH-IIT-2: Algorithm~\ref{alg:mh-iit} with $h_c$ in~\eqref{eq:def.hc}.  The other algorithms are the same as in Figure~\ref{fig:vs1}.  
    }\label{fig:vs2}
\end{figure}

\begin{remark}\label{rmk:hc}
\rm Equivalently, we can express $h_c$ by
\begin{align*}
h_c(r) = \left\{
\begin{array}{cc}
1, & \text{if } r \geq e^c, \\
r e^{-c}, & \text{if } 1 \leq r < e^c, \\
e^{-c}, & \text{if } e^{-c} < r < 1, \\
r, & \text{if } r \leq e^{-c}.
\end{array}
\right.
\end{align*}
It is straightforward to check that $h_c$ is indeed a balancing function for any $c \geq 0$. When $c = 0$, we get $h_c(r) = 1 \wedge r$. As $c \rightarrow \infty$, we get $e^c h_c(r) = 1 \vee r$. Hence, a larger $c$ implies more aggressive behavior of the corresponding informed proposal.

According to the definition of $\pi$ in Example~\ref{ex:vs2}, for any $x \in \cX$ and $y \in \cN_x$, we have $|\log \frac{\pi(y)}{\pi(x)}| \geq \theta$. Hence, for this particular example, using $c = 0$, $c = \theta$, or any other value in $(0, \theta)$ does not make a difference. In particular, if $x_1 = 0$, according to the definition of $\pi$, we have
\begin{align*}
\frac{\pi(y)}{\pi(x)} = \left\{
\begin{array}{cc}
\exp(\theta), & \text{if } y \in \cN_x, \norm{y}_1 = \norm{x}_1 + 1, y_1 = 0, \\
\exp(2\theta (p - \norm{x}_1)), & \text{if } y \in \cN_x, y_1 = 1.
\end{array}
\right.
\end{align*}
Hence, using $h_c$ with $c > \theta$ encourages the sampler to add the first covariate to $x$ in this case. Of course, in practice, we usually do not have much knowledge about $\pi$ for tuning the parameter $c$. The primary purpose of this simulation study is to illustrate that for MH-IIT schemes, it is often desirable to use a balancing function more aggressive than the default one, $h(r) = 1 \wedge r$.
\end{remark}

\subsubsection{Simulation Results for Example~\ref{ex:vs3}}  
We set $p = 200$ in our simulation study and let $F$ be the vector-valued function given by $F(x) = (\norm{x - x^*_{(1)}}_1, \norm{x - x^*_{(2)}}_1)$. We consider six MCMC samplers investigated in Example~\ref{ex:vs1}, initialized at $x^{(0)} = (0, \dots, 0)$. Regarding the parameter settings, the only difference is that we now use $m = 40$ for RN-IIT, since $p = 200$ in this example while $p = 500$ in Example~\ref{ex:vs1}. 

Figure~\ref{fig:vs3} shows the number of evaluations of $\pi$ needed to achieve $d(\pi, \hat{\pi}_T; F) \leq 0.5$. The performance of uninformed MH quickly deteriorates as $\theta$ increases, while the performance of IIT methods remains relatively stable. We note that, unlike in the other two examples, wTGS has very good performance in this bimodal setting. Indeed,   when $p_1 = 20$, wTGS has the best performance once $\theta$ becomes sufficiently large. 
We provide an intuitive explanation here. Observe that a sampler can move from $x_{(1)}^*$ to $x_{(2)}^*$ by either removing the first covariate from $x_{(1)}^*$ or adding the second covariate to $x_{(1)}^*$. When $p_1 \ll p$, it is much easier for the sampler to remove the right covariate than to add the right one. Compared to IIT methods, the informed proposal used by wTGS favors removing covariates from $x_{(1)}^*$, which makes wTGS able to move between the two local modes more easily than IIT methods.

\begin{figure}
    \centering
    \includegraphics[width=0.98\linewidth]{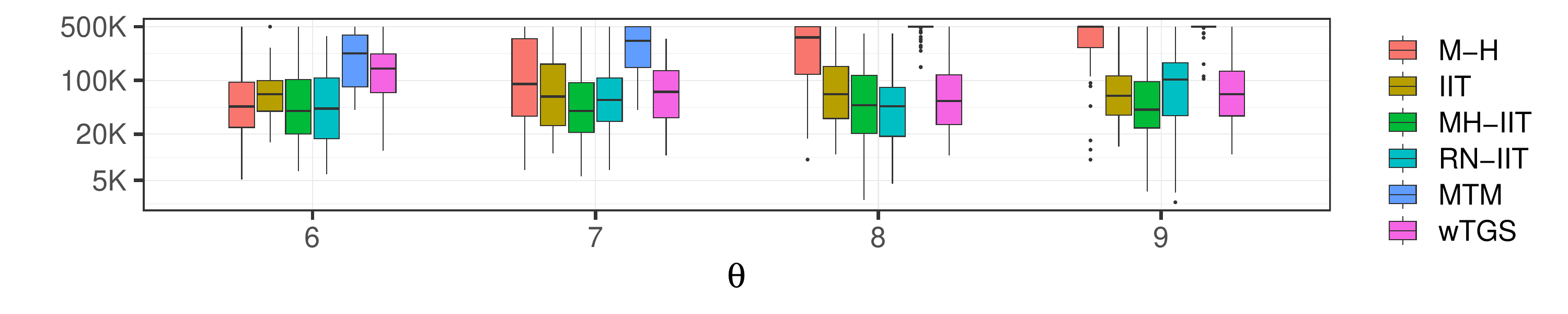} 
    \includegraphics[width=0.98\linewidth]{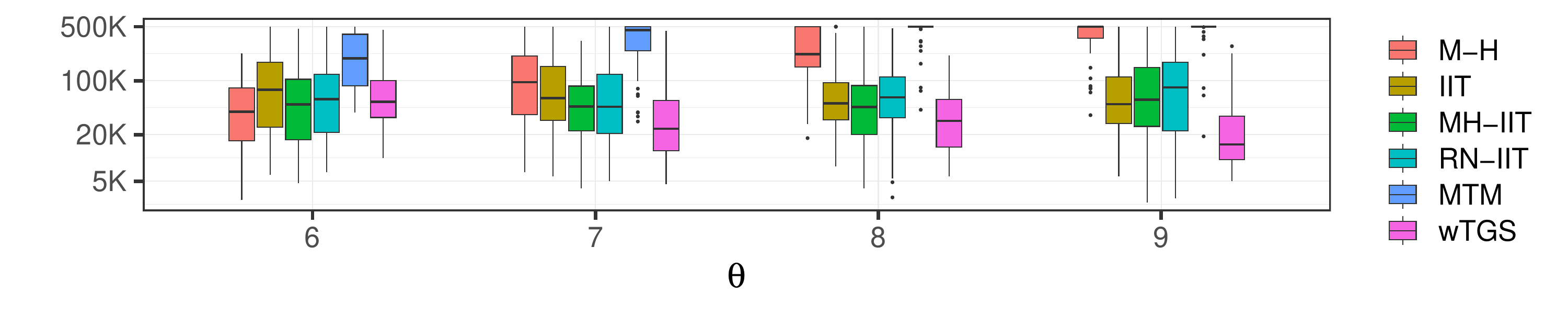}
    \caption{
    Box plot for the number of  evaluations of $\pi$ (truncated at 500K) needed to accurately approximate $\pi$ in Example~\ref{ex:vs3} with $p=200$. Top panel: $p_1 = 50$; bottom panel: $p_1 = 20$. Each box represents $50$ runs.  
    MH: uninformed MH; IIT: Algorithm~\ref{alg:iit}; MH-IIT: Algorithm~\ref{alg:mh-iit}; RN-IIT Algorithm~\ref{alg:rn-iit};
    MTM: the method of~\citet{changrapidly}; wTGS: the method of~\citet{zanella2019scalable}. 
    }    \label{fig:vs3} 
\end{figure}

\subsection{Additional Details for the Variable Selection Simulation} \label{supp:var-sel-real}
Here we provide more details for the simulation study visualized in Fig. 2 in the main text. 
This study considers the following nine MCMC samplers. M--H, IIT, RN-IIT, MTM, and wTGS are the same as in Example~\ref{ex:vs1}. In particular, the parameter $m$ in RN-IIT and the number of tries in MTM are set to $100$, and for IIT, RN-IIT, and MTM, the balancing function is $h(r) = \sqrt{r}$. For the two MH-IIT methods, we fix $\rho = 0.0025$. For MH-IIT-1, we use the conservative balancing function $h(r) = \min\{1, r\}$, and for MH-IIT-2, we use the balancing function $h$ defined in \eqref{eq:def.hc.supp} with $c = 8$. For HBS, we utilize the code provided in \citet{zanella2019scalable}, which considers a Hamming ball with radius 1 centered at the current model in each iteration. For LIT-MH, we use the parameter setting of the LIT-MH-2 algorithm presented in \citet{zhou2021dimension}. 

\newpage
\bibliographystyle{plainnat}
\bibliography{references}
\end{document}